\def \d { {\,\mbox{d}}}
\def \dP { {\,\mbox{dP}}}
\def \dPa { {\,\rm{dP}^a}}
\def \dPsb { {\,\mbox{dP}_{sb}}}
\def \Psb { {\,\mbox{P}_{sb}}}
\def \dPh { {\,\mbox{dP}^h}}
\def \dPheu { {\,\mbox{dP}_{heu,q_v}}}
\def \dPq { {\,\mbox{dP}_q}}
\def \dPr {\mbox{dP}}
\def \dPstar { {\,\mbox{dP}^\ast}}
\def \dPtilde { \,\mbox{d}\tilde{\mbox{P}}}
\def \dPs { {\,\mbox{dP}_{s}}}
\def \Ps { {\rm{P}_{s}}}
\def \Ptilde { {\tilde {\rm{P}}}}
\def \ds { \,\mbox{d}s}
\def \dt { \,\mbox{d}t}
\def \dv { \,\mbox{d}v}
\def \dx { \, \mbox{d}x}
\def \dz { \,\mbox{d}z}
\def \supp { {\mbox{supp}}}
\def \st {:\,}
\def \calU { \mathcal{U}}
\def \alphah{ {\alpha^h}}
\def \betaah { {\beta_{a,h}}}
\def \Ch { {C^h}}
\def \Cs { {C^{s}}}
\def \Cstar { {C^\ast}}
\def \Sh { {S^h}}
\def \Ds { {S^{s}}}
\def \Sstar { {S^\ast}}
\def \dead {\mathfrak{d}}
\def \Eh { {E^h}}
\def \Ehsigma {{E^h_\sigma}}
\def \err { {\varepsilon_h}}
\def \Esigma { {E_\sigma}}
\def \Esigmaa { {E_{\sigma_a}}}
\def \Esigmas { {E_{\sigma_s}}}
\def \calQ {\mathcal{Q}}
\def \calR {\mathcal{R}}
\def \calQh { {\mathcal{Q}^h}}
\def \eps { {\varepsilon}}
\def \gammaasb { {\gamma_{a,sb}}}
\def \gammaah { {\gamma_{a,h}}}
\def \gammaheusb { {\gamma_{heu,sb}}}
\def \gbar { {\bar g}}
\def \gbarh { {\bar g^h}}
\def \kappah{ {\kappa^h}}
\def \kCa { {k_{C^\ast}^a}}
\def \kCsb { {k_{C^\ast}^{sb}}}
\def \kCh { {k_{C^h}^h}}
\def \kCha { {k_{C^h}^a}}
\def \kCs { {k_{C^s}}}
\def \kCstar { {k_{C^\ast}^\ast}}
\def \kSa { {k_{S^\ast}^a}}
\def \kSh { {k_{S^h}^h}}
\def \kSsb { {k_{S^\ast}^{sb}}}
\def \kSha { {k_{S^h}^a}}
\def \kShstar { {k_{S^h}^\ast}}
\def \kSheu { {k_{S^\ast}^{heu}}}
\def \kDs { {k_{S^s}}}
\def \kSstar { {k_{S^\ast}^\ast}}
\def \kTa { {k_{T^\ast}^a}}
\def \Lstar { {L^\ast}}
\def \one { \mathbbm{1}}
\def \pCa { {p_{\Cstar}^a}}
\def \pCsb { {p_{\Cstar}^{sb}}}
\def \pCstar { {p_{\Cstar}^\ast}}
\def \pSa { {p_{\Sstar}^a}}
\def \Ph { {P^h}}
\def \pSsb { {p_{\Sstar}^{sb}}}
\def \pSstar { {p_{\Sstar}^\ast}}
\def \psiih { {\psi_i^h}}
\def \psiis { {\psi_i^{s}}}
\def \psiistar { {\psi_i^\ast}}
\def \psistar {\psi^\ast}
\def \psio { {\psi_o}}
\def \psioh { {\psi_o^h}}
\def \psios { {\psi_o^{s}}}
\def \psiostar { {\psi_o^\ast}}
\def \Pstar { {\rm{P}^\ast}}
\def \pnuN { {\p_\nu N}}
\def \Pr {{\mbox{P}}}
\def \Pa { {\rm{P}^a}}
\def \Ph { {\mbox{P}^h}}
\def \Pq { {\mbox{P}_q}}
\def \Pheu { {\,\mbox{P}_{heu,q_v}}}
\def \qheu { {q_{heu}}}
\def \ray { {r}}
\def \rGm { {\vert_\Gm}}
\def \rGp { {\vert_\Gp}}
\def \RNastar { {\left|\frac{\dPa}{\dPstar}\right|}}
\def \RNah { {\left|\frac{\dPa}{\dPh}\right|}}
\def \RNaheu { {\left|\frac{\dPa}{\dPheu}\right|}}
\def \RNheua { {\left|\frac{\dPheu}{\dPa}\right|}}
\def \RNha { {\left|\frac{\dPh}{\dPa}\right|}}
\def \RNheuh { {\left|\frac{\dPheu}{\dPh}\right|}}
\def \RNheusb { {\left|\frac{\dPheu}{\dPsb}\right|}}
\def \RNasb { {\left|\frac{\dPa}{\dPsb}\right|}}
\def \RNsbh { {\left|\frac{\dPsb}{\dPh}\right|}}
\def \RNsba { {\left|\frac{\dPsb}{\dPa}\right|}}
\def \RNatilde { {\left|\frac{\dPa}{\dPtilde}\right|}}
\def \RNaq { {\left|\frac{\dPa}{\dPq}\right|}}
\def \rZ { {\vert_Z}}
\def \sh {s^h}
\def \alphasb { {\alpha^{sb}}}
\def \alphah { {\alpha^h}}
\def \sstar {s^\ast}
\def \Thetah { {\Theta^h}}
\def \calU { {\mathcal{U}}}
\def \varphih { {\varphi^h}}
\def \xia { {\xi_a}}
\def \xisb { {\xi_{sb}}}
\def \xih { {\xi^h}}
\def \xiheu { {\xi_{heu}}}
\def \xiq { {\xi^q}}
\def \xistar { {\xi^\ast}}
\def \xitilde { {\tilde\xi}}
\newcommand{\deltaray}[2]{ {\delta_{\ray(#1)}(#2)} }  
\newcommand{\Exp}[1]{ \E\left\{ #1 \right\} } 
\newcommand{\Expa}[1]{ \E_a\left\{ #1 \right\} } 
\newcommand{\Exph}[1]{ \E_h\left\{ #1 \right\} } 
\newcommand{\Expsb}[1]{ \E_{sb}\left\{ #1 \right\} } 
\newcommand{\Expq}[1]{ \E_q\left\{ #1 \right\} } 
\newcommand{\ip}[2]{ \langle #1,\,#2 \rangle}   
\newcommand{\Var}[1]{ \mbox{Var}\left\{ #1 \right\} } 
\newcommand{\thetavol}[3]{ \theta(#1,#2\!\!\to\!\!#3)}
\newcommand{\thetavolh}[3]{ \theta^h(#1,#2\!\!\to\!\!#3)}
\newcommand{\Thetasurf}[3]{ \Theta(#1,#2\!\!\to\!\!#3)}
\newcommand{\Thetasurfh}[3]{ \Theta^h(#1,#2\!\!\to\!\!#3)}
\def \dint { \displaystyle\int}
\def \E { {\mathbb E}}
\def \g { \,|\,}
\def \p {\partial}
\def \dsphere { {\mathbb{S}^{d-1}}}
\def \Gm { {\Gamma_-}}
\def \Gp { {\Gamma_+}}
\def \Nat {{\mathbb N}}
\def \pX { {\p X}}
\def \Rd { {{\mathbb R}^d}}
\def \Rone {{\mathbb R}}
\def \Xbar { {\bar{X}}}
\def \Z { {\mathcal{Z}}}
\def \Zbar { {\bar{\Z}}}
\newtheorem{theorem}{Theorem}[section]
\newtheorem*{theorem*}{Theorem}
\newtheorem{proposition}{Proposition}[section]
\newtheorem{lemma}{Lemma}[section]
\newtheorem{corollary}[theorem]{Corollary}
\theoremstyle{definition}
\newtheorem*{def*}{Definition}
\newtheorem*{algorithmold*}{Algorithm}
\newtheorem{assumptions}{Assumptions}[section]
\theoremstyle{remark}
\newtheorem*{remark*}{Remark}
\newtheorem{remark}{Remark}[section]
\newtheorem*{claim*}{Claim}
\begin{document}
\title{Importance Sampling and Adjoint Hybrid Methods in Monte Carlo Transport with Reflecting Boundaries}
\author{Guillaume Bal\thanks{Department of Applied Physics and Applied Mathematics, Columbia University, 200 S.W. Mudd building, 500 W. 120th street, New York NY, 10027; 212-854-4731, gb2030@columbia.edu}, and Ian Langmore \thanks{Corresponding author.  Department of Applied and Applied Mathematics, Columbia University, 200 S.W. Mudd building, 500 W. 120th street, New York NY, 10027; 415-272-6321, ianlangmore@gmail.com }}
\maketitle
\abstract{Adjoint methods form a class of importance sampling methods
that are used to accelerate Monte Carlo (MC) simulations of transport
equations. Ideally, adjoint methods allow for zero-variance MC
estimators provided that the solution to an adjoint transport equation
is known. Hybrid methods aim at (i) approximately solving the adjoint
transport equation with a deterministic method; and (ii) use the
solution to construct an unbiased MC sampling algorithm with low
variance. The problem with this approach is that {\em both} steps can
be prohibitively expensive. In this paper, we simplify steps (i) and
(ii) by calculating only parts of the adjoint solution. More
specifically, in a geometry with limited volume scattering and
complicated reflection at the boundary, we consider the situation
where the adjoint solution ``neglects'' volume scattering, whereby
significantly reducing the degrees of freedom in steps (i) and
(ii). A main application for such a geometry is in remote sensing of the environment using physics-based signal models.
Volume scattering is then incorporated using an analog
sampling algorithm (or more precisely a simple modification of analog
sampling called a heuristic sampling algorithm) in order to obtain
unbiased estimators. In geometries with weak volume scattering (with a
domain of interest of size comparable to the transport mean free
path), we demonstrate numerically significant variance reductions and
speed-ups (figures of merit).}

{\bf Keywords:}Linear Transport; Monte Carlo; Hybrid Methods; Importance Sampling; Variance Reduction; Remote Sensing

\section{Introduction}
Forward and inverse linear transport models find applications in many
areas of science including medical imaging and optical tomography
\cite{Arridge99}, radiative transfer in the atmosphere and the ocean
\cite{chandra,Liou-AP-02,MD-SP-05}, neutron transport
\cite{Dav-OX-57,spanier}, as well as the propagation of seismic waves
in the earth crust \cite{sato-fehler}. In this paper, we focus on the
solution of the forward transport problem by the Monte Carlo method
with remote sensing (an inverse transport problem) of the atmosphere
as our main application. Light is emitted from the sun and propagates
in a complex environment involving absorption and scattering in the
atmosphere and at the Earth's surface before (a tiny fraction of) it
reaches a detector, typically mounted on a plane or a satellite.

The transport equation may be solved numerically in a variety of ways.
Monte Carlo (MC) simulations model the propagation of individual
photons along their path and are well adapted to the complicated
geometries encountered in remote sensing.  Photons scatter and are
absorbed with prescribed probability depending on the underlying
medium. The output from the simulation, e.g., the fraction of photons
that hit a detector, is the expected value of a well-chosen random
variable. These simulations are very easy to code, embarrassingly
parallel to run, and suffer no discretization error (in principle).
The drawback is that they can be very slow.  Monte Carlo methods
converge at a rate $=(Variance N^{-1/2})$ where $N$ is the number of
simulations, and the variance is that of each shot fired. In remote
sensing, the (relative) variance is high in large part because the
detector is typically small and thus most photons are not recorded by the
detector. In order to be effective, MC methods must be accelerated.

Most efforts to speed MC simulations focus on reducing the variance of each shot.  See \cite{spanier,Lux-Koblinger} or the review of more recent work (on neutron transport) in \cite{HagWag_ProNuclEn2003_Monte}.  See also \cite{Veach_Thesis1997} for a thorough introduction to the MC techniques in computer graphics. In problems with a small detector, this is achieved by directing photons toward that detector (and re-weighting to keep calculations unbiased).  When \emph{survival biasing} is used, photons have their weight decreased rather than being absorbed \cite{spanier, Lux-Koblinger}.  Often, one uses some heuristic (such as proximity to the detector), or some function to measure the ``importance'' of each region of phase space.  Splitting methods \cite{spanier,Lux-Koblinger}, upon identifying that a photon is in a region of high importance, split the photon into two or more photons.  The weight of each photon is then decreased.  Propagating many photons with a low weight is not desirable, therefore splitting is often accompanied by \emph{Russian roulette}.  Here, if a photon enters a region of low enough importance, then the photon is killed off with a certain probability (high chance of death if the weight is low).  Typically a \emph{weight window} is used to enforce regions of low/high importance.  \emph{Source biasing} techniques change the source distribution in order to more effectively reach the detector.  More generally, the absorption/scattering properties at any point can be modified, provided shots are re-weighted correctly.

It has long been recognized that the adjoint transport solution is a natural and optimal importance function \cite{spanier,Lux-Koblinger,TurnerLarsen_NuclSciEng1997_Automatic1,TurnerLarsen_NuclSciEng1997_Automatic2,HagWag_ProNuclEn2003_Monte,VanRiper_JointInterConf1997_AVATAR,DensmoreLarsen_JCompPhys2003_Variational}. One can use well-chosen approximations of the adjoint solution (typically a rough deterministic solution) to reduce variance.  The result is a \emph{hybrid} method (deterministic+MC).  The AVATAR method uses an adjoint approximation to determine weight windows \cite{VanRiper_JointInterConf1997_AVATAR}.  The CADIS scheme in \cite{HagWag_ProNuclEn2003_Monte} uses an adjoint approximation in both source biasing and weight-window determination.  An adaptive technique that successively refines the solution in ``important'' regions (and uses to adjoint to designate such regions) is described in \cite{Kong_Efficient_2008,Kong_ANewProof_2008}.  In \cite{spanier,TurnerLarsen_NuclSciEng1997_Automatic1}, a zero-variance technique is outlined that uses the true adjoint solution to fire
photons that all reach the detector with the same weight (which
happens to be the correct answer).  This method is impractical since
determining the exact adjoint solution everywhere is harder than
determining some integral of that solution.  The LIFT method
\cite{TurnerLarsen_NuclSciEng1997_Automatic1,TurnerLarsen_NuclSciEng1997_Automatic2}
uses an approximation of the adjoint solution to approximate this
zero-variance method.  

\medskip

We adapt the zero-variance technique to the particular problem we have
at hand; see figure \ref{figure:cos3boundary} for the type of geometry
considered in this paper.  The problem we consider has a fixed,
reflective, complex boundary, and relatively large mean-free-path (MFP)
in the sense that a large fraction of the photons reaching the
detector have not scattered inside the atmosphere. The calculation of
the approximate adjoint solution used to approximate zero-variance
techniques is difficult and potentially very costly. What we
demonstrate in this paper is that partial, ``localized'' (in an
appropriate sense) knowledge of the adjoint solution still offers very
significant variance reductions. More specifically, we
calculate adjoint solutions that accurately account for the presence
of the boundary but do not account for volume scattering (infinite
MFP). The calculation of the adjoint solution thus becomes a radiosity
problem with much reduced dimensions compared to the full transport
problem. This, of course, can only reduce variance in proportion to
the number of ``ballistic'' photons that never interact with the
volume.  Moreover, an adjoint solution that does not ``see'' volume
scattering cannot be used alone as a variance reduction scheme for
otherwise volume scattering would be neglected and the simulation
biased, which is not allowed. When combined with simple rules for
allowing volume scattering and sending some photons directly from the
volume to the detector, our hybrid method yields very significant
variance reduction at relatively minimal cost. Furthermore, the
methodology studied is applicable whenever any method is available to
deterministically pre-calculate flux over any subset of paths. For
instance, complex propagation of light in clouds and its importance
could be pre-calculated locally and incorporated into the MC
simulations in a similar fashion. This modular approach to the
description of the adjoint solution is well-adapted to the remote
sensing geometry and avoids complicated, global, and hence expensive
deterministic calculations of adjoint transport solutions. Our
treatment of the reflecting boundary described in detail in this paper
is a first step toward modular adjoint transport calculations and
their variance reduction capabilities in remote sensing.
\begin{figure}[ht!]
  \begin{center}
  \includegraphics[width=0.85\textwidth, clip=true, trim = 8cm 5cm 5cm 5cm]{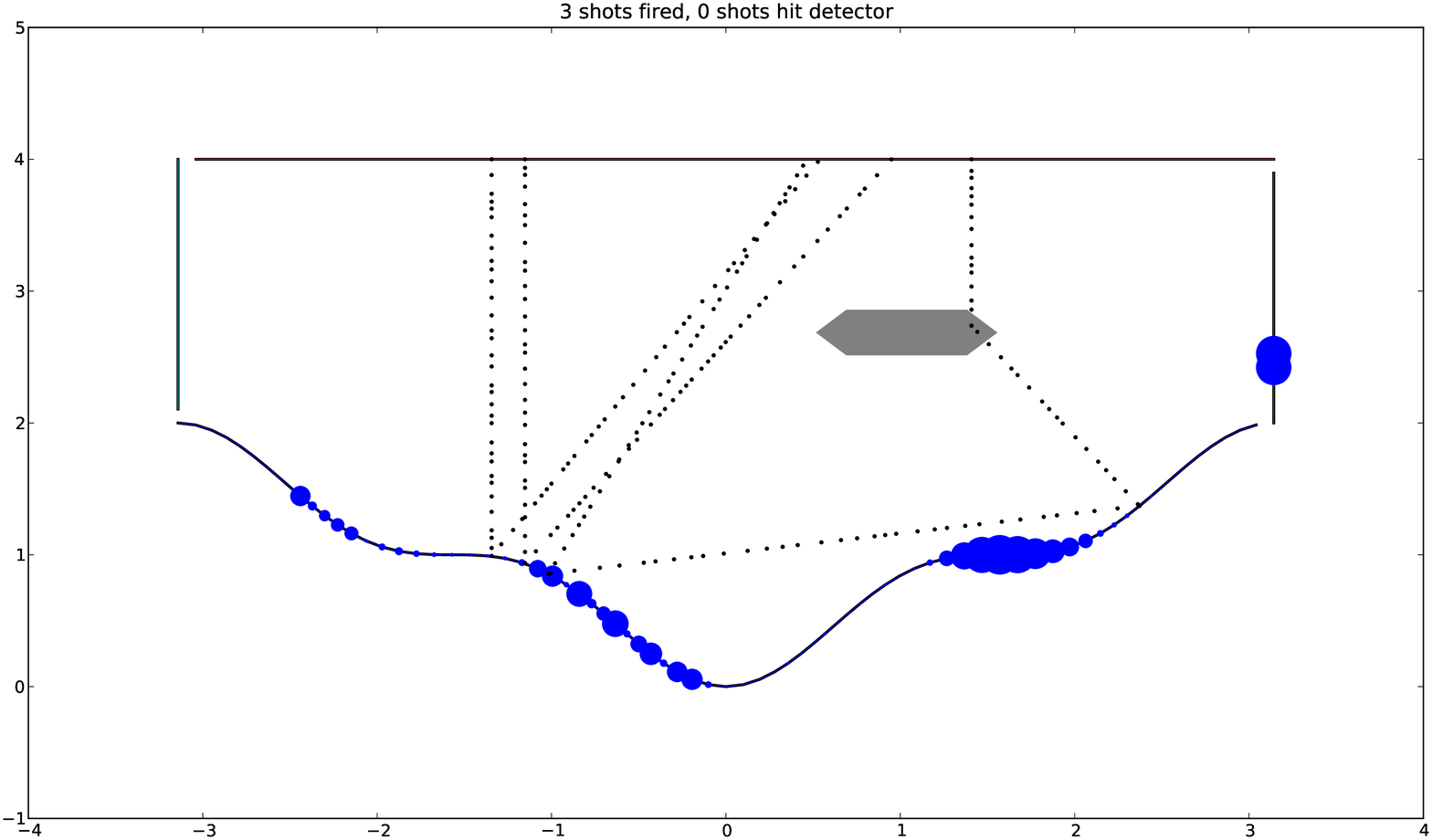}
  \caption{Mountain ($1-\cos^3x$ shape), cloud, sky, and detector.  Dot size indicates relative adjoint flux strength.  Large dots on right-hand-side are the detector (dot size is down-scaled for detector).  Dot size on mountain indicates that portions of the mountain are shaded from the detector, and that the scattering albedo is non-constant.}
  \label{figure:cos3boundary}
  \end{center}
\end{figure}

The rest of the paper is structured as follows. Section
\ref{section:RTE} presents basic information about the transport
equation with reflecting boundary. Section \ref{section:MCT} presents our main
theoretical results on hybrid acceleration of Monte Carlo by
deterministic adjoint calculations.  We adopt an importance sampling
viewpoint \cite{Caflisch_Acta_Monte} that is common in the statistical
literature.  This means we view the modifications to
absorption/scattering as a change of probability measure and the
re-weighting as a Radon-Nikodym derivative (Jacobian).  This allows us
to fit many methods together under one framework.  In particular,
source-biasing, the zero-variance scheme, our approximation of it, and
our ``heuristic'' volume-to-detector adjustment are put in this light.
This allows us to obtain estimates of variance as a function of
scattering/absorption coefficients and the accuracy of the
deterministic solver.  

Sections \ref{sec:MCIS} and \ref{sec:ABIS} recall the main ideas
behind importance sampling and the use of adjoint transport
solutions. We recall how zero-variance chains can be constructed and
show how they can be approximated by small-variance chains. In the
absence of volume scattering, a small variance chain is constructed in
section \ref{sec:surface}.  The modularity mentioned earlier in this
section is implemented by a regularization methodology introduced in
\eqref{align:ourscheme} in section \ref{sec:saiheu}.  The Surface
Adjoint Importance (SAI) method, used to incorporate the adjoint
solutions that accurately describe the surface defined in section
\ref{sec:surface} in a scheme that also handles volume scattering, is
described in detail in section \ref{subsection:sai}. The variance
reduction and speedup that can be gained from the proposed methodology
are presented in section \ref{section:numerics}. Several details in
the derivation and the proof of the results of section
\ref{section:MCT} and the numerical implementation of the simulations
of section \ref{section:numerics} are postponed to Appendix
\ref{section:appendix}.

\section{Transport with Reflecting Boundaries}
\label{section:RTE}
Let $X\subset\Rd$ ($d=3$ in practice and $d=2$ in our numerical
simulations) be an open (spatial) domain with smooth boundary $\pX$.
Denote $X\cup\pX$ by $\bar X$.  For $x\in X$ photons will have
velocities $v\in\dsphere$, the unit sphere, and we call the pair
$z=(x,v)\in Z$.  When $x\in\pX$ we separate directions into incoming
and outgoing.  With $\nu_x$ the outward unit normal vector at
$x\in\pX$ we have $\Gamma_\pm :=\{(x,v)\st x\in\pX, \pm
v\cdot\nu_x>0\}$.  Note that $z$ always is interpreted as the pair
$(x,v)$, and for example $z_j=(x_j, v_j)$.

Photons will be cast along rays, and travel until they hit the boundary.  We define the forward and backward propagation times as $\tau_\pm(z) := \min_{t>0}\{x\pm tv\in \pX\}$.  We also define the forward and backward spatial and phase-space propagations $x_\pm(z) := x \pm \tau_\pm(z)v$, $z_\pm(z) := (x_\pm(z),v)$.  The rays themselves are denoted by a starting point and direction, $\ray(z) := \{x+tv\st 0<t<\tau_+(z)\}$.  

Define an integral over $\Zbar:= Z\cup\Gm\cup\Gp$ by
\begin{align*}
  \int_\Zbar f(z)\dz :&= \int_Zf(z)\dz + \int_\dsphere\int_\pX f(z)
  d\mu(x) dv,
\end{align*}
where $d\mu$ the surface measure on $\pX$ and an inner product by
\begin{align*}
  \ip{f}{g} &= \int_\Zbar f(z)g(z)\dz.
\end{align*}
Some functions are defined only, for example, on $\Gm$.  In that case we extend the function to $\Zbar$ by setting it equal to zero off of $\Gm$.

Light traveling through $X$ encounters an absorption cross section $\sigma_a(x)$, scattering kernel $\thetavol{x}{v}{v'}$, and scattering cross section $\sigma_s(x) := \int_\dsphere \theta(x,v\!\!\to\!\!v')\dv'$, which is assumed independent of $v$.  The total cross section $\sigma:=\sigma_a+\sigma_s$.  The exponential of $\sigma$ is denoted by
\begin{align*}
  \Esigma(x_1,x_2) :&= e^{-\int_0^{|x_1-x_2|}\sigma(x_1+tv_1)dt},
\end{align*}
where $v_1= \widehat{x_2-x_1} := (x_2-x_1)|x_2-x_1|^{-1}$.
We define $\Esigmaa$, $\Esigmas$ similarly.  Once a photon collides with the boundary, it is scattered with probability $\alpha(x)$.  In that case, the probability distribution $\Thetasurf{x}{v}{v'}$ determines the new direction.  This implies
\begin{align*}
  \int_{\nu_x\cdot v'>0}\Thetasurf{x}{v}{v'}\dv' &= 1.
\end{align*}

We model photon flux density $u$ in our medium with source $s$ by
\begin{align}
  \begin{split}
    v\cdot\nabla_x u(z) + \sigma(x)u(z) &= Ku(z)\\
    u\big\vert_\Gm(z) &= \frac{K(u\vert_\Gp)(z)}{|\nu_x\cdot v|} + \frac{s(z)}{|\nu_x\cdot v|},
  \end{split}
  \label{align:RTE_differential}
\end{align}
where
\begin{align}
   \begin{array}{rcll}
    Kf(z) &=& \displaystyle \int_\dsphere \thetavol{x}{v'}{v}f\rZ(x,v')\dv', \quad &
    z\in Z \\
   Kf(z) &=& \alpha(x)\displaystyle\int_{\nu_{x}\cdot v'>0}
   \Thetasurf{x}{v'}{v}|\nu_x\cdot v'|f\vert_\Gp(x,v')\dv'\quad & z\in
   \Gm.
   \end{array}
\end{align}
Since the transport problem is linear, we normalize $s$ so that
\begin{equation} \label{eq:norms}
  \int_{\Gamma_-} s(x,v) d\mu(x) dv=1.
\end{equation} 
Multiplying the identity $v\cdot\nabla_x u(x-tv,v) +
\sigma(x-tv)u(x-tv,v) = Ku(x-tv,v)$ by the integrating factor
$E_\sigma(x,x-tv)$ and integrating $t$ from $0$ to $\tau_-(z)$ we find that
$u$ satisfies the following integral transport equation:
\begin{align}
  \begin{split}
    u &= LKu + Ls,\qquad\mbox{so that} \qquad 
    u = \sum_{n=0}^\infty (LK)^nLs = L\sum_{n=0}^\infty(KL)^ns,
  \end{split}
  \label{align:RTE}
\end{align}
where (with $z\in Z\cup\Gp$)
\begin{align*}
    Lf(z) :&= \int_0^{\tau_-(z)}E_\sigma(x,x-tv)f\rZ(x-tv,v)\dt + \frac{E_\sigma(x,x_-(z))}{|\nu_{x_-(z)}\cdot v|}f\rGm(z_-(z)).
\end{align*}
Then \eqref{align:RTE} 
motivates us to define $\psio$ solving
\begin{align}
  \begin{split}
    \psio &= KL\psio + s,\qquad\mbox{so that}\qquad
    \psio = \sum_{n=0}^\infty(KL)^ns 
    \qquad\mbox{and}\quad
    u = L\psio.
  \end{split}
  \label{align:psio}
\end{align}
The decompositions \eqref{align:RTE} and \eqref{align:psio} of the transport solution into components
corresponding to increasing orders of scattering is standard in
forward and inverse transport theory. We refer the reader to
e.g. \cite{B-IP-09,dlen6,spanier} for additional details.

\subsection{Coefficient assumptions and measurement setup}
The function $g(z):=g(x,v)$ describes
the phase-space representation of the detector.   We will see that the Monte Carlo detector is defined as $\gbar(z):= g(z)|\nu_x\cdot v|^{-1}$.  We assume that
the source/detector are nonzero only on the incoming/outgoing
boundaries:  $\supp(s)\subset\Gm$, $\supp(g)\subset\Gp$.  Finally, we
assume that the detector is non-scattering, $\alpha(x)=0$ for
$(x,v)\in\supp(g)$.  We also have $\alpha\equiv0$ on the sky and left/right sides to model photons that escape our domain.  These assumptions are satisfied for source
radiation coming from the sun and detectors on high-elevation planes
or satellites. The methodology we present could easily be adapted to
detectors placed in the volume.

Our measurement is the phase space integral $\ip{g}{u}$.  All
numerical methods employed will approximate this integral.  When $g(z)
= \nu_x\cdot v$ (for $x$ on the support of the detector), the detector
is measuring photon flux.  This corresponds to counting Monte Carlo
photons that pass through the support of $\gbar$.

\subsection{Adjoint solutions and operator decomposition}
We will see that it is the adjoint operator (and its kernel) that is needed to define the Markov chain transition kernels in MC simulations.  We denote adjoint operators by $^\ast$, and adjoint is defined with respect to the inner product $\ip{\cdot}{\cdot}$.  The methods used in this paper rely on a decomposition of the operator $(LK)^\ast$ into $\Cstar$ (ray Casting) and $\Sstar$ (Scattering) operators.  We have
\begin{align*}
  \Cstar f(z_1) :&= \int_0^{\tau_+(z_1)}E_\sigma(x_1, x_1+tv_1)f\rZ(x_1+tv_1,v_1)\dt + E_\sigma(x_1,x_+(z_1))f\rGp(z_+(z_1)), \\
  \intertext{when $z_1\in Z\cup\Gm$, and }
  \Sstar f(z_1) :&= \left\{
  \begin{matrix}
    &\dint_\dsphere\thetavol{x_1}{v_1}{v_2}f\rZ(x_1,v_2)\dv_2,\quad z_1\in Z\\[3mm]    
    &\alpha(x_1)\dint_{\nu_{x_1}\cdot v_2<0}\Thetasurf{x_1}{v_1}{v_2}f\rGm(x_1,v_2)\dv_2, \quad z_1\in\Gp.
  \end{matrix}
  \right.
\end{align*}
While $\Cstar\neq\Lstar$, we still have $\Cstar\Sstar = (KL)^\ast$,
which implies of course that $SC = KL$.  We also note that
$\Cstar\gbar = \Lstar g$.  The notation $x_1,v_1,z_1$, and
$x_2,v_2,z_2$ is suggestive of the fact that these variables will
later represent photon positions/velocities at the first, second,
third, etc\dots
position.  

Define the adjoint $\psiostar$ by
\begin{align}
  \begin{split}
    \psiostar &= \Cstar\Sstar\psiostar + \Cstar\gbar,\quad\mbox{so that} \quad
    \psiostar = \sum_{n=0}^\infty(\Cstar\Sstar)^n\Cstar\gbar = \Cstar\sum_{n=0}^\infty(\Sstar\Cstar)^n\gbar.
  \end{split}
  \label{align:psiostar}
\end{align}
Then definitions of $\psio$, $\psiostar$ imply $\ip{s}{\psiostar} = \ip{\Cstar\gbar}{\psio}$, and therefore
\begin{align}
  \ip{s}{\psiostar} &= \ip{\Cstar\gbar}{\psio} = \ip{\Lstar g}{\psio} = \ip{u}{g}.
  \label{align:adjoint_first_fundamental_property}
\end{align}
In other words, the adjoint solution $\psiostar(z)$ is a weight giving
the ``importance'' of a source at point $z$ on our measurement
$\ip{u}{g}$.  This is the first fundamental reason for the use of the
adjoint solution in Monte Carlo transport; see
e.g. \cite{spanier,TurnerLarsen_NuclSciEng1997_Automatic1} and also
Theorem \ref{theorem:analog_unbiased} below.
Note that $\psiostar$ can be shown  to solve
\begin{align}
  \begin{split}
    -v\cdot\nabla_x\psiostar + \sigma\psiostar &= \Sstar\psiostar,\\
    \psiostar\rGp &= \Sstar(\psiostar\rGm) + \gbar.
  \end{split}
  \label{align:psiostar_differential_equation}
\end{align}

The relation in \eqref{align:psiostar} motivates us to define $\psiistar$ solving
\begin{align*}
  \psiistar &= \Sstar\Cstar\psiistar + \gbar,\quad\mbox{so that}\quad
  \psiistar = \sum_{n=0}^\infty(\Sstar\Cstar)^n\gbar.
\end{align*}
We also have the relations
\begin{align}
  \psiostar &= \Cstar\psiistar,\qquad \psiistar = \Sstar\psiostar + \gbar.
  \label{align:psistar_relations}
\end{align}
Both $\psiistar$ and $\psiostar$ appear naturally in Monte-Carlo transport.  When constructing transition kernels (that determine casting/direction changes), one will be a normalization constant for the other (implicitly or explicitly).  We make the distinction explicit due to the following heuristics.  We may think of $\psiistar(z_1)$ as the \emph{incoming importance} at $z_1$.  To it we associate an arrow directed into point $x_1$ with direction $v_1$.  $\psiostar(x_1,v_1)$ is the \emph{outgoing importance} at $(x_1,v_1)$ since it is the integral of incoming importance at all possible points $(x_2,v_1)$ along the ray $x_1+tv_1$.  Likewise, away from the support of $g$, $\psiistar=\Sstar\psiostar$, meaning that the incoming importance at $x$ in direction $v$ is the integral of all importance exiting $x$.
 
It is important to note that all chains described here alternate casts
with direction changes. Casts move particles from a point $(x_1,v_1)$
to a point $(x_1+tv_1,v_1)$ on a one-dimensional line segment while
direction changes move particles from a point $(x_2,v_1)$ to a point
$(x_2,v_2)$ on a $(d-1)-$dimensional sphere.  One could alternatively
try devising a scheme that moves $z_j\to z_{j+1}$ directly.  This
significantly increases computational cost since, given $z_1$, $z_2$
may lie anywhere on the $d$ dimensional manifold $\{x_1+t_1v_1\st
0<t<\tau_+(z_1)\}\times\dsphere$.  Thus, sampling $z_2$ directly would
require handling a $d$ dimensional data structure rather than a $1$
dimensional and a $(d-1)$ dimensional data structures for alternate
casts and direction changes. This is our main motivation for
introducing the operators $C^*$ and $S^*$ rather than directly working
with $(LK)^*$.

\subsection{Transport when $\sigma=0$}
\label{subsection:the_surface_limit}
When $\sigma\equiv0$ (the ``surface'' regime), we have $\Cstar=\Cs$ (with kernel $\kCs$), $\Sstar=\Ds$ (with kernel $\kDs$) where
\begin{align*}
  \Cs f(z_1) :&= \int_0^{\tau_+(z_1)}f\rZ(x_1+tv_1,v_1)\dt + f\rGp(z_+(z_1)),\\
  \Ds f(z_1) :&= \left\{
  \begin{matrix}
    &0,\quad z_1\in Z\\
    &\alpha(x_1)\dint_{\nu_{x_1}\cdot v_2<0}\Thetasurf{x_1}{v_1}{v_2}f\rGm(x_1,v_2)\dv_2,\quad z_1\in\Gp
  \end{matrix}
  \right.
\end{align*}
We then define $\psiis$ as the solution to
\begin{align*}
  \psiis &= \Ds\Cs\psiis + \gbar,
\end{align*}
and let 
\begin{math}
  \psios := \Cs\psiis.
\end{math}
Since $\psiis\rZ\equiv0$, 
\begin{align*}
  \psios(z) &= \psiis(z_+(z)),\quad z\in Z\cup\Gm,
\end{align*}
and for $z\in\Gp$,
\begin{align*}
  \Ds\Cs\psiis(z_1) &= \alpha(x_1)\int_{\nu_{x_1}\cdot v_2<0}\Thetasurf{x_1}{v_1}{v_2}\psiis(z_+(x_1,v_2))\dv_2.
\end{align*}
In other words, we can solve for $\psiis$ by paying attention only to the boundary, and then propagate it to compute $\psios$.
\section{Monte Carlo with Reflecting Boundaries}
\label{section:MCT}
Monte Carlo consists of simulating transport one photon at a
time. Photons propagate along straight lines until they interact with
the underlying medium, where they are either absorbed or scattered
into another direction, or reach the detector where they are
collected. It can be shown that photon paths terminate (with
probability one) after finitely many collisions.  Following
\cite{spanier}, paths will be written $\omega = (z_0,\dots,z_{\tau-1}, (x_\tau,\dead))$.  So the initial point $z_0=(x_0,v_0)\in\Zbar$, and subsequently we choose $x_1$ by casting a ray, then $v_1$ by changing direction, then $x_2$, then $v_2$ and so on until absorption.  At this stopping time $\tau$, $x_\tau$ is chosen, and then $v_\tau$ is set equal to $\dead$, the ``dead velocity.''  The chain is now terminated.  Let 
\begin{align*}
  \Omega :&= \{( (x_0,v_0),\dots,(x_{\tau-1},v_{\tau-1}), (x_\tau,\dead))\st v_j=\widehat{x_{j+1}-x_j}\}.
\end{align*}

All casts and direction changes (including ``death'') are made by
drawing random variables.  We thus introduce a probability measure on
the set of paths $\Omega$. We note that 
$\{\omega\in\Omega\st \tau(\omega)=n\}=\{\tau=n\}$ is the set of paths
terminating after $n-1$ scattering events.

A probability measure on $\Omega$ is a map $\Pr$ from the (measurable)
subsets of $\Omega$ into $[0,1]$.  It corresponds to a method of
choosing paths.  Given a set $A\subset\Omega$ of possible paths,
$\Pr[A]$ is the probability that a path lies in $A$.
$\Pr[\tau=n]:=\Pr[\{\omega\st\tau(\omega)=n\}]$ is the probability
that the chain terminates at step $n$.
Let $D$ denote the paths that end up hitting the detector.  Then
$\Pr[D]$ is the probability of hitting the detector.  With the
indicator function $\one_D(\omega)$ defined as $\one_D(\omega)=1$ if
$\omega\in D$ and zero otherwise, we have the notation
\begin{align*}
  \Pr[D] &= \int_D \dPr(\omega) = \int_\Omega\one_D(\omega)\dPr(\omega) = \Exp{\one_D}.
\end{align*}
Here, $\Exp{}$ denotes mathematical expectation (ensemble averaging)
w.r.t. $\Pr$.

\subsection{Monte Carlo and Importance Sampling}
\label{sec:MCIS}

The {\em analog} measure $\Pa$ closely follows the physics of photon
propagation (at least one reasonable physical model for photon
propagation). Monte Carlo simulations based on this measure have very
large (relative) variance because most of the photons do not reach the
detector. Several standard methods exist to modify the measure to
steer more photons toward the detector in an unbiased way, i.e., in a
way that does not modify the detector reading $\ip{u}{g}$. We start
with a presentation of the analog chain and then present the main
ideas of importance sampling to reduce variance in MC simulations. We
also present the (standard) survival biasing chain, which forms a
basis for comparison and a component in our composite SAI chain.

\subsubsection{Analog Sampling}
We first define the analog transition kernels $\kCa$ and $\kSa$, associated to
the operators $\Cstar$ and $\Sstar$. 
The analog ray casting transition kernel is
\begin{align*}
  \kCa(z_1\to x_2) :&= \left[ \deltaray{z_1}{x_2}\sigma(x_2)+\delta(x_2-x_+(z_1)) \right]E_\sigma(x_1,x_2).
\end{align*}
Above, $\deltaray{z_1}{x_2}$ is the ``delta function'' in $\Rd$ concentrated along the ray $\ray(z_1)$.  It forces $x_2$ to be along the path $x_1+tv_1$, $t>0$.  $\delta(x_2-x_+(z_1))$ forces $x_2$ to be on the boundary at $x_+(z_1)$.
Since
\begin{align}
  \frac{\d}{\d t}(1-E_\sigma(x,x+tv)) &= \sigma(x+tv)E_\sigma(x,x+tv),
  \label{align:analog_casting_pdf}
\end{align}
we have
\begin{math}
  \int_\Xbar\kCa(z_1\to x_2)\dx_2 = 1.
\end{math}
This means that the probability of termination during an analog casting event, 
\begin{align*}
  \pCa(z_1) :&= 1-\int_\Xbar\kCa(z_1\to x_2)\dx_2 = 0.
\end{align*}
Next, the direction change kernel is given by
\begin{align*}
  \kSa( (x_2,v_1)\to v_2) :&= \left\{
  \begin{matrix}
    &\thetavol{x_2}{v_1}{v_2}\sigma(x_2)^{-1},\quad x_2\in X\\
    &\alpha(x_2)\Thetasurf{x_2}{v_1}{v_2},\quad x_2\in\p X.
  \end{matrix}
  \right.
\end{align*}
We find that the probability of termination during a direction change is given by
\begin{align*}
  \pSa(x_2) &= \left\{
  \begin{matrix}
    &\sigma_a(x_2)/\sigma(x_2),\quad x_2\in X\\
    &1-\alpha(x_2),\quad x_2\in \p X.
  \end{matrix}
  \right.
\end{align*}

These kernels lead to the standard algorithm \ref{alg:analog}
\cite{spanier}: we sample $z_0$ from the normalized source $s$
(written $z_0\sim s$), then cast according to $\kCa$.  Then particle
is absorbed with probability $\pSa$.  If the photon is not absorbed,
we change direction using a pdf proportional to $\kSa$ ($\kSa$ doesn't
integrate to one, so it is not a pdf).  Then we cast again and so on
until we are absorbed.  This defines the chain
$\omega=((x_0,v_0),\dots,(x_{\tau-1},v_{\tau-1}), (x_\tau,\dead))$. At
this point, we define the random variable modeling detector reading:
\begin{align}
  \xi_a(\omega) :&= \frac{\gbar(X_\tau,V_{\tau-1})}{\pSa(X_\tau)}.
  \label{align:xia}
\end{align}
Note that our assumptions on $\alpha$ imply $\pSa\equiv1$ on the
support of $\gbar$.  
\begin{algorithm}
  \begin{algorithmic}[1]
    \caption{Analog}
    \label{alg:analog}
    \STATE Draw $z_0\sim s$, set $j\leftarrow0$
    \WHILE{$v_j\neq\dead$}
      \STATE Draw $x_{j+1}\sim\kCa(z_j\to \cdot)$
      \STATE With probability $\pSa(x_{j+1},v_j)$, $v_{j+1}=\dead$
      \IF{$v_{j+1}\neq\dead$}
        \STATE Draw $v_{j+1}$ from a distribution $\propto\kSa( (x_{j+1},v_j)\to \cdot)$
      \ENDIF
      \STATE $j\leftarrow j+1$
    \ENDWHILE
    \STATE Record $\xia(\omega) = \gbar(x_j,v_{j-1})$
  \end{algorithmic}
\end{algorithm}
The simplest example is when the detector
measures flux through the surface.  In this case $\gbar(z) \equiv 1$
on $\supp(g)$ and we simply count MC photons hitting the detector.
Chains generated using algorithm \ref{alg:analog} induce the analog
probability measure
\begin{align}\label{eq:dPa}
  \dPa(\omega) &= s(z_0)\kCa(z_0\to x_1)\kSa( (x_1,v_0)\to v_1)\nonumber\\
  &\quad\times\cdots\times\kCa(z_{\tau-2}\to x_{\tau-1})\kSa( (x_{\tau-1},v_{\tau-2})\to v_{\tau-1})\\
  &\quad\times\kCa(z_{\tau-1}\to x_\tau)\pSa(x_\tau,v_{\tau-1})\dz_0\cdots\dz_{\tau-1}\dx_\tau.\nonumber
\end{align}
It is instructive to write this out in the case where photons only interact with the volume, and then reach the detector.  In this case (keeping in mind that $\pSa\equiv1$ on the detector, and ignoring the $\dz_0\cdots\dx_\tau$), $\dPa$ becomes
\begin{align}
  s(z_0)\deltaray{z_0}{x_1}E_\sigma(x_0,x_1)\thetavol{x_1}{v_0}{v_1}\cdots \deltaray{z_{\tau-1}}{x_\tau}E_\sigma(x_{\tau-1},x_\tau).
  \label{align:dPa_simple}
\end{align}
We recall the normalization \eqref{eq:norms} from which we deduce that
$\int_\Omega \dPa=1$.
Note first that the above chain is terminated with a cast and use of
$\kCa$.  Second, note that the measure above is a multiplication of
singular measures and must be carefully defined.  E.g. recall that we must
fix $z_1$ in order for $\kCa(z_1\to x_2)$ to be well defined; see the
proof of theorem \ref{theorem:analog_unbiased} (in the appendix) for
details.  

The next theorem shows that the chain $\xia(\omega)$ is indeed {\em unbiased}.
\begin{theorem}
  \label{theorem:analog_unbiased}
  With $\Expa{\cdot}$ denoting expectation under the measure $\Pa$, we have
  \begin{align*}
    \Expa{\xia} &= \ip{u}{g}.
  \end{align*}
\end{theorem}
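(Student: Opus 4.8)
The plan is to expand the expectation as a sum over path lengths and match the resulting series term-by-term against the Neumann series \eqref{align:psiostar} for $\psiostar$, finishing with \eqref{align:adjoint_first_fundamental_property}. Since the sets $\{\tau=n\}$ partition $\Omega$ (up to a null set) and every kernel appearing in \eqref{eq:dPa} is nonnegative, I would first write
\begin{equation*}
  \Expa{\xia} = \sum_{n=1}^\infty\int_{\{\tau=n\}}\xia\dPa,
\end{equation*}
the interchange of sum and integral being justified by monotone convergence (and convergence being guaranteed by $\Pa$ being a probability measure).

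Next I would substitute the definitions on each slice $\{\tau=n\}$; two cancellations drive everything. First, the terminal absorption factor $\pSa(x_n,v_{n-1})$ in \eqref{eq:dPa} cancels exactly against the denominator of $\xia$ in \eqref{align:xia}, leaving the product of the $n$ casting kernels and $n-1$ direction-change kernels times $\gbar(x_n,v_{n-1})$. Second, along each interior ray the factor $\sigma(x_j)$ carried by $\kCa$ cancels the factor $\sigma(x_j)^{-1}$ carried by $\kSa$, so that a cast--then--turn pair, read as an iterated integral, reproduces precisely the action of $\Cstar$ followed by $\Sstar$; on the boundary the pieces $\delta(x_j-x_+(z_{j-1}))\Esigma(x_{j-1},x_j)$ and $\alpha(x_j)\Thetasurf{x_j}{v_{j-1}}{v_j}$ likewise reproduce the boundary parts of $\Cstar$ and $\Sstar$. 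The one place the pairing breaks is the final cast $z_{n-1}\to x_n$, which is followed by $\gbar$ rather than a turn; here I would invoke the standing assumption $\supp(\gbar)\subset\Gp$ to kill the volume part of $\kCa$ (the integrand $\gbar$ vanishes in the interior), so only the boundary term $\Esigma(x_{n-1},x_+(z_{n-1}))\gbar(z_+(z_{n-1}))$ survives and the final cast against $\gbar$ is exactly $\Cstar\gbar$. Assembling the $n-1$ cast--turn pairs with this final lone cast yields $\int_{\{\tau=n\}}\xia\dPa = \ip{s}{(\Cstar\Sstar)^{n-1}\Cstar\gbar}$.

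Summing over $n\ge1$ and reindexing with $m=n-1$ then gives $\Expa{\xia}=\sum_{m=0}^\infty\ip{s}{(\Cstar\Sstar)^m\Cstar\gbar}=\ip{s}{\psiostar}$ by \eqref{align:psiostar}, and \eqref{align:adjoint_first_fundamental_property} closes the argument with $\ip{s}{\psiostar}=\ip{u}{g}$.

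The main obstacle is not the algebra but making the manipulations rigorous: $\dPa$ is a product of singular measures (delta functions concentrated along rays and at boundary hit points), so each ``integration'' is really a conditional integration that must be carried out in the correct order. I would set this up by conditioning sequentially---fix $z_0$, integrate out $x_1$ against $\kCa(z_0\to\cdot)$, then $v_1$ against $\kSa((x_1,v_0)\to\cdot)$, and so on---checking at each stage that $\kCa(z_{j-1}\to\cdot)$ is well defined once $z_{j-1}$ is fixed, exactly the care flagged after \eqref{align:dPa_simple}. The delicate bookkeeping is splitting every cast into its volume and boundary contributions and verifying that the mixed interior/boundary products collapse onto the kernels of $\Cstar$ and $\Sstar$ without spurious cross terms; this is presumably where the appendix concentrates its effort.
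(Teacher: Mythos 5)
Your proposal is correct and takes essentially the same route as the paper's appendix proof: both decompose the expectation over the slices $\{\tau=n\}$, show each slice integral equals $\ip{s}{(\Cstar\Sstar)^{n-1}\Cstar\gbar}$ (the paper organizes the iterated integration as nested conditional expectations, you as the same sequential kernel cancellations---the $\sigma$/$\sigma^{-1}$ cancellation, the $\pSa$ cancellation, and the boundary-support argument identifying the final cast with $\Cstar\gbar$), and then sum the Neumann series to conclude via \eqref{align:adjoint_first_fundamental_property}. The only cosmetic difference is that the paper rewrites each term as $\ip{(KL)^{n-1}s}{\Cstar\gbar}$ and sums to $\ip{\psio}{\Cstar\gbar}$, whereas you sum directly to $\ip{s}{\psiostar}$; these coincide by adjointness.
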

This result is standard in the absence of a boundary; see
e.g. \cite{spanier}. Its proof is sketched in the appendix.  Algorithm
\ref{alg:analog} is a method for producing one draw (shot)
$\xia(\omega)$ from $\Pa$.  As is ``always'' done with Monte Carlo
techniques, we produce $N$ draws $\{\xia(\omega^i)\}_{i=1}^N$ in an
identical fashion, then estimate
\begin{align*}
  \ip{u}{g} = \Expa{\xia} &\approx \frac{1}{N}\sum_{i=1}^N \xia(\omega^i).
\end{align*}
\subsubsection{Importance sampling}
\label{subsection:importance_sampling}
Here we give a quick introduction to importance sampling and show how
it relates to our scheme.  Given the analog probability measure
$\dPa$, we can use a different measure $\dPtilde$ for sampling.
With $\xia$ defined as in \eqref{align:xia}, and $\xitilde :=
\xia\RNatilde$,
\begin{align*}
  \ip{u}{g} &= \Expa{\xia} = \int_\Omega \xia\dPa = \int_\Omega \xia\RNatilde \dPtilde = \Exp{\xitilde}_{\Ptilde},
\end{align*}
where the Radon-Nikodym derivative $\RNatilde$ (the Jacobian) must be
defined on $\supp(\xia)$.  This happens precisely when, for any measurable $A\subset\Omega$ such that $\Pa(A)>0$, we also have $\Ptilde(A)>0$.  In this case we say that $\Pa$ (or $\dPa$) is \emph{absolutely continuous} with respect to $\Ptilde$ (or $\dPtilde$).  Then we can estimate the measurement in one
of two ways:
\begin{enumerate}
\item $\ip{u}{g}\approx \frac{1}{N}\sum_{i=1}^N \xia(\omega_i)$, where
  $\omega_i$ are sampled according to $\Pa$
\item $\ip{u}{g}\approx \frac{1}{N}\sum_{i=1}^N \xitilde(\omega_i)$,
  where $\omega_i$ are sampled according to $\Ptilde$.
\end{enumerate}
For uncorrelated samples, the variance in either case ($\xi=\xia$ or $\xi=\xitilde$) is
\begin{align*}
  \Var{\frac{1}{N}\sum_{i=1}^N\xi(\omega_i)} &= \frac{\Var{\xi}}{N}.
\end{align*}
So it will suffice to study $\Var{\xi}$ and the time needed per sample
to calculate speedup.  Here are a few expressions for variance of
a random variable $\xi:\Omega\to\Rone$:
\begin{align*}
  \Var{\xi} = \int_\Omega (\xi-\Exp{\xi})^2\dP = \sum_{n=0}^\infty\int_{\tau=n}(\xi-\Exp{\xi})^2\dP = \Exp{\xi^2} - \Exp{\xi}^2.
\end{align*}

The behavior of $g$ puts some fundamental limits on variance for the
{\em analog} chain.  Let $D\subset\Omega$ be the set of paths that
reach the detector, and suppose the ``real life'' detector measures
flux through the surface, $g(z) = \nu_x\cdot v$ (on its support).
Then $\gbar\equiv1$ on the support of $g$ so that $\xia(\omega) =
\one_D(\omega)$ and the Monte Carlo detector acts as a photon counter.
Then, for $\Pa[D]\ll1$,
\begin{align*}
  \frac{\sqrt{\Var{\xi_a}}}{\ip{u}{g}} &=\frac{\sqrt{\Pa[D](1-\Pa[D])}}{\Pa(D)} \approx \frac{1}{\sqrt{\Pa[D]}}.
\end{align*}
So for a small detector, the relative variance of analog MC is quite large.
Since both methods are unbiased, variance is reduced if and only if
\begin{align*}
  0 &< \Expa{\xia^2} - \Exp{\xitilde^2}_\Ptilde = \int_\Omega \xia^2 \left[ 1 - \RNatilde \right]\dPa.
\end{align*}
The goal of importance sampling is thus to make $\dPtilde\gg\dPa$ on
as much of $\supp(\xia)$ as possible. However, $\dPtilde$ must
integrate to one and we must have $\RNatilde$ defined on $D$ (which we
don't have \emph{a priori} access to).  

We now describe some simplified importance sampling situations.  They
serve to bring intuition to our model.  Suppose first that we devise
an algorithm whose corresponding chain has measure
\begin{align}
  \begin{split}
    \dPtilde(\omega) &= \left\{
    \begin{matrix}
      G\dPa(\omega),& \omega\in D\\
      \frac{1-G\Pa[D]}{1-\Pa[D]}\dPa(\omega),& \omega\notin D,
    \end{matrix}
    \right.
  \end{split}
  \label{align:dPtilde_G}
\end{align}
with $1\leq G \leq \Pa[D]^{-1}$.  Now $\xitilde = \one_D/G$ and
\begin{align*}
  \frac{\Var{\xia}}{\Var{\xitilde}} &= \frac{1-\Pa[D]}{G^{-1}-\Pa[D]}.
\end{align*}
Theoretically, we can set $G=\Pa[D]^{-1}$ and achieve infinite
variance reduction, i.e., find a zero-variance method which gives the
right result with probability 1. Assuming knowledge of $\Pa[D]$ of
course  means we know the desired integral we are attempting to
measure and thus is not practical. Moreover, practically,  we cannot
know how to increase $\dPtilde$ uniformly (and exclusively) for the \emph{a priori}
unknown $\omega\in D$, and thus some error is made. But this
simple argument shows the possibility of achieving zero-variance
MC. This will be utilized later in this section after we introduce
importance sampling based on the adjoint transport calculations.

More practically, we may still devise schemes that increase the draws
from some known, controlled, set $B\subset \Omega$, ``stealing'' them
from $\Omega\setminus B$. In the simplified case where we change the
measure on $B$ by a multiplicative constant $b$ and on
$\Omega\backslash B$ by an appropriate constant so that mass is
preserved, we obtain that
\begin{align}\label{align:changemeas}
  \dPtilde(\omega) &=\left\{
  \begin{matrix}
    b\dPa(\omega),&\omega\in B\\
    \frac{1-b\Pa[B]}{1-\Pa[B]}\dPa(\omega),&\omega\in B^c,
  \end{matrix}
  \right.
   \quad\xitilde(\omega) =\left\{
   \begin{matrix}
     \frac{1}{b}\xi(\omega),& \omega\in B\\
     \frac{1-\Pa[B]}{1-b\Pa[B]}\xi(\omega),& \omega\in B^c. 
   \end{matrix}
   \right.
\end{align}
Here, we have defined $B^c=\Omega\setminus B$.
Then, assuming that $\xi=\one_D$ (i.e., that the detector counts photons),
\begin{align}
  \Exp{\xitilde^2}_\Ptilde &= \frac{1}{b}\Pa[D\cap B] + \frac{1-\Pa[B]}{1-b\Pa[B]}\Pa[D\cap B^c].
  \label{align:stealing_expectation}
\end{align}
Let us now optimize the choice of $b$ to maximize variance
reduction. Variance is significantly reduced when $B$ is a good
approximation of $D$, i.e., when $\Pa[D\cap B]$ is relatively close to
$\Pa[D]$. How good an approximation we need may be quantified as
follows.  We recall that $\Pa[D\cap B]=\Pa[D]\Pa[B|D]$.  We remind the
reader that $\Pr[B\g D]$ is the conditional probability of the event $B$
given $D$.  In other words, it is the probability that $\omega\in B$
given that the path $\omega$ reaches the detector.

Let us introduce the factors
\begin{equation} \label{align:agamma}
  \beta=b\Pa[D],\qquad \gamma=\dfrac{\Pa[B]}{\Pa[D]},\qquad 
   a=\dfrac{(1-\gamma\Pa[D])}{\Pa[B|D]}\dfrac{(1-\Pa[B|D])}{\Pa[D]}.
\end{equation}
Starting from \eqref{align:stealing_expectation}, some algebra
shows that
\begin{displaymath}
  \Var{\xitilde^2} = \Pa^2[D]\Big(
  \Pa[B|D]\big(\dfrac1\beta+\dfrac{a}{1-\gamma\beta}\big) -1\Big).
\end{displaymath}
Minimizing the above expression allows us to maximize the variance
reduction. We find that for the optimal value of $\beta_{\rm opt}$ equal to
$(\sqrt\gamma(\sqrt\gamma+\sqrt a))^{-1}$, the minimal variance is
given by
\begin{displaymath}
  \Var{\xitilde^2}_{\rm min} =  \Pa^2[D] \Big(\Pa[B|D] (\sqrt\gamma+\sqrt a)^2-1\Big).
\end{displaymath}
This shows that the maximal variance reduction is given by 
\begin{equation}
  \label{align:maxvar}
  \dfrac{\Var{\xi^2}}{\Var{\xitilde^2}}\bigg\vert_{\rm max} = \dfrac{1-\Pa[D]}{\Pa[D]}\dfrac{1}{\Pa[B|D] (\sqrt\gamma+\sqrt a)^2-1}.
\end{equation}
When $B\equiv D$, we find that $\gamma=1$ and $a=0$. In that case, we
find again that the above value is $+\infty$ and that the chain
$\tilde\xi$ has zero variance.

In practice however, it is unlikely that $a$ will be small. Since
$\Pa[D]$ is small, we find that $a$ is approximated by
$\frac{1-\Pa[B|D]}{\Pa[B|D]\Pa[D]}$. Since $\Pa[D]\ll1$ for small
detectors, $a$ is likely to be large even for reasonable
approximations of $D$ by $B$. It turns out that even in that case, we
can still expect good variance reductions. When $a\gg1$ and $\gamma$
close to $1$, we observe that
\begin{equation}
  \label{align:maxvarapp}
  \dfrac{\Var{\xi^2}}{\Var{\xitilde^2}}\bigg\vert_{\rm max} \approx
  \dfrac{1}{1-\Pa[B|D]}, \qquad
   \beta_{\rm opt}\approx \dfrac{1}{\sqrt {a\gamma}} \approx \Big(\dfrac{\Pa[B|D]\Pa[D]}{\gamma(1-\Pa[B|D])}\Big)^{\frac12}.
\end{equation}
We observe that for a choice of $b$ close to $\Pa[D]^{-1}\beta_{\rm
  opt}$, we obtain very reasonable variance reduction when $B$ is
chosen so that $1-\Pa[B|D]\ll1$ but not necessarily $a\lesssim1$ which
is equivalent to $1-\Pa[B|D]\lesssim\Pa[D]$ and imposes constraints on
$B$ that are not practical.  We use the notation
$a\lesssim b$ to denote ``$a\leq Cb $ for some $C<\infty$.''  In
figure \ref{figure:variance_reduction_by_stealing}, we show the
variance reduction \eqref{align:maxvar} for several values of
$\Pa[B|D]$ (left) and its approximation by \eqref{align:maxvarapp}
(right), which works quite well when $a$ is large and not so well when
$a$ is small as expected from theory. In all plots, $\Pa[D]=0.002$,
which is close to our actual simulations in section
\ref{section:numerics}.

Note that \eqref{align:changemeas} may be improved as follows when we
know the existence of a set $C$ such that  $C\cap D=\emptyset$. Paths
in $C$ do not reach the detector and thus we want to give them a
vanishing weight. The measure in \eqref{align:changemeas} then needs to
be modified as 
\begin{align}\label{align:changemeas2}
  \dPtilde(\omega) &=\left\{
  \begin{matrix}
    b\dPa(\omega),&\omega\in B\\
    0, & \omega \in C \\
    \frac{1-b\Pa[B]}{1-\Pa[B]-\Pa[C]}\dPa(\omega),&\omega\in(\Omega\setminus
    C)\setminus B.
  \end{matrix}
  \right.
\end{align}
This leads to
\begin{align}
  \Exp{\xitilde^2}_\Ptilde &= \frac{1}{b}\Pa[D\cap B] + \dfrac{1-\Pa[B]-\Pa[C]}{1-b\Pa[B]}\Pa[D\cap B^c].
  \label{align:stealing_expectation_2}
\end{align}
The situation \eqref{align:stealing_expectation_2} is
preferable to \eqref{align:stealing_expectation} when
$\Pa[C]>0$. The optimal value for $b$ is obtained as before with
$1-\gamma\Pa[D]$ in the definition of $a$ replaced by
$1-\Pa[C]-\gamma\Pa[D]$. 

\begin{figure}
  \begin{center}
  \includegraphics[width=0.45\textwidth]{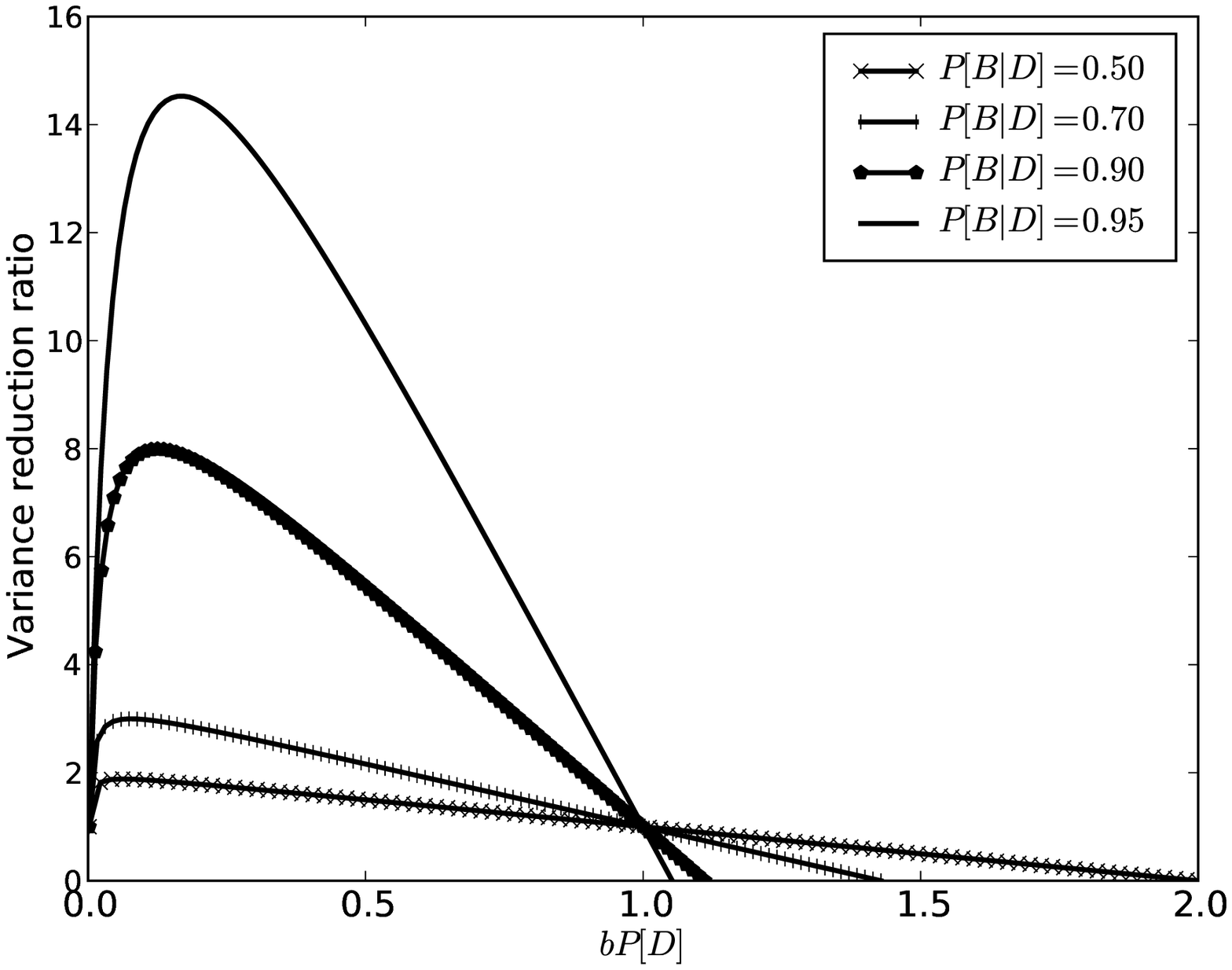}
  \includegraphics[width=0.45\textwidth]{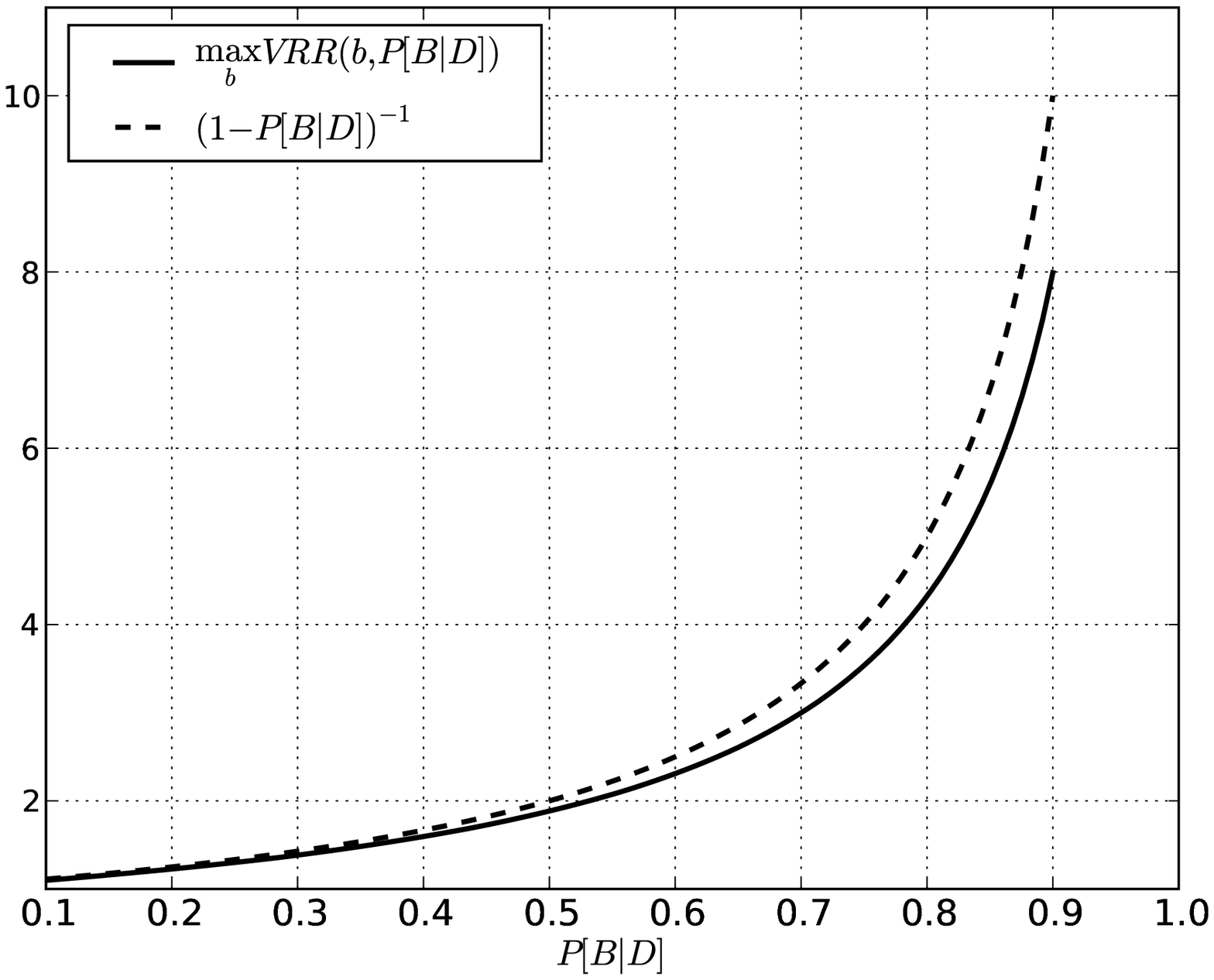}
  \caption{Variance reduction by importance sampling.  Left:  The ratio $VRR(b, \Pa[B\g D]) := \Var{\xi}/\Var{\xitilde}$ is plotted vs. $b\Pa[D]$ for a number of different $\Pa[B\g D]$.  Right: $\max_{b>0}VRR(b, \Pa[B\g D])$ is plotted versus $\Pa[B\g D]$.  In both cases variance is calculated in the regime \eqref{align:stealing_expectation}.}
  \label{figure:variance_reduction_by_stealing}
  \end{center}
\end{figure} 

\subsubsection{Modular importance sampling}
\label{subsection:modularity}

Finding the ``right'' set $B$ is a difficult task: photons making it
to the detector may undergo complicated interactions with the volume
scatterers and the reflecting boundary. Moreover, in most settings of
importance sampling, the derivative $\RNatilde(\omega)$ does not take
only two values as in the simplified setting \eqref{align:changemeas}.
$B$ should be replaced by one or several subsets $B=B_1\cup
B_2\cup\ldots$ where the weight $\RNatilde(\omega)$ should be allowed
to vary.

The {\em modularity} that we mention in the introduction consists of
choosing sets $B$ by appropriate approximations to the adjoint
transport solution that are relatively simple to calculate and have a
large intersection with $D$, the set of paths reaching the
detector. For instance, a subset $B_1$ could correspond to particles
reaching the detector after interacting with the boundary, $B_2$ to
particles reaching the detector after one scattering event in a cloud,
and so forth.

The importance sampling schemes considered in this paper are all based
on changes of measure of the form $\RNatilde(\omega)$ that generalize
that seen in \eqref{align:stealing_expectation} or
\eqref{align:stealing_expectation_2}. We summarize them here.

\medskip

{\bf The survival biasing method} defined in section
\ref{subsubsection:survival_biasing} below eliminates the volume and
surface absorption of photons.  Hence it ``steals'' shots from a
subset of photons that were absorbed before reaching the detector, and
moves them into some subset of $D$.  We are therefore in the regime
\eqref{align:stealing_expectation_2} (at least approximately as
$\RNatilde(\omega)$ is not constant on $B$); see also theorem
\ref{theorem:af_variance} below.

\medskip
{\bf The heuristic volume scattering method} defined in section
\ref{subsubsection:heuristic} below scatters (with probability $<1$)
photons in the volume directly toward the detector (rather than using
the phase function $\theta$). It has measure $\dPheu$ uniformly larger
than $\dPsb$ on the set of paths that scatter once in the volume then
hit the detector.  It modifies the measure (often increasing it) on
the set of paths that have their last interaction in the volume, then
reach the detector.  It steals shots from the set that interact last
with the boundary, then hit the detector.  A very rough approximation
would put us in the regime \eqref{align:stealing_expectation} with
$B_{heu} = \{\omega\in D\st x_{\tau-1}\in X\}$.

\medskip {\bf The ideal zero-variance chain} derived in section
\ref{subsubsection:zero_variance} below sends all photons to the
detector.  It uses an exact calculations of the adjoint solution to
sample only from $D$, and is in the regime \eqref{align:dPtilde_G}
with $G=\Pa[D]^{-1}$; see theorem \ref{theorem:identical_paths} below.
When only approximate expressions for the adjoint solution are
available, the zero-variance chain may be modified to yield small
variance chains. However, in practice, the calculation of both the
adjoint solution (step (i) in the abstract) and of the change of
measure $\RNatilde(\omega)$ (step (ii) in the abstract) is
prohibitively expensive.

\medskip {\bf The SAI method} defined in detail in section \ref{subsection:sai} below is our main example of a modular approach to importance sampling. In that method, we devise a subset $B=B_1\cup B_2$, where $B_1$ corresponds to particles that do not undergo any volume scattering and where $B_2$ corresponds to photons that are sent straight to the detector after undergoing volume scattering. We will see that the method involves the calculation of an adjoint solution in the absence of volume scattering and that the calculation of $\RNatilde(\omega)$ on $B_1$, $B_2$, and $\Omega\backslash(B_1\cup B_2)$ is relatively straightforward. Moreover, we will see $B_1\cup B_2$ is a good approximation of $D$ when volume scattering is not too large although $B_1$ and $B_2$ individually are not necessarily good approximations of $D$. In the simplified calculations in \eqref{align:maxvarapp} and in Fig.\ref{figure:variance_reduction_by_stealing}, we observe that for $\Pa[B_1|D]=0.45$ and $\Pa[B_2|D]=0.45$, we may ideally have $\Pa[B_1\cup B_2|D]=0.9$, with a potential variance reduction of order $10$ whereas the variance reduction from $B_1$ or from $B_2$ alone would at best be a factor $2$.

\subsubsection{Survival Biasing}
\label{subsubsection:survival_biasing}
Here we define a classical chain where no photons are absorbed in $X$,
although some are possibly in $\pX$ (for use in our application where we have perfectly absorbing boundaries).  This will be related to the analog chain via importance sampling.  Define
\begin{align*}
  \kCsb(z_1\to x_2) :&= \left[ \deltaray{z_1}{x_2}\sigma_s(x_2) + \delta(x_2-x_+(z_1)) \right]E_{\sigma_s}(x_1,x_2),\\
  \kSsb( (x_2,v_1)\to v_2) :&=\left\{
  \begin{matrix}
    &\displaystyle\frac{\thetavol{x_2}{v_1}{v_2}}{\sigma_s(x_2)},\quad x_2\in X\\
    &\alphasb(x_2)\Thetasurf{x_2}{v_1}{v_2},\quad x_2\in\pX,
  \end{matrix}
  \right.
\end{align*}
with $\alphasb(x)=1$ when $\alpha(x)>0$ and $\alphasb(x)=0$ when $\alpha(x)=0$.
We then have:
\begin{align*}
  \pCsb(z_1) = 0\quad \mbox{ and } \quad
  \pSsb(x_2,v_1) &= \left\{
  \begin{matrix}
    0,&\quad x_2\in X\\
    1-\alphasb(x_2),&\quad x_2\in\pX.
  \end{matrix}
  \right.
\end{align*}
The Radon-Nikodym derivative is obtained by formally dividing $\dPa$ by $\dPsb$, where $\dPa$ is defined in \eqref{eq:dPa}, and $\dPsb$ is defined analogously.  Since, for $(x_\tau,v_{\tau-1})\in\supp(g)$, $\alpha(x_\tau)=\alphasb(x_\tau)=0$, the Radon-Nikodym derivative, restricted to $\{\omega\st x_\tau\in\pi_x\supp(g)\}$ is 
\begin{align}
  \begin{split}
    \left|\frac{\dPa}{\dPsb}\right|&= E_{\sigma-\sigma_s}(x_0,x_1,\dots,x_\tau)\gammaasb(x_1)\cdots\gammaasb(x_{\tau-1}),\\
    \gammaasb(x) :&= \left\{
    \begin{matrix}
      &1,&\quad x\in X, \\
      &\alpha(x)/\alphasb(x),&\quad x\in\pX.
    \end{matrix}
    \right.
  \end{split}
  \label{align:RNasb}
\end{align}
Defining
\begin{align*}
  \xisb :&= \xi\left|\frac{\dPa}{\dPsb}\right|,
\end{align*}
we have
\begin{align*}
  \Expsb{\xisb} &= \Expa{\xia} = \ip{u}{g}.
\end{align*}
\begin{theorem}[Variance reduction by eliminating absorption]
  We have
  \begin{align*}
    \Var{\xisb} &\leq \Var{\xia}
  \end{align*}
  with equality if and only if absorption is zero (with probability $=1$) on analog paths that reach the detector.
  \label{theorem:af_variance}
\end{theorem}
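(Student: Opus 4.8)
The plan is to reduce the variance comparison to a comparison of second moments. Since it was just shown that both estimators are unbiased, $\Expsb{\xisb} = \Expa{\xia} = \ip{u}{g}$, we have $\Var{\xia} = \Expa{\xia^2}-\ip{u}{g}^2$ and $\Var{\xisb}=\Exp{\xisb^2}_{\Psb}-\ip{u}{g}^2$, so the squared means cancel. Writing $\xisb=\xia\RNasb$ and using $\RNasb\dPsb=\dPa$ on the set where the derivative is defined, I would compute
\begin{align*}
  \Exp{\xisb^2}_{\Psb} = \int_\Omega \xia^2\RNasb^2\dPsb = \int_\Omega\xia^2\RNasb\dPa,
\end{align*}
which yields
\begin{align*}
  \Var{\xia} - \Var{\xisb} = \Expa{\xia^2} - \Exp{\xisb^2}_{\Psb} = \int_\Omega\xia^2\left[1 - \RNasb\right]\dPa.
\end{align*}
This is exactly the identity already recorded in Section \ref{subsection:importance_sampling}, so the whole statement follows once I establish that the integrand is nonnegative and pin down when it vanishes.

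The heart of the argument is to prove $\RNasb\leq1$ on $\supp(\xia)$. Because $\gbar$ is supported on $\supp(g)\subset\Gp$ and $\pSa\equiv1$ there, $\xia(\omega)\neq0$ only on paths reaching the detector; on every such path each volume interaction is a genuine scattering event and each boundary interaction preceding the detector is a reflection with $\alpha(x_j)>0$ (otherwise the analog path would have terminated). I would then read off the three kinds of factors in the Radon-Nikodym derivative \eqref{align:RNasb}. First, since $\sigma-\sigma_s=\sigma_a\geq0$, the attenuation $E_{\sigma-\sigma_s}(x_0,\dots,x_\tau)$ is a product of exponentials $e^{-\int\sigma_a\,dt}\leq1$. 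Second, $\gammaasb(x_j)=1$ at each volume interaction. Third, at each boundary reflection $\alphasb(x_j)=1$ since $\alpha(x_j)>0$, so $\gammaasb(x_j)=\alpha(x_j)\leq1$ because $\alpha$ is a probability. Multiplying these factors gives $\RNasb\leq1$ pointwise on $\supp(\xia)$, hence $\Var{\xisb}\leq\Var{\xia}$.

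For the equality clause I would use that $\xia^2\left[1-\RNasb\right]\geq0$ holds $\Pa$-almost everywhere, so the integral above vanishes precisely when $\RNasb=1$ for $\Pa$-almost every detector-reaching path. Inspecting the factorization, $\RNasb=1$ forces each factor to equal one: the attenuation $E_{\sigma_a}=1$ along every segment (no volume absorption) and $\alpha(x_j)=1$ at every boundary reflection (no boundary absorption). Together these two conditions say exactly that absorption is zero, with probability one, on analog paths reaching the detector, which is the claimed characterization of equality.

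The step I expect to require the most care is not any of the inequalities but the measure-theoretic bookkeeping behind the identity for $\Var{\xia}-\Var{\xisb}$: the measures $\dPa$ and $\dPsb$ are products of singular (delta) and absolutely continuous factors, so one must verify that $\Pa$ is absolutely continuous with respect to $\Psb$ on $\supp(\xia)$ — that every detector-reaching path charged by $\Pa$ is also charged by $\Psb$ — before the change of variables $\RNasb\dPsb=\dPa$ is legitimate and before \eqref{align:RNasb} applies. This holds because survival biasing only removes terminations (it never kills a photon in the volume and always reflects where $\alpha>0$) while using the same casting and scattering densities, so it assigns positive measure to every such path; I would justify this rigorously in the same manner as the proof of Theorem \ref{theorem:analog_unbiased}.
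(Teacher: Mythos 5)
Your proof is correct and follows essentially the same route as the paper's: reduce to second moments via unbiasedness, apply the change of measure to get $\Exp{\xisb^2}_\Psb = \Expa{\xia^2\RNasb}$, and bound $\RNasb\leq1$ factor-by-factor using $E_{\sigma-\sigma_s}\leq1$ and $\gammaasb(x_j)\leq1$, with the equality case read off from when each factor equals one. Your additional care about absolute continuity of $\Pa$ with respect to $\Psb$ on detector-reaching paths is a point the paper's proof leaves implicit, but it does not change the argument.
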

\begin{proof}
  Since both methods are unbiased, it will suffice to consider the expected value of the random variable squared.  Since $E_{\sigma-\sigma_s}(x,y)\leq1$, (with equality if and only if $\sigma=\sigma_a$ along the path from $x$ to $y$), and for $j<\tau$, $\gammaasb(x_j)\leq1$ (with equality if and only if $\alpha(x_j)=1$),
  \begin{align*}
    \Exp{\xisb^2}_\Psb &= \Exp{\xia^2\left|\frac{\dPa}{\dPsb}\right|}_\Pa \leq \Exp{\xia^2}_\Pa,
  \end{align*}
  with equality occurring only under the specified conditions.
\end{proof}
Note that since photons are not absorbed, their path length could be
much longer than in standard analog sampling.  This could result in a
decrease in our figure of merit (see section
\ref{section:numerics}). In nuclear reactor applications, the
multiplication of particles with very small weights becomes a real
issue and several techniques such as Russian roulette have been
developed to address this
\cite{TurnerLarsen_NuclSciEng1997_Automatic1,HagWag_ProNuclEn2003_Monte}. In
remote sensing applications with a reasonably large mean free path, and the chance of escape into the atmosphere,
this is much less of an issue and thus is not considered in this
paper.
\subsubsection{Heuristic volume scattering adjustment}
\label{subsubsection:heuristic}
In this section, we present a very simple (and classical) direction change kernel to be used as part of any modular scheme to handle volume scattering (we use it as part of SAI).  We modify the volume scattering kernel in order to direct photons toward the detector.  When a large fraction of  photons reach the detector with only zero or one volume scattering event (i.e. when $\sigma_s$ is small), this is a reasonable method.  Although better methods do exist, we include this to demonstrate our modular variance reduction paradigm.  We introduce a regularization parameter $q_v\in(0,1]$.  We draw from our modified method a fraction of the time approximately proportional to $1-q_v$.

Let $x_{d_0}$ be the midpoint of the detector (assume one detector).  For $q_v\in[0,1]$, $x_2\in X$, put
\begin{align}
  \begin{split}
    \qheu(x_2,v_1) :&= 
    (q_v-1)\frac{\thetavol{x_2}{v_1}{\widehat{x_{d_0}-x_2}}}{\|\thetavol{x_2}{v_1}{\cdot}\|_{L^\infty}} + 1.
  \end{split}
  \label{align:qheu}
\end{align}
For $x_2\in X$, let $f_V(x_2\to v_2)$ be uniform on $\{v\in\dsphere\st \ray(x_2,v)\cap\pi_x\supp(g)\neq\emptyset\}$.  We define the \emph{heuristic scattering adjustment} direction change kernel by
\begin{align*}
  &\kSheu( (x_2,v_1)\to v_2) \\
  &\quad := \left\{
  \begin{matrix}
  [1-\qheu(x_2,v_1)]f_V(x_2\to v_2) + \qheu(x_2,v_1)\kSsb( (x_2,v_1)\to v_2),& x_2\in X\\
  \kSsb( (x_2,v_1)\to v_2),& x_2\in \pX.
  \end{matrix}
  \right.
\end{align*}
So we are aimed toward the detector via $f_V$ with probability $1-\qheu$.  The ratio of $\theta$ to its $L^\infty$ norm in \eqref{align:qheu} is proportional to the analog probability of heading toward the detector; certainly we don't want to send particles toward the detector when the analog chain would \emph{never} do that (the Radon-Nikodym derivative would be zero in this case).

For convenience, we calculate here the change of measure associated to the chain that uses survival biasing on the boundary and volume, as well as heuristic direction changes in the volume.  This is the \emph{heuristic} chain
\begin{align}
  \begin{split}
    \RNheusb &= \gammaheusb(x_1,z_2)\cdots\gammaheusb(x_{\tau-2},z_{\tau-1}),\\
    \gammaheusb(x_1,z_2) &= \left\{
    \begin{matrix}
      \displaystyle\frac{(1-\qheu)f_V(x_2\to v_2)\sigma_s(x_2) + \qheu\theta(x_2,v_1\to v_2)}{\theta(x_2,v_1\to v_2)},&\quad x_2\in X\\
      1,& x_2\in\pX.
    \end{matrix}
    \right.
  \end{split}
  \label{align:heuristic_RNheusb}
\end{align}

To produce one draw $\omega$ from the heuristic chain, we follow
algorithm \ref{alg:heuristic}.  This could then be used to estimate
$\ip{u}{g}$.  In this paper, we combine the heuristic chain with an adjoint-based method.  See section \ref{subsection:sai}.
\begin{algorithm}
  \begin{algorithmic}[1]
    \caption{Heuristic scattering adjustment}
    \label{alg:heuristic}
    \STATE Draw $z_0\sim s$, set $j\leftarrow0$
    \WHILE{$v_j\neq\dead$}
    \IF{$x_2\in X$}
      \STATE Compute $\qheu(x_{j+1},v_j)$ using \eqref{align:qheu}.
      With probability $1-\qheu$ set $switch\leftarrow$\TRUE
      \IF{$switch$}
        \STATE Draw $v_2\sim f_V(x_2\to \cdot)$
      \ELSE 
        \STATE Draw $v_2\sim \kSsb( (x_2,v_1)\to \cdot)$
      \ENDIF
    \ELSE
    \STATE With probability $\pSsb(x_2) = 1-\alphasb(x_2)$, $v_{j+1}\leftarrow\dead$
    \ENDIF
      \IF{$v_{j+1}\neq\dead$}
        \STATE Draw $v_{j+1}$ from a distribution $\propto\kSa( (x_{j+1},v_j)\to \cdot)$
      \ENDIF
      \STATE $j\leftarrow j+1$
    \ENDWHILE
    \STATE Record $\xiheu(\omega) = \gbar(x_j,v_{j-1})$
  \end{algorithmic}
\end{algorithm}

\subsection{Adjoint-based importance sampling}
\label{sec:ABIS}

In this section, we first show how knowledge of the exact adjoint transport solution allows us to devise a zero-variance method. This generalizes to the case of transport with boundaries well-known results for volume scattering \cite{spanier,TurnerLarsen_NuclSciEng1997_Automatic1}. When the adjoint solution is approximated, e.g., by a deterministic calculation, we show how a non-analog MC chain may be generated. We show that when the approximation of the adjoint solution is of order $h$ for  a ``mesh'' size $h\ll1$, then the MC variance is of order $h^2$ in ideal circumstances (and larger in more complex geometries). 
We will present in section \ref{subsection:sai} a hybrid method that only calculates {\em important} parts of the adjoint solution at a minimal computational cost while still offering sizable variance reductions.

\subsubsection{The zero-variance chain}
\label{subsubsection:zero_variance}
Here we describe a chain that uses an exact adjoint solutions
($\psiistar$, $\psiostar$) and has zero variance.  We show that draws
from the chain can be made in a manner similar to analog, with
modified scattering cross-sections.  Obtaining $\psiistar$,
$\psiostar$ is more difficult than solving our original problem (they
must be obtained everywhere), hence as we mentioned earlier this method is impractical.

Our Markov chain formulation phrases the use of the adjoint in terms of transition kernels.  This was done explicitly in \cite{spanier} and implicitly in \cite{TurnerLarsen_NuclSciEng1997_Automatic1}.  Unlike \cite{spanier} we explicitly write out the modified ray-casting and direction-change kernels.  Unlike either scheme, we explicitly use both the incoming $\psiistar$ and outgoing $\psiostar$ adjoint solutions.  In \cite{spanier} $\psiistar$ was used (implicitly) and in \cite{TurnerLarsen_NuclSciEng1997_Automatic1} both were used (implicitly).  
Define
\begin{align*}
  \kCstar(z_1\to x_2) :&= \left[ \deltaray{z_1}{x_2} + \delta(x_2-x_+(z_1)) \right]E_\sigma(x_1,x_2)\frac{\psiistar(x_2,v_1)}{\psiostar(z_1)},\\
  \kSstar( (x_2,v_1)\to v_2) :&= \left\{
  \begin{matrix}
    &\displaystyle\thetavol{x_2}{v_1}{v_2}\frac{\psiostar(x_2,v_2)}{\psiistar(x_2,v_1)},&\quad x_2\in X,\\
    &\displaystyle\alpha(x_2)\Thetasurf{x_2}{v_1}{v_2}\frac{\psiostar(x_2,v_2)}{\psiistar(x_2,v_1)},&\quad x_2\in \pX.
  \end{matrix}
  \right.
\end{align*}
So we modify the casting by the ratio of the importance of the point we will enter to the importance of the point we are exiting.  We modify direction changes by the ratio of the importance entering $x_2$ to that exiting.  

Using the equations defining the adjoint solutions, we have
\begin{align}\label{align:defprobadj}
  \pCstar(z_1) :&= 1- \int_\Xbar\kCstar(z_1\to x_2)\dx_2
  = 1- \frac{\Cstar\psiistar(z_1)}{\psiostar(z_1)}
  = 0,\nonumber\\[3mm]
  \pSstar(x_2,v_1) :&= 1 - \int_\dsphere \kSstar( (x_2,v_1)\to v_2)\dv_2
  = 1- \frac{\Sstar\psiostar(x_2,v_1)}{\psiistar(x_2,v_1)} \\
  &= \frac{\gbar(x_2,v_1)}{\psiistar(x_2,v_1)}
  = \left\{
  \begin{matrix}
    &1,\quad (x_2,v_1)\in\supp(\gbar)\\
    &0, \quad \mbox{otherwise}.
  \end{matrix}
  \right. \nonumber
\end{align}
The last equality used the fact that $\psiistar=\gbar$ on the support
of $\gbar$ (since $\alpha=0$ there).  So all photons reaching the
detector are collected.  

We also define a new (normalized) source
\begin{align*}
  \sstar :&= \frac{s\psiostar}{\ip{s}{\psiostar}}.
\end{align*}
In other words, we bias the photons leaving the source so that they leave in directions with high importance.

Note that sampling is done by alternately casting along a line, then changing direction, just as in an analog scheme.  Since (off the detector) both $\kCstar$ and $\kSstar$ integrate to one, they are probability densities.  To sample from $\kCstar(z_1\to x_2)$ it therefore suffices to cast along the ray $\ray(z_1)$ and integrate $\kCstar$ as we go.  Once the integral is greater than some uniform random number $u\sim\calU[0,1]$, we scatter.  The relation \eqref{align:analog_casting_pdf} along with algorithm \ref{alg:analog} show that this same procedure is done in standard analog Monte Carlo.  Sampling from $\kSstar$ may also be done just as in analog Monte Carlo.

We define $\dPstar$ in the same manner as $\dPa$.  This yields,
\begin{align*}
  \dPstar(\omega) &= \sstar(z_0)\kCstar(z_0\to x_1)\kSstar( (x_1,v_0)\to v_1)\cdots \kCstar(z_{\tau-1}\to x_\tau)\pSstar(x_\tau,v_{\tau-1})\\
  &\quad\times\dz_0\cdots\dx_\tau.
\end{align*}
It is instructive to see that most terms involving $\psistar$ cancel
in the calculation of the Radon-Nikodym derivative $\RNastar$. 
Restricting ourselves to paths that do not interact with the boundary and  ignoring $\dz_0\cdots\dx_\tau$, $\dPstar(\omega)$ takes the form:
\begin{align*}
  &\frac{s(z_0)\psiostar(z_0)}{\ip{s}{\psiostar}}\deltaray{z_0}{x_1}E_\sigma(x_0,x_1)\frac{\psiistar(x_1,v_0)}{\psiostar(z_0)}\thetavol{x_1}{v_0}{v_1}\\
  &\quad\times\deltaray{z_1}{x_2}E_\sigma(x_1,x_2)\frac{\psiistar(x_2,v_1)}{\psiostar(z_1)}\cdots \frac{\gbar(x_\tau,v_{\tau-1})}{\psiistar(x_\tau,v_{\tau-1})} \\
  &=\ip{s}{\psiostar}^{-1}s(z_0)\deltaray{z_0}{x_1}E_\sigma(x_0,x_1)\thetavol{x_1}{v_0}{v_1}\deltaray{z_1}{x_2}E_\sigma(x_1,x_2)\cdots\gbar(x_\tau,v_{\tau-1}).
\end{align*}
This easily combines with \eqref{align:dPa_simple} to yield
\eqref{align:RNzerovar} below for the Radon-Nikodym derivative
$\RNastar$ restricted to paths that do not interact with the boundary.

More generally, for all paths, which account for both volume and
boundary interactions, we verify (after careful algebra) that the
Radon-Nikodym derivative (restricted to the set $D$) is still given by
\begin{align}\label{align:RNzerovar}
  \left|\frac{\dPa}{\dPstar}\right| &= \frac{\ip{s}{\psiostar}}{\gbar(x_\tau,v_{\tau-1})}.
\end{align}
Since (for $(x_\tau,v_{\tau-1})\in\supp(\gbar)$), $\pSa(x_\tau,v_{\tau-1})=1$, the appropriate random variable to measure is
\begin{align*}
  \xistar :&= \left|\frac{\dPa}{\dPstar}\right| \xia = \left|\frac{\dPa}{\dPstar}\right| \frac{\gbar(x_\tau,v_{\tau-1})}{\pSa(x_\tau,v_{\tau-1})}
  = \ip{s}{\psiostar}.
\end{align*}

In the event of highly scattering media, it would be advantageous to
use a scheme that would reduce the number of scattering events seen by
a photon.  More generally, we would hope that a less expensive route
to the detector could be taken.  Unfortunately, the next theorem shows
that the zero-variance scheme cannot do this.  Fortunately, it also
shows that the modified chain does not take a more expensive route to
the detector.  We remind the reader that $\Pr[A\g D]$ is the conditional
probability of the event $A$ given $D$.  In other words, it is the
probability that $\omega\in A$ given that we reach the detector.
\begin{theorem}[Identical Paths]
  Let $A\subset\Omega$ be measurable, and let $D\subset\Omega$ denote the paths that end with $(x_\tau,v_{\tau-1})\in\supp(g)$, then
  \begin{align*}
    \Pstar[A] &= \frac{\int_A \gbar(X_\tau,V_{\tau-1})\dPa}{\ip{\psiostar}{s}}.
  \end{align*}
  In the special case $\gbar \equiv1$ on $\supp(g)$, then
  \begin{math}
   \Pstar[A] = \Pstar[A\g D] = \Pa\left[ A\g D \right].
  \end{math}
  In other words, the paths taken to the detector in the modified scheme are the exact same as in the analog scheme.  
  \label{theorem:identical_paths}
\end{theorem}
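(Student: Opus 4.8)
The plan is to deduce both statements almost directly from the Radon-Nikodym derivative \eqref{align:RNzerovar}, once it is combined with the single structural observation that the zero-variance measure $\Pstar$ is concentrated on $D$. So the first step I would carry out is to show $\Pstar[D]=1$, i.e., that every path generated by the zero-variance chain terminates exactly at the detector. This is immediate from the termination probabilities in \eqref{align:defprobadj}: since $\pCstar(z_1)=0$ at every casting event, and $\pSstar(x_2,v_1)=0$ off $\supp(\gbar)$ while $\pSstar=1$ on $\supp(\gbar)$, the chain can neither die in the volume nor at a non-detector boundary point; it propagates until it first meets $\supp(\gbar)$, where it is collected with probability one. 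Hence $\Pstar[\Omega\setminus D]=0$ and, for any measurable $A$, $\Pstar[A]=\Pstar[A\cap D]$.

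Next I would rewrite $\Pstar[A\cap D]$ against the analog measure using \eqref{align:RNzerovar}, which on $D$ reads $\dPstar = \gbar(X_\tau,V_{\tau-1})\ip{s}{\psiostar}^{-1}\dPa$. Then
\begin{align*}
  \Pstar[A] = \Pstar[A\cap D] = \int_{A\cap D}\dPstar = \frac{1}{\ip{s}{\psiostar}}\int_{A\cap D}\gbar(X_\tau,V_{\tau-1})\dPa.
\end{align*}
Because $\supp(\gbar)\subset\Gp$, the factor $\gbar(X_\tau,V_{\tau-1})$ vanishes on every path not ending at the detector, so the domain of integration may be enlarged from $A\cap D$ to $A$ without changing the value. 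This yields the first displayed identity of the theorem.

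For the special case $\gbar\equiv1$ on $\supp(g)$, I would observe that then $\gbar(X_\tau,V_{\tau-1})=\one_D(\omega)$, so the identity just proved gives $\Pstar[A]=\Pa[A\cap D]/\ip{s}{\psiostar}$. Choosing $A=\Omega$ and using that $\Pstar$ is a probability measure forces the normalization $\ip{s}{\psiostar}=\Pa[D]$, whence $\Pstar[A]=\Pa[A\cap D]/\Pa[D]=\Pa[A\g D]$. Finally, since $\Pstar[D]=1$, conditioning on $D$ is vacuous under $\Pstar$, so $\Pstar[A]=\Pstar[A\g D]$, and the three quantities coincide.

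The only genuinely delicate point is one I would import rather than reprove: the validity of \eqref{align:RNzerovar} on \emph{all} of $D$, including paths that interact with the boundary, not merely the boundary-free paths for which the $\psistar$-cancellation was displayed explicitly. Both $\dPstar$ and $\dPa$ are products of singular measures supported along rays and on spheres, so the derivative must be interpreted carefully in the sense of the proof of Theorem \ref{theorem:analog_unbiased}. Granting that identity, the present theorem is essentially a bookkeeping consequence of $\Pstar$ living on $D$, and no further analytic difficulty arises.
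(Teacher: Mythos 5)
Your proposal is correct and follows essentially the same route as the paper: both hinge on the Radon--Nikodym identity \eqref{align:RNzerovar} to get the first display, then specialize to $\gbar\equiv1$ and pin down the normalization $\ip{s}{\psiostar}=\Pa[D]$ from the fact that $\Pstar$ has total mass one (you take $A=\Omega$ directly, while the paper sums $\Pa[(\tau=n)\cap D]$ over $n$ --- the same computation in disguise). Your extra upfront step, deducing $\Pstar[D]=1$ from the termination probabilities \eqref{align:defprobadj}, is a slightly more careful way of justifying both the enlargement of the integration domain from $A\cap D$ to $A$ and the equality $\Pstar[A]=\Pstar[A\g D]$, which the paper leaves implicit.
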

The following corollary follows by letting $A= \{\tau=n\}$.
\begin{corollary}[Constant Collision Ratios]
  \begin{align*}
    \Pstar[\tau=n] &= \frac{\int_{\tau=n}\gbar(X_\tau,V_{\tau-1})\dPa}{\ip{\psiostar}{s}}.
  \end{align*}
  In the special case $\gbar = \one_{\supp(g)}$, then
  \begin{math}
   \Pstar[\tau=n] = \Pa\left[ \tau=n\g D \right].
  \end{math}
  In other words, the number of collisions photons have before hitting the detector is the same in the analog or zero-variance scheme.
  \label{corollary:constant_collision_ratios}
\end{corollary}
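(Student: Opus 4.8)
The plan is to obtain the corollary as an immediate specialization of the Identical Paths theorem (Theorem~\ref{theorem:identical_paths}), exactly as the sentence preceding the statement indicates. There is no new analytic content here; the work consists entirely of checking that the event we specialize to is an admissible $A$ in the theorem, substituting it, and reading off the two assertions.

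First I would take $A=\{\tau=n\}$. This set is measurable: it is one of the disjoint ``$\tau=n$'' pieces into which $\Omega$ decomposes by path length (the paths terminating after $n-1$ scattering events), and integrals over it such as $\int_{\tau=n}(\cdots)\dPa$ have already been used freely in the variance computations above. Since $A$ is measurable, Theorem~\ref{theorem:identical_paths} applies verbatim, and substituting $A=\{\tau=n\}$ into its first displayed identity gives
\begin{align*}
  \Pstar[\tau=n] &= \frac{\int_{\tau=n}\gbar(X_\tau,V_{\tau-1})\dPa}{\ip{\psiostar}{s}},
\end{align*}
which is precisely the first claim.

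For the special case $\gbar=\one_{\supp(g)}$, I would invoke the second assertion of Theorem~\ref{theorem:identical_paths}, namely $\Pstar[A]=\Pa[A\g D]$, again with $A=\{\tau=n\}$, yielding $\Pstar[\tau=n]=\Pa[\tau=n\g D]$ directly. It is worth recalling why the conditioning on $D$ appears only on the analog side: by \eqref{align:defprobadj} the zero-variance chain terminates only at the detector ($\pSstar=1$ on $\supp(\gbar)$ and $\pSstar=0$ otherwise, with $\pCstar\equiv0$), so $\Pstar[D]=1$ and hence $\Pstar[\tau=n]=\Pstar[\tau=n\g D]$ automatically. The probabilistic reading --- that the distribution of the number of collisions before absorption is identical under the analog and zero-variance chains once the analog chain is conditioned on reaching the detector --- then follows because $\{\tau=n\}$ encodes exactly $n-1$ collisions.

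Since every step is a substitution into an already-established identity, there is no obstacle internal to the corollary: the entire difficulty has been discharged upstream in proving Theorem~\ref{theorem:identical_paths}, in particular the cancellation of the adjoint factors that produces \eqref{align:RNzerovar} and the careful definition of the singular product measure $\dPa$. The only point demanding a moment's care is the measurability and interpretation of $\{\tau=n\}$ as a well-defined ``length-$n$'' (equivalently, $(n-1)$-collision) event, so that the specialization is legitimate and the collision-count statement is correctly phrased.
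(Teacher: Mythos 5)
Your proposal is correct and coincides with the paper's own argument: the paper proves the corollary in one line, ``The following corollary follows by letting $A=\{\tau=n\}$'' in Theorem~\ref{theorem:identical_paths}, which is exactly your specialization. Your added remarks on the measurability of $\{\tau=n\}$ and on why $\Pstar[D]=1$ (via \eqref{align:defprobadj}) are sound elaborations of details the paper leaves implicit, not a different route.
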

These results show that the zero-variance chain (when the MC detector counts photons) is precisely in the regime \eqref{align:dPtilde_G} with $G = \Pa[D]^{-1}$.  In other words, we increase the measure uniformly (an optimum amount) on the set of paths that reach the detector.
\begin{proof}[Proof of theorem \ref{theorem:identical_paths}]
  \begin{align*}
    \Pstar[A] &= \int_A \left|\frac{\dPstar}{\dPa}\right|\dPa 
    = \frac{1}{\ip{s}{\psiostar}} \int_A g(X_n,V_{n-1})\dPa.
  \end{align*}
  This proves the first part.  When $\gbar\equiv1$ on $\supp(g)$, the above becomes
  \begin{align*}
    \Pstar[A] &= \frac{\Pa\left[ A\cap D \right]}{\ip{s}{\psiostar}}.
  \end{align*}
  This leads to
  \begin{align*}
    \Pa[D] &= \sum_{n=0}^\infty \Pa\left[ (\tau=n)\cap D \right] 
    = \ip{s}{\psiostar}\sum_{n=0}^\infty \Pstar[\tau=n]
    = \ip{s}{\psiostar}.
  \end{align*}
  The result then follows from the definition of conditional probability.
\end{proof}
\subsubsection{Approximations of the zero variance chain}
\label{subsubsection:approximations_of_zero_var}
After seeing the zero-variance chain, one immediately gets the idea of
using approximations to $\psiistar$, $\psiostar$ in a variance
reduction method.  Assuming one can generate these approximations
(e.g., by using a deterministic solver), it still remains to construct
a bona fide chain (a probability density integrating to 1) and to
sample from the corresponding chain (this is step (ii) in the
abstract).  We show here that an arbitrarily coarse adjoint
approximation can be used in an approximation of the zero-variance
scheme.

The approximation $\dPh$ of $\dPstar$ can be used in the so-called
``asymptotic regime'' where $\psioh\approx\psiostar$.  In this
setting, the calculation of the adjoint solutions and the sampling
from the measure $\dPh$ may be prohibitively expensive as the number
of degrees of freedom necessary to adequately represent the adjoint
solution is typically very large.

This approximation can also be used to guide photons along paths to the
detector. There, it only needs to perform sufficiently well and no
longer needs to be very accurate. In this case we draw from $\dPh$ (a
``not-necessarily-good'' approximation of $\dPstar$) only an optimized
fraction of the time, while e.g. using the analog measure to sample
from the rest of the time.  In many cases good speedup is obtained.
The latter methodology is implemented by the SAI chain in section
\ref{subsection:sai}.

Assume one has $\psiih\approx\psiistar$, and $\psioh\approx\psiostar$.  Then, following the recipe of the zero-variance chain, we could set
\begin{align}\label{align:knotnorm}
  k_{C^\ast}^h(z_1\to x_2) :&= \left[ \deltaray{z_1}{x_2} + \delta(x_2-x_+(z_1))\right]E_\sigma(x_1,x_2)\frac{\psiih(x_2,v_1)}{\psioh(z_1)},\nonumber\\
  k_{S^\ast}^h( (x_2,v_1)\to v_2) :&= \left\{
  \begin{matrix}
    \displaystyle\thetavol{x_2}{v_1}{v_2}\frac{\psioh(x_2,v_2)}{\psiih(x_2,v_1)},& x_2\in X,\\
    \displaystyle\alpha(x_2)\Thetasurf{x_2}{v_1}{v_2}\frac{\psioh(x_2,v_2)}{\psiih(x_2,v_1)},& x_2\in\pX.
  \end{matrix}
  \right.
\end{align}
However, many difficulties arise.  For example, it is not clear that $\psiih$, $\psioh$ are nonzero whenever $\psiistar$, $\psiostar$ are.  In this case, the modified chain will not send photons along all paths that the analog chain does, and the result will be biased.  Assuming we take care of this problem, a more insidious issue arises:  What are the values of the integrals $\int_Xk_{C^\ast}^h\dx$, $\int k_{S^\ast}^h\dv$?  If both integrate to one (away from the detector), then, as in the zero variance chain, we use them as pdfs and sample directly from them (say with an accept-reject method, or by pre-calculating a cdf).  If they integrate to less than one, this gives us a probability of absorption, and we need to know this.  If they integrate to more than one (very possible), then one can still sample from a pdf proportional to them.  However, this proportionality constant must be known when the Radon-Nikodym derivative is calculated.  

To formalize this, we propose modified kernels of the form
\begin{align}
  \begin{split}
    \kCha(z_1\to x_2) :&= \left[ \deltaray{z_1}{x_2} + \delta(x_2-x_+(z_1))\right]\Ehsigma(x_1,x_2),\\
    \kCh(z_1\to x_2) :&= \kCha(z_1\to x_2)\frac{\psiih(x_2,v_1)}{\psioh(z_1)},\\
    \kSha( (x_2,v_1)\to v_2) :&= \left\{
    \begin{matrix}
      \thetavolh{x_2}{v_1}{v_2},& x_2\in X,\\
      \alphah(x_2)\Thetasurfh{x_2}{v_1}{v_2},& x_2\in\pX.
    \end{matrix}
    \right. \\
    \kSh( (x_2,v_1)\to v_2) :&= \kSha( (x_2,v_1)\to v_2) \frac{\psioh(x_2,v_2)}{\psiih(x_2,v_1)}.
  \end{split}
  \label{align:kCh_kSh}
\end{align}
The new coefficients $(\Ehsigma, \theta^h, \alphah, \Theta^h)$ are
chosen such that the kernels integrate to one or less.  In most cases,
away from the detector, one would choose the integrals to be one (so
particles are not absorbed).
We also assume that we have approximations of the detector and source,
$\gbarh\approx\gbar$, $\sh\approx s$. Note that the calculation of
such coefficients may prove to be quite expensive numerically (this is
step (ii) introduced in the abstract). In some sense, the coefficients
in \eqref{align:kCh_kSh} may be seen as normalized versions of the
coefficients introduced in \eqref{align:knotnorm}. However, finding
rules to calculate this normalizing constants efficiently is non-trivial.
We will address this issue in the following section in the
simplified setting where volume scattering is absent. 

Note that such normalizing constants would easily be calculated if
$\kCha$ and $\kSha$ were the kernels of operators $\Ch$ and $\Sh$,
respectively, and $\psioh$ and $\psiih$ were obtained by solving the
equations $\psioh = \Ch\Sh\psioh + \Ch\gbarh$ and $\psiih =
\Sh\Ch\psiih + \gbarh$. Indeed as in \eqref{align:defprobadj}, we
would then obtain that
\begin{align}\label{align:defprobadjh}
  1- \int_\Xbar\kCh(z_1\to x_2)\dx_2
  = 1- \frac{\Ch\psiih(z_1)}{\psioh(z_1)}
  = 0,\nonumber\\[3mm]
  1 - \int_\dsphere \kSh( (x_2,v_1)\to v_2)\dv_2
  = 1- \frac{\Sh\psioh(x_2,v_1)}{\psiih(x_2,v_1)}.
\end{align}
However, such operators $\Sh$ and $\Ch$ would preserve the singularities of
the transport equation (primarily propagation along straight lines)
and are therefore cannot be discrete. Their kernels in \eqref{align:kCh_kSh}
are infinite dimensional and cannot be reduced to (finite dimensional)
matrices. Any reduction to a matrix form involves approximations that
will modify the structure of the singularities in
\eqref{align:kCh_kSh} and render the integrals in
\eqref{align:defprobadjh} more complicated to estimate.


In any case, assuming that our construction \eqref{align:kCh_kSh}
defines a bona fide change of measures (i.e. $\Pa$ is absolutely continuous with respect to $\Ph$)
so that the Radon-Nidodym derivative restricted to $\{\omega\st
z_\tau\in\supp(\gbar)\}$ is well defined, then the latter is given by:
\begin{align}
  \begin{split}
    \left|\frac{\dPa}{\dPh}\right|&= \frac{\ip{\sh}{\psioh}}{\gbarh(x_\tau,v_{\tau-1})}\frac{s(z_0)}{\sh(z_0)}\betaah(x_0,\dots,x_\tau)\gammaah(z_1,\dots,z_{\tau-1}),\\
  \gammaah(z_1,z_2) :&= \left\{
  \begin{matrix}
    \frac{\thetavol{x_2}{v_1}{v_2}}{\thetavolh{x_2}{v_1}{v_2}},& x_2\in X\\
    \frac{\alpha(x_2)\Thetasurf{x_2}{v_1}{v_2}}{\alphah(x_2)\Thetasurfh{x_2}{v_1}{v_2}},& x_2\in\pX,
  \end{matrix}
  \right.\\
  & \gammaah(z_1,\dots,z_n) := \gammaah(z_1,\dots,z_{n-1})\gammaah(z_{n-1},z_n).\\
  \betaah(x_1,x_2) :&= \frac{E_\sigma(x_1,x_2)}{\Ehsigma(x_1,x_2)},\\
  &\betaah(x_0,\dots,x_n) := \betaah(x_0,\dots,x_{n-1})\betaah(x_{n-1},x_n).
  \end{split}
\label{align:RNah}
\end{align}
Since $(\theta,\Theta,\alpha,E_\sigma)\not=(\theta^h,\Theta^h,\alpha^h,\Ehsigma)$ a
priori, the telescopic cancellations in \eqref{align:RNzerovar} no
longer occur in \eqref{align:RNah}. We then set
\begin{align*}
  \xih :&= \xia\RNah,\mbox{ so that } \Exph{\xih} = \ip{u}{g}.
\end{align*}

When $\psioh\approx\psiostar$ we expect $\Var{\xih}\ll1$.  The rate of
convergence is studied here in the ideal setting where the following
assumptions are satisfied:
\begin{assumptions} Assume there exist $\rho,C>0$ such that, for all small enough $h$, 
  \begin{enumerate}
    \item[(i)] $|\ip{\sh}{\psioh}/\ip{u}{g} - 1|\leq Ch$,
    \item [(ii)] 
      \begin{align*}
        \left|\frac{\gbar}{\gbarh}-1\right| \,+\, \left|\frac{s}{\sh}-1\right|\,+\, \left|\gammaah(z_1,z_2)-1\right|\,+\,\left|\betaah(x_1,x_2)-1\right| \leq Ch
      \end{align*}
    \item[(iii)] $\Ph[\tau=n] \leq Ce^{-\rho n}$
    \item[(iv)] $\supp(\psiostar)=\supp(\psioh)$, and $\supp(\psiistar)=\supp(\psiih)$
  \end{enumerate}
  \label{assumptions:coefficient_error}
\end{assumptions} 
Assumptions $(i)$ and $(ii)$ follow if all approximations are $O(h)$
in the uniform norm, all coefficients are bounded below (on their
support), and the support of the true and approximate coefficients are
the same.  The third assumption $(iii)$ is standard in the transport
regime with not-too-small mean free path and simply indicates that
long-distance, multiple-scattering paths are improbable.  Assumptions
(ii), (iv) ensure $\RNah$ exists everywhere.

Verifying assumptions (i) and (ii) is extremely constraining. However,
in this idealized setting, we have the following theorem, whose proof
is postponed to section
\ref{subsection:proof_of_asymptotic_convergence}.
\begin{theorem}[Convergence in the asymptotic regime]
  Assume that we meet Assumptions \ref{assumptions:coefficient_error}.  Then as $h\to0$, 
  \begin{align*}
    \Var{\xih} &\leq \ip{u}{g}^2 C' h^2,
  \end{align*}
  for some $C'>0$ depending on $C$ and $\rho$.
  \label{theorem:asymptotic_convergence}
\end{theorem}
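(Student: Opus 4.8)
The plan is to use unbiasedness to turn the variance into a weighted $L^2$ distance from the mean, and then to control a product of order-$\tau$ factors, each within $O(h)$ of one, using the exponential tail on path length in Assumption (iii). Since $\Exph{\xih}=\ip{u}{g}$, I would first write $\Var{\xih}=\Exph{(\xih-\ip{u}{g})^2}$. Because $\xia$ (hence $\xih=\xia\RNah$) is supported on the set $D$ of detector-hitting paths, and because in this ideal regime the modified kernels \eqref{align:kCh_kSh} are normalized to integrate to one away from the detector, no photon is absorbed before reaching the detector, so $\Ph[D]=1$ and it suffices to estimate the integrand on $D$. (Assumption (iv) is what guarantees that $\RNah$, and the quantity $W_h$ below, are well defined with no vanishing denominators along every contributing path; if one does not assume $\Ph[D]=1$ exactly, one picks up an extra term $\ip{u}{g}^2\Ph[D^c]$ that must separately be shown to be $O(h^2)$.)

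Next I would make the reduction explicit. On $D$ one has $\pSa\equiv1$, so $\xia=\gbar(x_\tau,v_{\tau-1})$; substituting the expression \eqref{align:RNah} for $\RNah$ and using $\ip{s}{\psiostar}=\ip{u}{g}$ from \eqref{align:adjoint_first_fundamental_property}, the telescoping that produced the exact answer in the zero-variance chain now leaves a residual factor, and $\xih=\ip{u}{g}\,W_h$ with
\begin{align*}
  W_h &:= \frac{\ip{\sh}{\psioh}}{\ip{u}{g}}\cdot\frac{\gbar(x_\tau,v_{\tau-1})}{\gbarh(x_\tau,v_{\tau-1})}\cdot\frac{s(z_0)}{\sh(z_0)}\cdot\betaah(x_0,\dots,x_\tau)\cdot\gammaah(z_1,\dots,z_{\tau-1}).
\end{align*}
Therefore $\Var{\xih}=\ip{u}{g}^2\,\Exph{(W_h-1)^2}$, and the whole theorem reduces to the single estimate $\Exph{(W_h-1)^2}\le C'h^2$.

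To prove that estimate I would pass to logarithms. By Assumptions (i)--(ii) each of the three scalar factors differs from $1$ by at most $Ch$, while $\betaah(x_0,\dots,x_\tau)$ and $\gammaah(z_1,\dots,z_{\tau-1})$ are products of $\tau$ and $\tau-2$ such near-unit factors respectively; hence $W_h$ is a product of at most $c(1+\tau)$ factors lying in $[1-Ch,\,1+Ch]$. Using $|\log(1+\delta)|\le2|\delta|$ for $|\delta|\le1/2$, for $h$ small one gets $|\log W_h|\le\kappa h\tau$ for a suitable $\kappa$, and then $|W_h-1|\le\kappa h\tau\,e^{\kappa h\tau}$, so that $(W_h-1)^2\le\kappa^2h^2\tau^2e^{2\kappa h\tau}$. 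Taking expectations and invoking Assumption (iii),
\begin{align*}
  \Exph{(W_h-1)^2} &\le \kappa^2h^2\sum_{n\ge0}n^2e^{2\kappa h n}\,\Ph[\tau=n] \le \kappa^2Ch^2\sum_{n\ge0}n^2e^{-(\rho-2\kappa h)n},
\end{align*}
which for $h\le\rho/(4\kappa)$ is bounded by $C'h^2$ with $C'$ depending only on $C,\kappa,\rho$; multiplying by $\ip{u}{g}^2$ gives the claim.

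The \emph{main obstacle} is precisely this last interplay. A product of $\tau$ factors each of size $1+O(h)$ grows like $e^{O(h\tau)}$, and since path length $\tau$ is unbounded there is no uniform pointwise bound on $W_h-1$ of size $O(h)$: long multiple-scattering paths can amplify the per-step error arbitrarily. The estimate survives only in the averaged sense, because the exponential amplification $e^{2\kappa h\tau}$ of the squared error is beaten by the exponential path-length tail $\Ph[\tau=n]\le Ce^{-\rho n}$, and this cancellation requires $h$ to be small enough that $2\kappa h<\rho$ (so the summand decays). Everything else --- the reduction to $\Exph{(W_h-1)^2}$, the factorized form of $W_h$, and the logarithmic linearization --- is routine once the roles of Assumptions (i)--(iv) are tracked carefully.
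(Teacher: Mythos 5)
Your proof is correct and takes essentially the same route as the paper's: both write $\xih=\ip{u}{g}(1+\err)$ with exactly the same residual factor, bound $|\err|\lesssim h(\tau+1)e^{Ch(\tau+1)}$ from the per-factor $O(h)$ estimates of Assumptions (i)--(ii) (the paper via a binomial-type inequality, you via logarithms --- a cosmetic difference), and then beat the exponential amplification by summing against the tail $\Ph[\tau=n]\le Ce^{-\rho n}$ for $h$ small relative to $\rho$. Your explicit treatment of $\Ph[D]=1$ and of the role of Assumption (iv) in keeping the denominators of the weight nonvanishing is a point the paper leaves implicit, but it does not alter the argument.
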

This result shows that importance samplings with small variance can be
achieved provided that accurate approximations to adjoint transport
solutions are available.

The aim of all remaining sections and the introduction of the SAI method is
precisely an attempt at using an adjoint approximation that (i) is
inexpensive to calculate; and (ii) generates a measure that is both easy to
sample from and has small variance.  



\subsection{Reflecting boundaries without volume scattering}
\label{sec:surface}
Here we discretize the operator appearing in section
\ref{subsection:the_surface_limit}.  This operator arises in the limit
of zero volume interactions ($\sigma\to0$). We first present a
discretization of the adjoint solution in section
\ref{subsection:the_deterministic_adjoint_problem} and then show how
the adjoint solution can be used to obtain a non-analog MC algorithm
with small variance in section
\ref{subsection:implementation_of_surface_adjiont_approximations}.

\subsubsection{Surface-limit adjoint problem}
\label{subsection:the_deterministic_adjoint_problem}
In this limit, we have
$\psiistar\to\psiis$.  Here, at discretization level $h$, we
approximate $\psiih\approx\psiis$ and $\psioh\approx\psios$.  In this
section we assume the boundary is sufficiently smooth (of class
$C^3$).  

To simplify computation of our numerical solution we make the assumptions 
\begin{align*}
  \Thetasurf{x}{v}{v'} &= \one_{\nu_x\cdot v>0}(x,v)\kappa(x,v'), \qquad 
  g(z) = |\nu_x\cdot v| g_0(x),
\end{align*}
so that $\gbar(z) = g_0(x)$. We recall that $\one_A$ is the ``indicator'' function of the set $A$. The result is that $\psiis$ is then a function of position only.  This significantly improves the speed of solving the adjoint problem, as well as the memory requirements for using it.  Theoretical results in this paper do not need this assumption, which we make here as a matter of convenience.

We will now discretize the coefficients and approximate the operator appearing in section \ref{subsection:the_surface_limit}.  For $z_1\in\Gp$,
\begin{align*}
  \Ds\Cs\psiis(z_1) 
  &= \alpha(x_1) \int_{\nu_{x_1}\cdot v_2<0}\kappa(x_1,v_2)\psiis(z_+(x_1,v_2))\dv_2.
\end{align*}
Notice that $\Ds\Cs f$ is function depending only on $x$, and in fact only on the boundary values of $f$.  Since $\gbar$ depends only on $x$, $\psiis = \sum_{k=0}^\infty (\Ds\Cs)^k \gbar$ will depend only on $x$.  We thus define 
\begin{align*}
  \varphi(x) :&= \psiis\big\vert_\Gp(x,v).
\end{align*}
We find that $\varphi:\pX\to\Rone$ satisfies the equation
\begin{align*}
  \varphi &= Q\varphi + g_0,\qquad
  Q f(x_1) := \alpha(x_1)\int_{\nu_{x_1}\cdot v_2<0}\kappa(x_1,v_2)f(x_+(x_1,v_2))\dv_2.
\end{align*}

In discretizing this operator, and integrals over directions in general, we use the change of variables,
\begin{align}
  \begin{split}
    \int_{\nu_x\cdot v<0}f(z_+(x,v))\dv &= \int_\pX f(x',v)\pnuN(x,x')\d\mu(x'),\\
    \pnuN(x,x') &:= \frac{\nu_x\cdot(x'-x)}{|x'-x|^d}.
  \end{split}
  \label{align:boundary_change_of_variables}
\end{align}
The term $\pnuN$ is normal derivative (at $x$) of the free-space Green's function for the Laplacian.  One can show (see e.g. the section on double-layer potentials in \cite{F-PUP-95}) that for $x,x'\in\pX$, $\nu_x\cdot(x'-x)\lesssim |x'-x|^2$.  Therefore it is in fact an integrable function.  When $d=2$ we have more.

\begin{lemma}
  When $d=2$, if $\pX$ is $C^{k+2}$, then $\pnuN(x,x')$ is $C^k(\pX\times\pX)$.
  \label{lemma:pnuN_Ck}
\end{lemma}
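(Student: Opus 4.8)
The plan is to reduce the global statement to a purely local one and to isolate the only genuine difficulty, which is the behaviour across the diagonal $\{x=x'\}$. Away from the diagonal the claim is immediate: fix $(x_0,x_0')$ with $x_0\neq x_0'$; in any regular $C^{k+2}$ boundary chart the numerator $\nu_x\cdot(x'-x)$ and the denominator $|x'-x|^2$ are $C^{k+2}$ functions, and the denominator stays bounded below on a neighborhood of $(x_0,x_0')$ by continuity, so $\pnuN$ is $C^{k+2}$ (a fortiori $C^k$) there. A partition of unity on $\pX\times\pX$ then assembles the local conclusions, so the entire content of the lemma is the diagonal analysis.

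Near the diagonal I would work in a single arc-length chart $\gamma(s)$, which we may take of class $C^{k+2}$ with $|\gamma'|\equiv1$, so that $\gamma'\cdot\nu=0$ and $\nu(s)=R\gamma'(s)$ for a fixed rotation $R$; write $x=\gamma(s)$, $x'=\gamma(s+t)$. The key device is the divided-difference representation
\begin{align*}
  \gamma(s+t)-\gamma(s) &= t\,G(s,t), \qquad G(s,t):=\int_0^1\gamma'(s+ut)\,\d u,
\end{align*}
where $G$ is jointly $C^{k+1}$ by differentiation under the integral sign and satisfies $G(s,0)=\gamma'(s)$, $|G(s,0)|=1$. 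Setting $H(s,t):=\nu(s)\cdot G(s,t)$, which is $C^{k+1}$ since $\nu$ is $C^{k+1}$, the numerator becomes $\nu(s)\cdot(x'-x)=t\,H(s,t)$ and the denominator $|x'-x|^2=t^2\,|G(s,t)|^2$, so that $\pnuN = H(s,t)\,/\,(t\,|G(s,t)|^2)$.

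The crucial observation is that the apparent $t^{-1}$ singularity is spurious because the leading numerator term cancels: $H(s,0)=\nu(s)\cdot\gamma'(s)=0$. I would then invoke \emph{Hadamard's lemma} (Taylor with integral remainder): since $H$ is $C^{k+1}$ and vanishes at $t=0$, we may write $H(s,t)=t\,\tilde H(s,t)$ with $\tilde H(s,t):=\int_0^1(\p_tH)(s,ut)\,\d u$ of class $C^k$. This removes the offending factor and leaves $\pnuN = \tilde H(s,t)/|G(s,t)|^2$ on a diagonal neighborhood. As $|G(s,t)|^2$ is $C^{k+1}$ and equals $1$ at $t=0$, it is bounded below there and its reciprocal is $C^{k+1}$, so the product is $C^k$ in $(s,t)$; undoing the smooth change of variables $(s,s')\mapsto(s,s'-s)$ shows $\pnuN\in C^k$ across the diagonal, completing the patching.

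The main obstacle, and the only place the smoothness hypothesis is used sharply, is the derivative bookkeeping around this division. Starting from $\gamma\in C^{k+2}$ yields $\gamma'$, $\nu$, $G$, and $H$ at regularity exactly $C^{k+1}$, and the single application of Hadamard's lemma costs precisely one more derivative to produce $C^k$. Thus the two-derivative loss from $C^{k+2}$ to $C^k$ asserted in the lemma is accounted for exactly — one derivative spent to make sense of the divided difference $G$, one to extract the smooth factor $\tilde H$. As a consistency check, the resulting diagonal value is $\tilde H(s,0)/|\gamma'(s)|^2=\tfrac12\,\nu(s)\cdot\gamma''(s)$, i.e. one-half the signed curvature, which recovers the classical continuous extension of the double-layer kernel and refines the earlier bound $\nu_x\cdot(x'-x)\lesssim|x'-x|^2$ (its $k=0$ shadow).
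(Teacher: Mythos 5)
Your proof is correct and is essentially the paper's own argument: the paper also reduces to a local $C^{k+2}$ chart (a graph $(x_1,f(x_1))$ rather than your arc-length $\gamma$), and its key step --- writing $\nu_x\cdot(x'-x)=R(x_1,x_1')/\sqrt{1+(f'(x_1))^2}$ with $R$ the integral-form Taylor remainder, then noting that $R(x_1,x_1')(x_1'-x_1)^{-j}\in C^k$ for $j=0,1,2$ while $|x-x'|^2=(x_1-x_1')^2\left[1+\left(f'(x_1)+R/(x_1'-x_1)\right)^2\right]$ is $(x_1-x_1')^2$ times a $C^k$ function bounded below by $1$ --- is precisely your divided-difference-plus-Hadamard factorization carried out in a single stroke, with the tangency cancellation built in. The remaining differences are presentational: your arc-length chart gives the cleaner denominator $t^2|G(s,t)|^2$ and the diagonal value $\tfrac12\nu\cdot\gamma''$ for free, while your claim that numerator and denominator are $C^{k+2}$ off the diagonal slightly overcounts (the numerator is only $C^{k+1}$ since $\nu_x$ costs one derivative of the chart), which is harmless because only $C^k$ is needed.
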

The proof is postponed to Appendix \ref{sec:prooftechnical}.  We now discretize the operator $Q$.  First split the boundary into non-overlapping segments $\{\pX_j\}_{j=0}^{N_p-1}$ with $\pX_j$ centered at $x_j$, with measure $|\pX_j|\leq h$.  Denote by $Rf$ the (orthogonal) projection of $f$ onto the space of piecewise constant functions (constant on each segment $\pX_j$).  We also think of $Rf$ as a vector in $\Rone^{N_p}$ and $Rf_j$ its components.  Then, after the change of variables \eqref{align:boundary_change_of_variables} we have (at gridpoint $x_i$)
\begin{align}
  \begin{split}
    Qf(x_i) &= 
    \alpha(x_i)\int_\pX\kappa(x_i,\widehat{x-x_i})\pnuN(x_i,x)f(x)\d\mu(x)\\
    &\approx \alpha(x_i)\sum_{\substack{0\leq j\leq N_p-1\\j\neq
        i}}|\pX_j|\kappa(x_i,\widehat{x_j-x_i})\pnuN(x_i,x_j)f(x_j)
    \\
    &:= \sum_{j}\calQh_{ij}Rf_i.
  \end{split}
  \label{align:Q_approx}
\end{align}
This implicitly defines the matrix $\calQh$.  So long as $\pX$ is
$C^2$, the above integrand $L^1$ (bounded in two dimensions), hence we
are justified in approximating it as such.

We now define our discrete approximation to $\varphi$ as the piecewise constant function (vector) $\varphih$ solving
\begin{align}
  \varphih &= \calQh \varphih + R\gbar.
  \label{align:varphih_equation}
\end{align}
We then define approximations $\psiih\approx\psiistar$,
$\psioh\approx\psiostar$ by
\begin{align}
  \psiih(x,v) :&= \varphih(x), \quad \psioh(z_-(x,v)) := \psiih(x,v),\quad (x,v)\in\Gp.
  \label{align:psiih_psioh}
\end{align}
The next proposition is used to apply convergence theorems to the SAI
chain.
\begin{proposition}
  Suppose $d=2$, $\|\alpha-\alphah\|_{L^\infty}\lesssim h $, $\|\kappa-\kappah\|_{L^\infty}\lesssim h$. Then as operators $:L^1(\pX)\to L^\infty(\pX)$, we have 
  \begin{align*}
    \|Q - Q^h\| &\lesssim h.
  \end{align*}
  Furthermore, assuming $\|Q\|<1$, $\|Q^h\|<1$, then we have
  \begin{align*}
    \|\psiih-\psiis\|_{L^\infty(\Gp)}&\lesssim h.
  \end{align*}
  \label{proposition:Q_and_adjiont_approx}
\end{proposition}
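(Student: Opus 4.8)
The plan is to prove the estimate in two stages: first the operator bound $\|Q-Q^h\|_{L^1(\pX)\to L^\infty(\pX)}\lesssim h$ from a uniform estimate on the integral kernel, and then to transfer it to the solutions by a resolvent-perturbation argument. The transfer is clean because, on $\Gp$, both $\psiis$ and $\psiih$ are functions of $x$ alone, equal respectively to $\varphi$ and $\varphih$; hence $\|\psiih-\psiis\|_{L^\infty(\Gp)}=\|\varphih-\varphi\|_{L^\infty(\pX)}$, and it suffices to bound the latter.

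For the operator bound I would write $Q$ and $Q^h$, after the change of variables \eqref{align:boundary_change_of_variables}, as integral operators on $\pX$ with kernels
\[
  K(x,x')=\alpha(x)\kappa(x,\widehat{x'-x})\pnuN(x,x'),\qquad
  K^h(x,x')=\alphah(x)\kappah(x,\widehat{x'-x})\pnuN(x,x').
\]
The $L^1\to L^\infty$ norm of an integral operator is at most the essential supremum of its kernel, so I only need $\|K-K^h\|_{L^\infty(\pX\times\pX)}\lesssim h$. The key structural fact is that the geometric factor $\pnuN$ is common to both kernels and, since $d=2$ and $\pX$ is $C^3$, Lemma \ref{lemma:pnuN_Ck} gives $\pnuN\in C^1(\pX\times\pX)$, in particular $\|\pnuN\|_{L^\infty}<\infty$. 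Factoring $K-K^h=[\alpha\kappa-\alphah\kappah]\,\pnuN$ and writing $\alpha\kappa-\alphah\kappah=\alpha(\kappa-\kappah)+(\alpha-\alphah)\kappah$, the hypotheses $\|\alpha-\alphah\|_{L^\infty}\lesssim h$, $\|\kappa-\kappah\|_{L^\infty}\lesssim h$ together with $\|\alpha\|_{L^\infty}\le1$ and the uniform boundedness of $\kappah$ give $\|\alpha\kappa-\alphah\kappah\|_{L^\infty}\lesssim h$. Multiplying by $\|\pnuN\|_{L^\infty}$ yields $\|K-K^h\|_{L^\infty}\lesssim h$, hence $\|Q-Q^h\|_{L^1\to L^\infty}\lesssim h$.

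For the solution bound, write $\varphi=(I-Q)^{-1}g_0$ and $\varphih=(I-Q^h)^{-1}g_0^h$ (with $g_0^h=R\gbar$ the source in \eqref{align:varphih_equation}) and use the resolvent identity
\[
  \varphi-\varphih=(I-Q)^{-1}(g_0-g_0^h)+(I-Q)^{-1}(Q-Q^h)\varphih.
\]
Because $\|Q\|<1$ and $\|Q^h\|<1$ on $L^\infty(\pX)$ (and $\|Q^h\|$ stays uniformly below $1$ for small $h$, since $Q^h\to Q$), both resolvents are bounded on $L^\infty$ uniformly in $h$, so $\|\varphih\|_{L^\infty}$ is bounded. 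I would estimate the second term by
\[
  \|(I-Q)^{-1}(Q-Q^h)\varphih\|_{L^\infty}\le\|(I-Q)^{-1}\|_{\infty\to\infty}\,\|Q-Q^h\|_{L^1\to L^\infty}\,\|\varphih\|_{L^1},
\]
and bound $\|\varphih\|_{L^1}\le|\pX|\,\|\varphih\|_{L^\infty}$ to convert the operator estimate into $\lesssim h$. The source term is $\lesssim h$ since $g_0^h=R\gbar$ is the piecewise-constant projection of a Lipschitz $g_0$, so $\|g_0-g_0^h\|_{L^\infty}\lesssim h$. Collecting the two terms gives $\|\varphih-\varphi\|_{L^\infty}\lesssim h$.

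The main obstacle is the first stage, and precisely the use of the $d=2$ hypothesis: the $L^1\to L^\infty$ topology is available only because the double-layer kernel $\pnuN$ is bounded, which is the content of Lemma \ref{lemma:pnuN_Ck} in two dimensions but fails in $d=3$, where $\pnuN$ is merely integrable and one would instead have to work in $L^\infty\to L^\infty$ via $\sup_x\int|\pnuN(x,x')|\dmu(x')<\infty$. I would also keep in mind that the proposition compares the coefficient-perturbed operator; if the fully discrete matrix $\calQh$ is substituted for $Q^h$, the extra quadrature error is again $O(h)$ but is naturally controlled in the $L^\infty\to L^\infty$ norm (the dropped near-diagonal contribution is bounded by $\|K\|_{L^\infty}|\pX_i|\le\|K\|_{L^\infty}h$, and the off-diagonal part by the $C^1$ regularity of $\pnuN$), rather than in $L^1\to L^\infty$.
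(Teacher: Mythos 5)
Your proposal is correct and follows essentially the same route as the paper: the operator bound comes from a pointwise estimate of the kernel difference using the boundedness of $\pnuN$ in dimension two (Lemma \ref{lemma:pnuN_Ck}), and the solution bound comes from perturbing the inverse of $I-Q$ --- your resolvent identity is just the summed, closed form of the paper's Neumann-series argument, which expands $\varphi$ and $\varphih$ as $\sum_n Q^n$ and $\sum_n (Q^h)^n$ applied to the source and telescopes via $ab-\tilde a\tilde b=(a-\tilde a)b+\tilde a(b-\tilde b)$. The only substantive difference is that you treat the source mismatch $g_0$ versus $R\gbar$ explicitly (at the price of a regularity assumption on $g_0$ that the proposition does not state), whereas the paper's proof silently uses the projected source $R\gbar$ in both series.
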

\begin{proof}
The first inequality follows by a bound on the coefficients of $Q-Q^h$, keeping in mind that the apparent singularity is actually a bounded function in dimension $2$.  The second follows from $\varphi = \sum_{n=0}^\infty Q^nR\gbar$, $\varphih = \sum_{n=0}^\infty (Q^h)^nR\gbar$, and repeated application of relations similar to $ab-\tilde a\tilde b = (a-\tilde a)b + \tilde a(b-\tilde b)$.
\end{proof}
In our implementation, we have chosen to represent angular integrals as integrals over the boundary. This works for two reasons.  First, as our adjoint solution depends only on position it is convenient to evaluate these sums.  Second, if instead a discretization were chosen that was uniform in angle, then (with only finitely many angles) one would often miss the (small) detector in evaluation of the integral.

\subsubsection{Surface-adjoint approximations and non-analog chains}
\label{subsection:implementation_of_surface_adjiont_approximations}
The SAI chain makes use of the approximate surface adjoint solutions $\psiih$, $\psioh$ from \eqref{align:psiih_psioh}.  They are used almost exactly as in the zero-variance scheme.  

We define the transition kernels following the zero-variance recipe (section \ref{subsubsection:zero_variance}).  Keeping in mind $\psioh(z_-(x,v)) = \psiih(x,v)$, we have 
\begin{align*}
  \kCh(z\to x) :&= \delta(x - x_+(z)),
\end{align*}
so photons are cast from one boundary point to another with no absorption, exactly as they are in the continuous case.   No discretization error occurs with casting.  To define the scattering kernel $\kShstar$ we first recall the zero-variance kernel, which, since $\psios(z_-(x,v))=\psiis(x,v)=\varphi(x)$, takes the form:
\begin{align*}
  \kSstar( (x,v_{in})\to v) = \alpha(x)\kappa(x,v)\frac{\varphi(x_+(x,v))}{\varphi(x)}.
\end{align*}

We now discretize directions on every segment $\pX_j$. We recall that
$x_j$ is the center of $\pX_j$. Let $V_{ij}$ be the set of directions
best approximated by $v_{ij} := \widehat{x_j-x_i}$
\begin{align}\label{eq:Vij}
  V_{ij} := \big\{v\in\dsphere\st \arg\min_k|v-\widehat{x_k-x_i}|=j\big\}.
\end{align}
With $|V_{ij}|$ the measure of this set, we see that, roughly
speaking, $N(x_i,x_j) \approx |V_{ij}|/|\pX_j|$.  With this in mind,
we present our method for selecting direction.  Suppose we are at
$x_i'\in\pX_i$, with incoming direction $v_{in}$.  First we select a
target region $\pX_j$ using using the discrete pdf
\begin{align}
  j &\mapsto \alpha(x_i)\kappa(x_i,v_{ij})\pnuN(x_i,x_j)|\pX_j|\frac{\varphih(x_j)}{\varphih(x_i)}.
  \label{align:j_pdf}
\end{align}
Notice that we use the grid center-point $x_i$ instead of $x_i'$.  So,
for fixed $x_i'\in\pX_i$ we may calculate the pdf by taking the
component-wise product of the $i^{th}$ row of $\calQ$ with $\varphih$,
and then divide by $(\varphih)_i$.  Because of this and
\eqref{align:varphih_equation}, the above discrete function does
indeed sum (over $j$) to one, so it is a pdf.  In a second step, $v$
is selected from a uniform distribution on $V_{ij}$.  This defines a
direction leaving $x_i$, pointed into the domain, and toward $\pX_j$.
Note though that a shot leaving $x_i'$ in direction $v$ may not point
into the domain since the normal vectors to $\pX$ at $x_i'$ and $x_i$
are not exactly equal.  To correct for this, we introduce a family of
rotation operators $\{\calR_{x,y}\st (x,y)\in\pX\times\pX\}$, such
that $\calR_{x_i,x_i'}(v) = v'$, where $v'$ is a rotation of $v$ that
points from $x_i'$ into the domain.  In two dimensions, the obvious
choice (which we use) of $v'$ will ensure $v'\cdot\nu_{x_i'} =
v\cdot\nu_{x_i}$.  In three dimensions another reference vector
(besides the normal) must be pre-selected at each point. Since both
$x_i$ and $x_i'$ belong to $\pX_i$, the operator $\calR_{x_i,x_i'}$ is
close to the identity operator (this generates a ``small'' rotation).

We may now define our scattering transition kernel at arbitrary points
by referring back to the kernel at grid-points.  For $x_i'\in \pX_i$,
$v_j'\in V_{ij}$, our scattering transition kernel is
\begin{align}
  \begin{split}
    \kSh( (x_i',v_{in})\to v_j') &= \kSh( (x_i, \calR_{x_i,x_i'}^{-1}( v_{in}))\to \calR_{x_i,x_i'}^{-1}(v_j')), \\
    \mbox{where for }& v\in V_{ij}\\
   \kSh( (x_i,v_{in})\to v) :&= \alpha(x_i)\kappa(x_i,v_{ij})\pnuN(x_i,x_j)\frac{|\pX_j|}{|V_{ij}|}\frac{\varphih(x_+(x_i,v_{ij}))}{\varphih(x_i)}.
  \end{split}
  \label{align:kSh}
\end{align}

To put ourselves in the framework \eqref{align:RNah} we define the discretized coefficients:
\begin{align}\label{eq:coefsSAI}
  \alphah :&= R\alpha, \quad \gbarh := Rh,\quad \Ehsigma=E_\sigma\equiv1,\\
  \Thetasurfh{x_2}{v_1}{v_2} :&= \kSh( (x_2,v_1)\to v_2)\frac{1}{\alphah(x_2)} \frac{\varphih(x_2)}{\varphih(x_+(x_2,v_2))}.
\end{align}
Notice that if we ignore the rotation $\calR$, we have (for $x_i'\in\pX_i$, $v_j'\in V_{ij}$),
\begin{align*}
  \Thetasurfh{x_i'}{w}{v_j'} :&= \kappa(x_i, v_{ij})\pnuN(x_i,x_j)\frac{|\pX_j|}{|V_{ij}|}.
\end{align*}
However, in general, the ratio of $\varphih$ will not cancel due to rotation.  The Radon-Nikodym derivative $\RNah$ is then given by \eqref{align:RNah}.  

The next lemma shows that this transition kernel is 
a pdf.  
\begin{lemma}
  \begin{align*}
    \int_{\nu_{x_i'}\cdot v'<0}\kShstar( (x_i',v)\to v')\dv' &= 1-R\gbar(x_i).
  \end{align*}
  \label{lemma:trivial_integral}
\end{lemma}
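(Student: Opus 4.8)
The plan is to evaluate the integral in three stages: first undo the rotation $\calR_{x_i,x_i'}$ so that everything is referred to the grid center $x_i$; then use that the discretized kernel \eqref{align:kSh} is piecewise constant on the partition $\{V_{ij}\}_j$ of the inward hemisphere to turn the integral into a finite sum; and finally recognize that sum as a single application of the matrix $\calQh$ to $\varphih$, which collapses by the discrete adjoint equation \eqref{align:varphih_equation}.

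First I would remove the rotation. By the definition in \eqref{align:kSh}, $\kShstar((x_i',v)\to v') = \kShstar((x_i,\calR_{x_i,x_i'}^{-1}(v))\to \calR_{x_i,x_i'}^{-1}(v'))$. Substituting $w' = \calR_{x_i,x_i'}^{-1}(v')$ and using that $\calR_{x_i,x_i'}$ is orthogonal, hence measure-preserving on $\dsphere$, gives $\dv' = \dw'$. Since the rotation was built so that $v'\cdot\nu_{x_i'} = w'\cdot\nu_{x_i}$, the integration domain $\{v'\st \nu_{x_i'}\cdot v'<0\}$ is carried bijectively onto $\{w'\st \nu_{x_i}\cdot w'<0\}$. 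Thus the integral reduces to $\int_{\nu_{x_i}\cdot w'<0}\kShstar((x_i,\tilde v)\to w')\dw'$ with $\tilde v = \calR_{x_i,x_i'}^{-1}(v)$, i.e. everything is now evaluated at the grid center.

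Next comes the partition step. The sets $V_{ij}$ of \eqref{eq:Vij} tile $\{w'\st\nu_{x_i}\cdot w'<0\}$ up to measure zero, and on each $V_{ij}$ the kernel \eqref{align:kSh} is constant in its target argument, with $x_+(x_i,v_{ij}) = x_j$ so that $\varphih(x_+(x_i,v_{ij})) = \varphih(x_j)$. Hence the integral splits into $\sum_j |V_{ij}|$ times that constant value, and the factor $|V_{ij}|$ cancels against the $|V_{ij}|^{-1}$ in \eqref{align:kSh}, leaving $\sum_j \alpha(x_i)\kappa(x_i,v_{ij})\pnuN(x_i,x_j)|\pX_j|\,\varphih(x_j)/\varphih(x_i)$. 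Comparing term by term with the definition of $\calQh$ in \eqref{align:Q_approx}, each summand is exactly $\calQh_{ij}\varphih(x_j)/\varphih(x_i)$, so the whole integral equals $(\calQh\varphih)_i/\varphih(x_i)$. Substituting the discrete adjoint equation \eqref{align:varphih_equation}, $\varphih = \calQh\varphih + R\gbar$, gives $(\calQh\varphih)_i = \varphih(x_i) - (R\gbar)_i$, and dividing by $\varphih(x_i)$ yields $1 - R\gbar(x_i)/\varphih(x_i)$, the discrete counterpart of the termination probability $\gbar/\psiistar$ in \eqref{align:defprobadj}, which is the asserted value.

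I expect the main obstacle to be the geometric bookkeeping of the first two stages rather than the algebra of the last. I must check that $\calR_{x_i,x_i'}$ is genuinely orthogonal and maps the inward hemisphere at $x_i$ onto that at $x_i'$, so the change of variables is legitimate and the integration domain transforms correctly, and that the $\{V_{ij}\}_j$ partition the inward hemisphere without overlap so that no direction is counted twice or omitted. Once these facts are secured, the cancellation of $|V_{ij}|$ and the telescoping through $\calQh$ and \eqref{align:varphih_equation} are immediate, so the heart of the argument is precisely the discrete adjoint relation playing the role that \eqref{align:defprobadj} plays in the continuous zero-variance chain.
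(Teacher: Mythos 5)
Your proposal is correct and follows essentially the same route as the paper's proof: undo the rotation $\calR_{x_i,x_i'}$ using measure preservation and the hemisphere-matching property, exploit the piecewise constancy of $\kSh$ on the partition $\{V_{ij}\}_j$ to reduce the integral to the sum $\sum_j \calQh_{ij}\varphih_j/\varphih_i$, and collapse it via the discrete equation \eqref{align:varphih_equation}. The paper makes the same final identification of $(\calQh\varphih)_i/\varphih_i$ with $1-R\gbar(x_i)$ that you do (valid off the detector where $R\gbar=0$, and on the detector where $\alpha=0$ forces $\varphih_i=R\gbar_i$), so no substantive difference remains.
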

\begin{proof}
For $x_i'\in\pX_i$, 
\begin{align*}
  \int_{\nu_{x_i'}\cdot v'<0}\kShstar( (x_i',v_{in})\to v')\dv' &= \int_{\nu_{x_i'}\cdot v'<0}\kShstar( (x_i,\calR_{x_i,x_i'}^{-1}(v_{in}))\to \calR_{x_i,x_i'}^{-1}(v'))\dv'\\
  &= \int_{\nu_{x_i}\cdot v<0}\kShstar( (x_i,\calR_{x_i,x_i'}^{-1}(v_{in}))\to v)\dv.
\end{align*}
This follows since rotations preserve measure and $\{v'\st \nu_{x_i'}\cdot \calR_{x_i,x_i'}^{-1}(v')<0\} = \{v\st \nu_{x_i}\cdot v<0\}$ by our choice of $\calR$.
To integrate the last term, we notice that $\kShstar( (x_i,w)\to v)$ is constant for $v\in V_{ij}$.  Therefore, using \eqref{align:varphih_equation},
\begin{align*}
  \int_{\nu_{x_i}\cdot v<0}\kShstar( (x_i,w)\to v)\dv &= \sum_{j=0}^{N_p-1} \alpha(x_i)\kappa(x_i,v_{ij}) \pnuN(x_i,x_j) |\pX_j|\frac{\varphih(x_j)}{\varphih(x_i)}\\
  &= \frac{\sum_{j=0}^{N_p-1}\calQ_{ij}\varphih_j}{\varphih_i} = 1-R\gbar,
\end{align*}
independent of the incoming direction $w$.
\end{proof}

In the
best of cases, the SAI chain meets the hypothesis of theorem
\ref{theorem:asymptotic_convergence}.
\begin{theorem}
  Assume that for $h$ small enough, there exist $C,\rho>0$ such that
  \begin{enumerate}
    \item[(i)] We have the bounds $|\alpha/\alphah - 1|\leq Ch$, $|\Theta/\Thetah - 1|\leq Ch$
    \item[(ii)] The boundary $\pX$ is $C^3$ and strictly convex
    \item[(iii)] $\Ph[\tau=n]\leq Ce^{-\rho n}$ for some $\rho>0$
    \item[(iv)]  $\supp(\varphi)\subset\supp(\varphih)$
  \end{enumerate}
  Then the hypothesis of Theorem \ref{theorem:asymptotic_convergence} are met and 
  \begin{align}\label{eq:varapprox}
    \Var{\xih} &\lesssim h^2.
  \end{align}
  \label{theorem:kSa_approximation}
\end{theorem}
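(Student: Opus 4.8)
The plan is to verify that hypotheses (i)--(iv) of the present theorem imply the four conditions of Assumptions~\ref{assumptions:coefficient_error}; once this is done, Theorem~\ref{theorem:asymptotic_convergence} yields \eqref{eq:varapprox} immediately. Throughout I work in the surface setting of Section~\ref{sec:surface}, so $d=2$ and $\sigma\equiv0$; in particular $\psiostar=\psios$, $\psiistar=\psiis$, and $\Ehsigma=E_\sigma\equiv1$. Two of the four conditions are essentially free. Condition (iii) of Assumptions~\ref{assumptions:coefficient_error} coincides with hypothesis (iii). Condition (iv) follows from hypothesis (iv) via the defining relations \eqref{align:psiih_psioh}, which tie $\supp(\psiih),\supp(\psioh)$ to $\supp(\varphih)$ and $\supp(\psiis),\supp(\psios)$ to $\supp(\varphi)$; I would stress that the inclusion $\supp(\psiis)\subset\supp(\psiih)$ is the one actually needed, as it is exactly what guarantees $\Pa\ll\Ph$ and hence that $\RNah$ is defined on $D$.

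For condition (ii) I would dispatch the four summands one at a time. The factor $\betaah\equiv1$ since $\Ehsigma=E_\sigma\equiv1$ (see \eqref{eq:coefsSAI}), so that term drops out. Because every interaction now takes place on $\pX$, the factor $\gammaah$ appears only in its boundary form $\alpha\Theta/(\alphah\Thetah)$; writing this as $(\alpha/\alphah)(\Theta/\Thetah)$ and applying hypothesis (i) to each factor gives $|\gammaah-1|\lesssim h$. The detector term is a projection error: $\gbarh=R\gbar$ is the piecewise-constant projection of the $C^3$ (hence Lipschitz) function $g_0$, so $|\gbar-\gbarh|\lesssim h$, and since $\gbar$ is bounded below on its support, $|\gbar/\gbarh-1|\lesssim h$. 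In the surface setup the source is known exactly, so $\sh=s$ and $|s/\sh-1|=0$ (otherwise the same projection bound applies).

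The substantive step is condition (i), the $O(h)$ accuracy of the normalization $\ip{\sh}{\psioh}$, and here I would lean on Proposition~\ref{proposition:Q_and_adjiont_approx}. Hypothesis (i) gives $|\alpha/\alphah-1|\le Ch$ and $|\Theta/\Thetah-1|\le Ch$; since $\alpha,\kappa$ are bounded and $\Theta(x,v\to v')=\one_{\nu_x\cdot v>0}\kappa(x,v')$, these upgrade to $\|\alpha-\alphah\|_{L^\infty}\lesssim h$ and $\|\kappa-\kappah\|_{L^\infty}\lesssim h$, the coefficient hypotheses of the Proposition. The contraction requirements $\|Q^h\|<1$ and $\|Q\|<1$ I would extract from hypothesis (iii): geometric decay of $\Ph[\tau=n]$ is exactly the statement that boundary reflections terminate geometrically (so $\|Q^h\|<1$ uniformly in small $h$), and then the operator bound $\|Q-Q^h\|\lesssim h$ from the first part of the Proposition forces $\|Q\|<1$ as well. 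The Proposition then delivers $\|\psiih-\psiis\|_{L^\infty(\Gp)}\lesssim h$; since casting is exact, $\kCh(z\to x)=\delta(x-x_+(z))$, propagation along rays transfers this to $\|\psioh-\psios\|_{L^\infty}\lesssim h$. Combining $\ip{s}{\psios}=\ip{u}{g}$ from \eqref{align:adjoint_first_fundamental_property} with the add-and-subtract estimate $|\ip{\sh}{\psioh}-\ip{s}{\psios}|\le\|\sh-s\|_{L^1}\|\psioh\|_{L^\infty}+\|s\|_{L^1}\|\psioh-\psios\|_{L^\infty}\lesssim h$ and dividing by the fixed constant $\ip{u}{g}$ gives condition (i).

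The main obstacle lives inside Proposition~\ref{proposition:Q_and_adjiont_approx}: that the quadrature \eqref{align:Q_approx} defining $\calQh$ is genuinely $O(h)$ accurate. This is precisely where hypothesis (ii) ($\pX$ of class $C^3$ and strictly convex) does the real work. Strict convexity makes the casting map $x\mapsto x_+(x,v)$ single-valued and removes grazing and shadowing degeneracies, while the $C^3$ regularity renders the double-layer kernel $\pnuN$ a $C^1$ function of $(x,x')$ by Lemma~\ref{lemma:pnuN_Ck}, so the composite integrand is smooth enough for a midpoint rule on segments of size $\le h$ to incur only $O(h)$ error. A secondary subtlety I would flag is the rotation family $\{\calR_{x_i,x_i'}\}$: off grid-points the $\varphih$ ratios in $\Thetah$ no longer cancel exactly, so the assumed bound $|\Theta/\Thetah-1|\le Ch$ in hypothesis (i) implicitly relies on these rotations being $O(h)$-close to the identity, which indeed holds because $x_i$ and $x_i'$ share the segment $\pX_i$ of diameter $\le h$.
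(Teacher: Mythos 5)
Your overall skeleton --- verify the four conditions of Assumptions \ref{assumptions:coefficient_error} and then invoke Theorem \ref{theorem:asymptotic_convergence} --- is exactly the paper's, and your handling of conditions (i), (iii), (iv), of $\betaah\equiv1$, and of the detector/source ratios agrees with the paper's proof (the paper likewise obtains condition (i) from hypothesis (i) together with Proposition \ref{proposition:Q_and_adjiont_approx}). The genuine gap is in your one-line treatment of $\gammaah$. You read hypothesis (i) as a bound on the ratio of $\Theta$ to the \emph{full} discrete kernel and conclude $|\gammaah-1|\lesssim h$ immediately. But by \eqref{eq:coefsSAI} and \eqref{align:kSh}, $\Thetah$ evaluated at $x_i'\in\pX_i$ with outgoing direction $v_j'\in V_{ij}$ equals $\kappa(x_i,v_{ij})\,\pnuN(x_i,x_j)\,|\pX_j|/|V_{ij}|$: besides the coefficient $\kappa$, it contains the purely geometric quadrature factor $\pnuN(x_i,x_j)\,|\pX_j|/|V_{ij}|$. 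Reading hypothesis (i) so that it also covers this factor assumes away the crux of the theorem and leaves hypothesis (ii) with no role in verifying Assumptions \ref{assumptions:coefficient_error}(ii). The paper instead uses hypothesis (i) only for the coefficient ratios and then \emph{proves} the geometric statement \eqref{align:V_pX_ratio}, namely that $|V_{ij}|/\bigl(|\pX_j|\,\pnuN(x_i,x_j)\bigr)=1+O(h)$. That proof rests on two ingredients absent from your argument: the identity $|V_{ij}|=\int_{\pX_j}\pnuN(x_i,x')\d\mu(x')$ (the change of variables \eqref{align:boundary_change_of_variables} applied to the sets \eqref{eq:Vij}), and the ratio bound \eqref{align:pnuN_ratio}, which requires $\pnuN$ to be $C^1$ (Lemma \ref{lemma:pnuN_Ck}, needing the $C^3$ boundary) and, crucially, bounded away from zero --- which is exactly where strict convexity enters.

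Your attempted relocation of hypothesis (ii) does not repair this. You place its ``real work'' inside Proposition \ref{proposition:Q_and_adjiont_approx} (quadrature accuracy of $\calQh$), but that proposition feeds only into Assumptions \ref{assumptions:coefficient_error}(i), the accuracy of the scalar normalization $\ip{\sh}{\psioh}$; it says nothing about the pointwise ratios $\gammaah$ that enter the Radon--Nikodym derivative path by path, which is what Assumptions \ref{assumptions:coefficient_error}(ii) demands. Moreover, an additive operator bound of the type $\|Q-Q^h\|\lesssim h$ cannot by itself yield the multiplicative bound $|\gammaah-1|\lesssim h$: to pass from additive to multiplicative control one must divide by $\pnuN(x_i,x_j)$, and without the strict-convexity lower bound (note $\pnuN$ vanishes identically on flat portions of a boundary) that division is uncontrolled. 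The missing step is therefore precisely \eqref{align:V_pX_ratio}; once it is established, the rest of your argument closes.
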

The proof of the theorem can be found in Appendix \ref{sec:prooftechnical}.
\begin{remark}
  Assumptions (i) and (ii) are here to simplify the derivation of the
  result, which may hold in more general settings.  In general,
  assumptions (i), (iv) (which together imply
  $\supp(\varphih)=\supp(\varphi)$) require our discrete mesh to be
  chosen to coincide well with the support of the coefficients.  Moreover, assumption (i) requires that the rotation caused by $\calR$ causes little change in the value of $\varphih$.  Smoothness assumptions on $\varphi$ would provide this.  
  Assumption (iii) means that multiple scattering is not dominant and
  holds in our model cases since we have
  $\alpha\equiv0$ on the left/right sides and the sky.
\end{remark}
\begin{remark} \label{rem:Oh}
  Often physical coefficients such as the detector support are discontinuous and  do not match up exactly with the grid.  If $\gbarh\equiv1$ on its support, $\varphih$ has a very large jump at the boundary of this support.  This means that the mismatch in $\varphih$ due to rotations will sometimes be large.  Also, a non-convex boundary will cause issues at points where the curvature changes sign.  These difficulties all occur at a finite number of points, and lead to an error contribution of $O(h)$ in \eqref{eq:varapprox}.  
\end{remark}
\begin{remark}
  \label{remark:Oh}
  Note that when $\Thetasurf{x}{v}{v'}=0$ for $v$ or $v'$ a grazing
  angle (i.e., $|v\cdot \nu_x|$ or $|v'\cdot\nu_x|$ close to $1$), we
  verify that the rotations $\calR$ can be set to identity for small
  $h$ and one can verify that $\kSh( (x_i,v_{in})\to v)$ in
  \eqref{align:kSh} also generates a pdf for $x'_i$ close to $x_i$.
\end{remark}

\subsection{The Surface Adjoint Importance (SAI) method}
\label{subsection:sai}
The SAI method is a modular method using the surface-adjoint approximation.  
In the absence of volume interaction, SAI becomes the zero-variance technique when $h\to0$ that we saw in Theorem \ref{theorem:kSa_approximation}. In the presence of limited amounts of volume scattering, we will show that SAI properly modified (a ``heuristic module'' is added) can be used for significant variance reduction and speedup.

For the rest of the paper, we define $\dPh$ as the measure obtained by approximating the zero-variance chain in the {\em absence} of volume scattering, i.e. with setting $\sigma=0$. It is thus defined via \eqref{align:RNah} with the coefficients given in \eqref{eq:coefsSAI}. As we saw in section \ref{subsection:the_deterministic_adjoint_problem}, solving the radiosity equation for the adjoint solution in the absence of volume interaction is much less costly than solving a full transport equation accounting for volume scattering.

When $\sigma>0$, neglecting volume scattering as we did in our definition of $\dPh$ causes problems and $\RNah$ does not always exist.  This occurs due to the fact that the analog chain sends some photons to the detector after experiencing volume interactions, but the chain generated by $\dPs$ does not. Thus $\dPs$, or its approximation $\dPh$, cannot be used directly for variance reduction as they provide biased estimates of  $\ip{g}{u}$.


\subsubsection{SAI-Heuristic Importance Sampling Scheme}
\label{sec:saiheu}

For these reason, we propose the following regularized scheme:  For $q=(q_v,q_s)$, with $q_v\in(0,1]$, $q_s\in[0,1]$, construct the measure  
\begin{align}\label{align:ourscheme}
  \dPq :&= (1-q_s)\dPh + q_s\dPheu.
\end{align}
This means we will fire photons using heuristic (replacing the
latter by any unbiased scheme would work) with probability $q_s$, and
use $\dPh\approx\dPs$ with probability $1-q_s$.  
When $q_s=0$ we do not account for volume scattering and thus cannot
obtain an unbiased estimator.  When $q_v=1$ we are using the SAI
approximation combined with survival biasing.  

The algorithm based on \eqref{align:ourscheme} is our main example of
modular calculation of the adjoint solution. Here, volume and boundary
scattering are uncoupled. When few particles undergo both volume and
boundary scattering, then the above measure can have a very small
variance. Here, we have defined $\dPh$ as an approximation to the
zero-variance measure $\dPs$ if only surface scattering were
present. The volume scattering $\dPheu$ is still handled in a very
crude fashion. A more accurate calculation of the adjoint solution
accounting for volume scattering would provide larger variance
reductions. In the presence of highly scattering clouds for instance,
$\dPheu$ would have to be replaced by a more accurate approximation of 
volume scattering. Yet, the structure of \eqref{align:ourscheme} would
remain the same. 

We then see that three parameters need to be chosen: $h$, $q_s$, and
$q_v$. The regularizing parameters $q_s$, $q_v$ should be chosen as a
function of the mean free path.  Ideally we could choose $q_s=0$
($q_v$ has no effect then) when the MFP is infinite.  With finite MFP
we must use $q_s>0$ (in fact close to one, even when MFP$\approx16$
times the domain diameter).  As MFP decreases, both $q_s$ and $q_v$
should decrease to allow for more analog shots that account for
complex volume or volume + boundary interactions.  The parameter $h$
should then be chosen to maximize the figure of merit: Small values of
$h$ generate large computational cost (due to the expense of the
deterministic solve) for limited variance reductions since a
significant variance comes from shots that interact with the volume.
Simulations show that very small values of $h$ yield no measurable improvement in variance.  
\begin{figure}[ht!]
  \begin{center}
  \includegraphics[width=0.87\textwidth, clip=true, trim = 8cm 5cm 5cm 5cm]{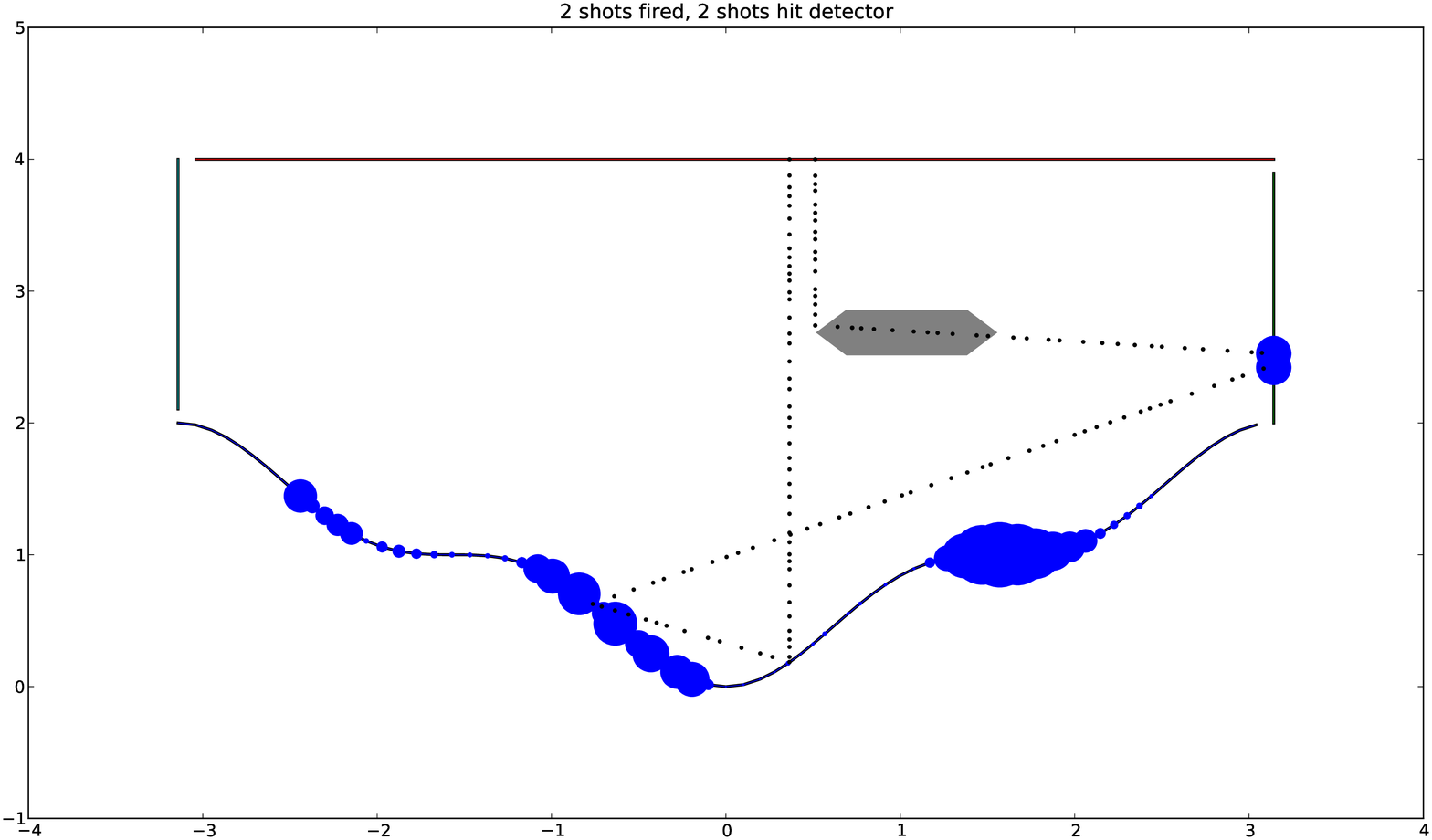}
  \caption{Boundary and volume interactions handled by different modules}
  \label{figure:sai_boundary_volume_interactions}
  \end{center}
\end{figure}

Note that the asymptotic regime is no longer a good description of the
above method. Instead, different subsets of paths to the detector are
chosen, and different methods are used to increase the probability of
their occurrence. Figure \ref{figure:sai_boundary_volume_interactions}
shows both boundary and volume photons being directed toward the
detector.  The boundary photon was directed using $\dPh$.  The
relative size of the adjoint solution on the boundary is indicated by
relative dot size.  Note that the adjoint solution allows us to
account for complex boundary interactions.

\medskip

The details of the SAI algorithm are as follows.  First we produce $N$ draws $\{\omega^i\}_{i=1}^N$ from $\Pr_q$ using algorithm \ref{alg:sai}, then estimate $\ip{u}{g} \approx N^{-1}\sum_{i=1}^N \xiq(\omega^i)$ where $\xiq = \xia\RNaq$ (expression derived below).  When we draw from $\kShstar$, $\kCh$ in algorithm \ref{alg:sai}, shots are never absorbed until they reach the support of the discretized detector $\gbarh$.
\begin{algorithm}
  \begin{algorithmic}[1]
    \caption{SAI}
    \label{alg:sai}
    \STATE With probability $1-q_s$, set $switch\leftarrow$\TRUE
    \IF{switch}
      \STATE Draw $z_0$ from a density $\propto s(z)\psioh(z)$
      \WHILE{$x_j\notin\supp(\gbarh)$}
        \STATE Draw $x_{j+1}\sim\kCh(z_j\to \cdot)$ (In this case we simply set $x_{j+1}\leftarrow x_+(z_j)$)
        \STATE Draw $v_{j+1}\sim \kShstar( (x_{j+1},v_j)\to\cdot) \propto \Thetasurfh{x_{j+1}}{v_j}{\cdot}\psioh(x_{j+1},\cdot)$
        \STATE Set $j\leftarrow j+1$
      \ENDWHILE
      \STATE Set $v_j\leftarrow\dead$
    \ELSE
      \STATE Draw $\omega\sim\dPheu$ using algorithm \ref{alg:heuristic}
    \ENDIF
  \end{algorithmic}
\end{algorithm}

We now derive an expression for $\RNaq$.  Whenever $\dPh=0$ (say the
photon had a volume interaction), $\RNaq$ (restricted to
$(x_\tau,v_{\tau-1})\in\supp(\gbar)$) is given by
\begin{align*}
  \RNaq &= \frac{1}{q_s}\RNaheu=\frac{1}{q_s}\frac{\RNasb}{\RNheusb},
\end{align*}
where $\RNasb$ is given by \eqref{align:RNasb}, and $\RNheusb$ is given by \eqref{align:heuristic_RNheusb}.
When $\dPh\neq0$, we have
\begin{align*}
  \begin{split}
    \RNaq &= 
     \frac{\RNah}{(1-q_s) + q_s\RNheuh}= \frac{\RNah}{(1-q_s) + q_s\RNheusb\RNsbh}.
  \end{split}
\end{align*}
So we need expressions for $\RNah$ and $\RNsbh=\RNsba\RNah$.   Note that $\RNah$ is given by \eqref{align:RNah}, but simplifies since $\Eh\equiv1$, and the fact that when $\dPh\neq0$ we have necessarily taken a path such that all $x_i\in\pX$, which makes the expression for $\gammaah$ (a term in $\RNah$) ``simple''.  Also note that $\RNsba$ is given by \eqref{align:RNasb} and that $\gammaasb$ is ``simple'' for the same reason $\gammaah$ was.  We therefore have
\begin{align*}
  \RNah &= \frac{\ip{\sh}{\psioh}}{\gbarh(x_\tau,v_\tau)}\frac{s(z_0)}{\sh(z_0)} E_\sigma(x_0,\cdots,x_\tau)\\
  &\quad\times\frac{\alpha(x_1)\Thetasurf{x_1}{v_0}{v_1}\cdots \alpha(x_{\tau-1})\Thetasurf{x_{\tau-1}}{v_{\tau-2}}{v_{\tau-1}}}{\alphah(x_1)\Thetasurfh{x_1}{v_0}{v_1}\cdots \alphah(x_{\tau-1})\Thetasurfh{x_{\tau-1}}{v_{\tau-2}}{v_{\tau-1}}}, \\
  \RNsbh &= \frac{\ip{\sh}{\psioh}}{\gbarh(x_\tau,v_\tau)}\frac{s(z_0)}{\sh(z_0)} E_{\sigma_s}(x_0,\cdots,x_\tau)\\
  &\quad\times\frac{\alphasb(x_1)\Thetasurf{x_1}{v_0}{v_1}\cdots \alphasb(x_{\tau-1})\Thetasurf{x_{\tau-1}}{v_{\tau-2}}{v_{\tau-1}}}{\alphah(x_1)\Thetasurfh{x_1}{v_0}{v_1}\cdots \alphah(x_{\tau-1})\Thetasurfh{x_{\tau-1}}{v_{\tau-2}}{v_{\tau-1}}}.
\end{align*}

Even though these expressions are complicated to write explicitly, we
emphasize that their computational
cost is rather minimal compared to the overall cost of solving a
transport equation by Monte Carlo.

\subsubsection{Optimal parameter selection}
\label{subsubsection:optimal_parameters}
Here we outline two procedures to pick values of $q_s$ close to
optimal. To simplify, we assume that $q_v$ is fixed in the heuristic
module with $\dPheu$.  As before, we assume that $\gbar\equiv1$ on its
support so that $\xi=\one_D$ although general $\gbar$ could be handled
with additional hypotheses. 

Note that 
\begin{align*}
  \RNaq &= \left[ (1-q_s)\RNha + q_s\RNheua \right]^{-1},
\end{align*}
and that the quantity to minimize with respect to $q_s$ is thus
\begin{align*}
  \Expq{(\xiq)^2} = \Expq{\one_D^2\RNaq^2} &= \int_\Omega \one_D\RNaq \dPa.
\end{align*}
We split the above into integrals over $B$ and $D\setminus B$, where
$B$ is the set of paths of particles that reach the detector without
undergoing volume scattering (but can have many interactions with the
boundary).  We assume for simplicity that $\supp(\Ph)=B$, i.e., that
discretization effects do not significantly modify the support of
$\Ps$ (otherwise, $B$ should be thought as the support of $\Ph$).

On $B$, we find that for any subset $B'\subset B$, we have $\Ph(B')\gg
\Pheu(B')=\Pa(B')$ (at least when neglecting discretization
effects). The reason is that the paths reaching the detector after
interacting with the boundary have very high probability density
$\dPh$ (this is exactly the role of $\dPh$: sending particles
interacting with the boundary toward the detector). However, for such
paths, $\Pheu(B')=\Pa(B')$ since heuristic sampling only modifies
those paths that undergo volume scattering.  For $\omega\in B$, we
thus find that
\begin{displaymath}
  \RNaq(\omega) = \dfrac{1}{1-q_s} \RNah(\omega) - \eps(\omega),
\end{displaymath}
with $0\leq\eps(\omega)\ll1$. On $D\setminus B$, $\dPh=0$ since paths
with volume scattering have vanishing weight under the boundary
measure $\dPh$. As a consequence, we have that 
\begin{align}
  \begin{split}
    \Expq{(\xiq)^2} &\approx \frac{1}{1-q_s}\int_{B\cap D} \RNah\dPa + \frac{1}{q_s}\int_{D\setminus B}\RNaheu\dPa. \\
  \end{split}
  \label{align:Exp_xiq_squared_2}
\end{align}
The above can be optimized over $q_s$ once the two integrals are known.  Since they are both expectations (with respect to $\dPa$), we can estimate them with an analog simulation, or send particles using $\dPq$, and use importance sampling.  In this way our optimal choice of $q_s$ can be refined as more particles are sent.  We find
\begin{align*}
 (q_s)_{\rm opt,1} &\approx \dfrac{\sqrt{\alpha}}{1+\sqrt{\alpha}},\quad\alpha := \left( \int_{D\setminus B}\RNaheu\dPa  \right)\left( \int_{B\cap D} \RNah\dPa  \right)^{-1}.
\end{align*}

\medskip

Alternatively, and because calculating the integrals in the definition
of $\alpha$ is still difficult, we may obtain another guess using
only \emph{a priori} estimates of $\Pa[D]$ and $\Pa[B\g D]$.  This can
be done in the simplified importance sampling framework of section
\ref{subsection:importance_sampling}, specifically the regime
\eqref{align:stealing_expectation}.  

First we define $b\in\Rone$ as the constant that makes
$\Pq[B]=b\Pa[B]$.  Second, we recall that $\Pa[B]=\Pheu[B]$. Thus,
using \eqref{align:ourscheme}, we find
\begin{align*}
  b\Pa[B] = \Pq[B] &= (1-q_s)\Ph[B] + q_s\Pa[B].
\end{align*}
Using the approximation $\Ph[B]\approx 1$ (neglecting discretization
effects), we have
\begin{align*}
  b\approx \frac{1-q_s}{\Pa[B]} + q_s,
\end{align*}
and thus
\begin{align}
  q_s &\approx \frac{1-b\Pa[B]}{1-\Pa[B]} = \frac{1-(b\Pa[D])\Pa[B\g D]}{1-\Pa[D]\Pa[B\g D]}.
  \label{align:q_s_b}
\end{align}
Assuming that $\dPq(\omega)=b\dPa(\omega)$ on $B$, we are
approximately in the regime \eqref{align:changemeas}.  We thus choose
$b_{\rm opt}=\Pa[D]^{-1}\beta_{\rm opt}$ minimizing
\eqref{align:stealing_expectation} with $\beta_{\rm
  opt}=(\sqrt\gamma(\sqrt\gamma+\sqrt a))^{-1}$, where $a$ and
$\gamma$ are defined in \eqref{align:agamma}. Then, we find $(q_s)_{\rm opt,2}$ in
terms of $b_{\rm opt}$ using \eqref{align:q_s_b}.  See section
\ref{subsection:varred} for an implementation of this algorithm.

\section{Numerical Results}
\label{section:numerics}

In this section, we implement the scheme described in the preceding section and sample chains numerically based on the measure $\dPq$ for several values of $q$ and $h$. We compare the variance of the method with the survival biasing measure $\dPsb$. Several details of the implementation are described in section \ref{sec:numvol}.

The rotation $\calR$ described in section \ref{subsection:implementation_of_surface_adjiont_approximations} were found to have an extremely limited effect on the calculated solutions. Even with coarse grids, neglecting the rotations (setting them to the identity matrix)  led to less than $0.1\%$ bias (the bias was so small that it could have been error due to not firing enough shots).  See also remark \ref{remark:Oh}. So the presented results are obtained with the rotations set to identity.

We first introduce the notion of speedup (a.k.a. figure of merit) in
section \ref{subsection:fom}. We consider two speedups depending on
whether the cost of the deterministic adjoint solution is included or
not. Then in section
\ref{subsubsection:var_reduction_no_volume_interactions}, we show the
influence of the discretization parameter $h$ on the convergence of
the variance to $0$ (and the speedup to infinity) in the absence of
volume scattering ($\sigma\equiv0$) and compare the numerical results
with theoretical predictions. Finally, in section
\ref{subsection:varred}, we include volume scattering and obtain
significant variance reductions by appropriate choice of the
regularization parameters $(q_s,q_v)$. Moreover, we show that large speedups
are obtained for a relatively large band of values of $(q_s,q_v)$, whose
optimal values very much depends on geometry/scattering/absorption and has to be obtained
fairly empirically.

\subsection{Speedup (figure of merit)}
\label{subsection:fom}
For all of these methods, define the approximation after $N$ random draws 
\begin{align*}
  I_N(\xi) :&= \frac{1}{N}\sum_{n=1}^N \xi(\omega_n).
\end{align*}
The RMS estimation error $\varepsilon$ is given by
\begin{align*}
  \varepsilon(\xi) :&= \sqrt{\Exp{|I_N(\xi)-\ip{u}{g}|^2}} = \sqrt{\frac{\Var{\xi}}{N}}.
\end{align*}
For a given error level $\varepsilon$, the required number of MC draws is then $N(\eps,\xi) := \Var{\xi}/\eps$.
The required simulation time $T(\varepsilon,\xi)$ for one estimation of $\Exp{\xi}$ is given by
\begin{align*}
  T(\varepsilon,\xi) :&= T_0(\xi) + \tau(\xi)N = T_0(\xi) + \frac{\tau(\xi)\Var{\xi}}{\varepsilon^2},
\end{align*}
where $T_0(\xi)$ is the time needed to compute the deterministic adjoint solution (e.g. at level $h$ when $\xi=\xih$), and $\tau(\xi)$ is the expected time for one draw using the appropriate measure for the random variable $\xi$.  We foresee the use of SAI in situations where the boundary remains fixed, but the volume changes (due to e.g. moving clouds over a fixed surface).  We therefore consider the time for $m$ simulations using one boundary,
\begin{align*}
  T(\varepsilon,\xi, m) :&= T_0(\xi) + m\tau(\xi)N = T_0(\xi) +m\frac{\tau(\xi)\Var{\xi}}{\varepsilon^2},
\end{align*}

Then we compare schemes through the ``Speedup.''
\begin{align*}
  \mbox{Speedup}(\xi_1,\xi_2,\eps, m) :&= \frac{T(\varepsilon,\xi_1,m)}{T(\varepsilon,\xi_2,m)} = \frac{\eps^2T_0(\xi_1) + m\tau(\xi_1)\Var{\xi_1}}{\eps^2T_0(\xi_2) + m\tau(\xi_2)\Var{\xi_2}}.
\end{align*}

For a deterministic approximation of $\psiis$, we expect $T_0(\xi)\approx C(\xi)h^{-2(d-1)}$.  We in fact measure (with $d=2$) $T_0(\xih)\approx 0.017h^{-2}$.  Our ``benchmark'' scheme is survival biasing.  Since $\xisb$ requires no deterministic solution, the relevant ratio is
\begin{align*}
  \mbox{Speedup}(\xisb, \xiq,\eps,m) &= \frac{m\tau(\xisb)\Var{\xisb}}{\left( \frac{\eps}{h} \right)^2C  + m\tau(\xiq)\Var{\xiq}}.
\end{align*}

We measured speedup when either $m=10$ or $m=\infty$ (``Ignoring deterministic solve'').  

\subsection{Variance reduction without volume interactions}
\label{subsubsection:var_reduction_no_volume_interactions}
When the volume mean-free-path is infinite, we can show that the variance approaches zero as $h\to0$.

First consider the case of a flat boundary.  Photons leave the sky, hit the boundary, then either reach the detector or are ``absorbed'' by the sky or sides.  Neglecting edge/detector overlap we are in the regime of assumptions \ref{assumptions:coefficient_error}.  Therefore, combining lemma \ref{proposition:Q_and_adjiont_approx} with theorem \ref{theorem:asymptotic_convergence} we expect approximately $O(h^2)$ convergence.  In practice we observed $O(h^{1.6})$ convergence.  See figure \ref{figure:MFPinfinity_var_h}.
\begin{figure}[ht!]
  \begin{center}
  \includegraphics[width=0.43\textwidth]{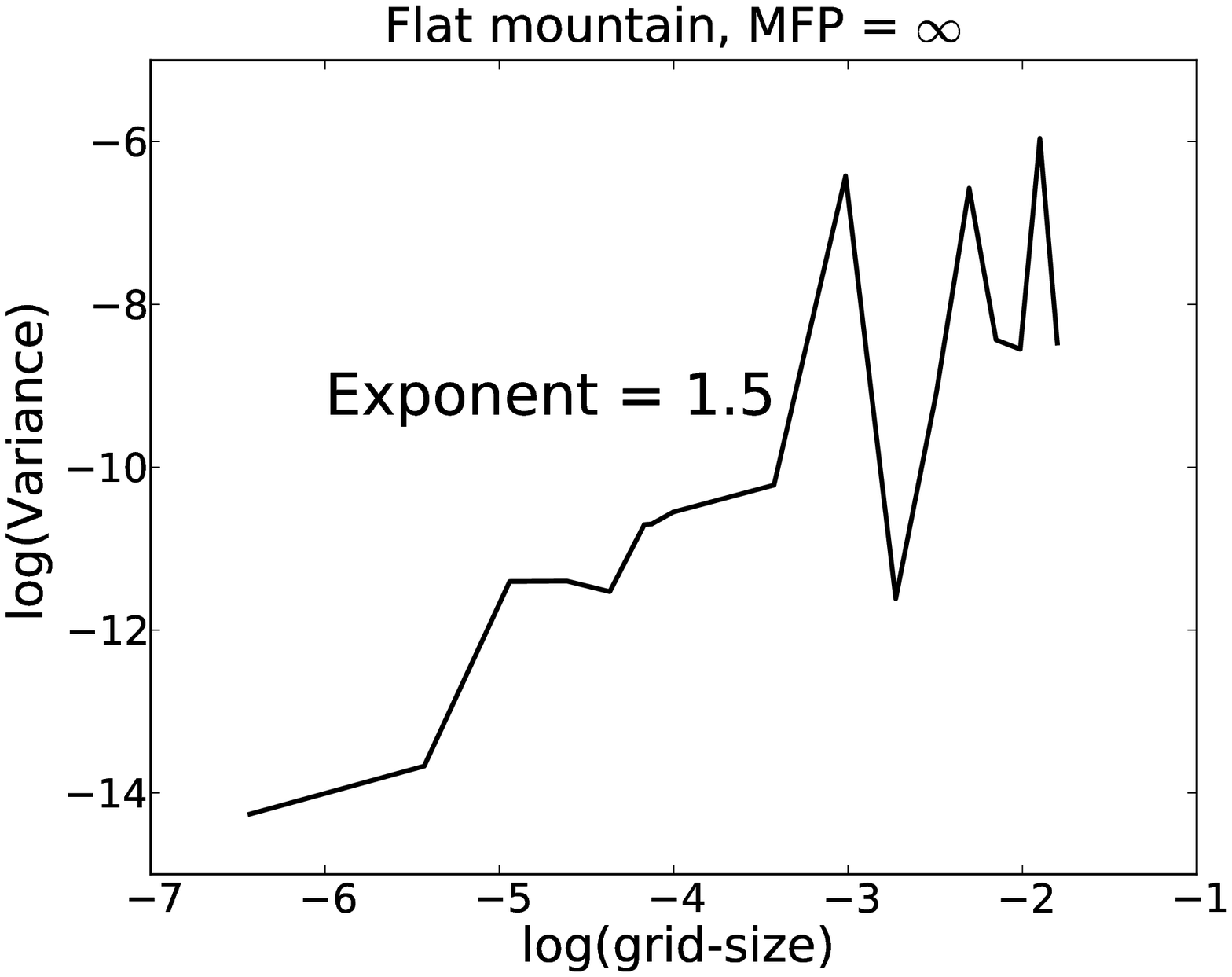}
  \includegraphics[width=0.43\textwidth]{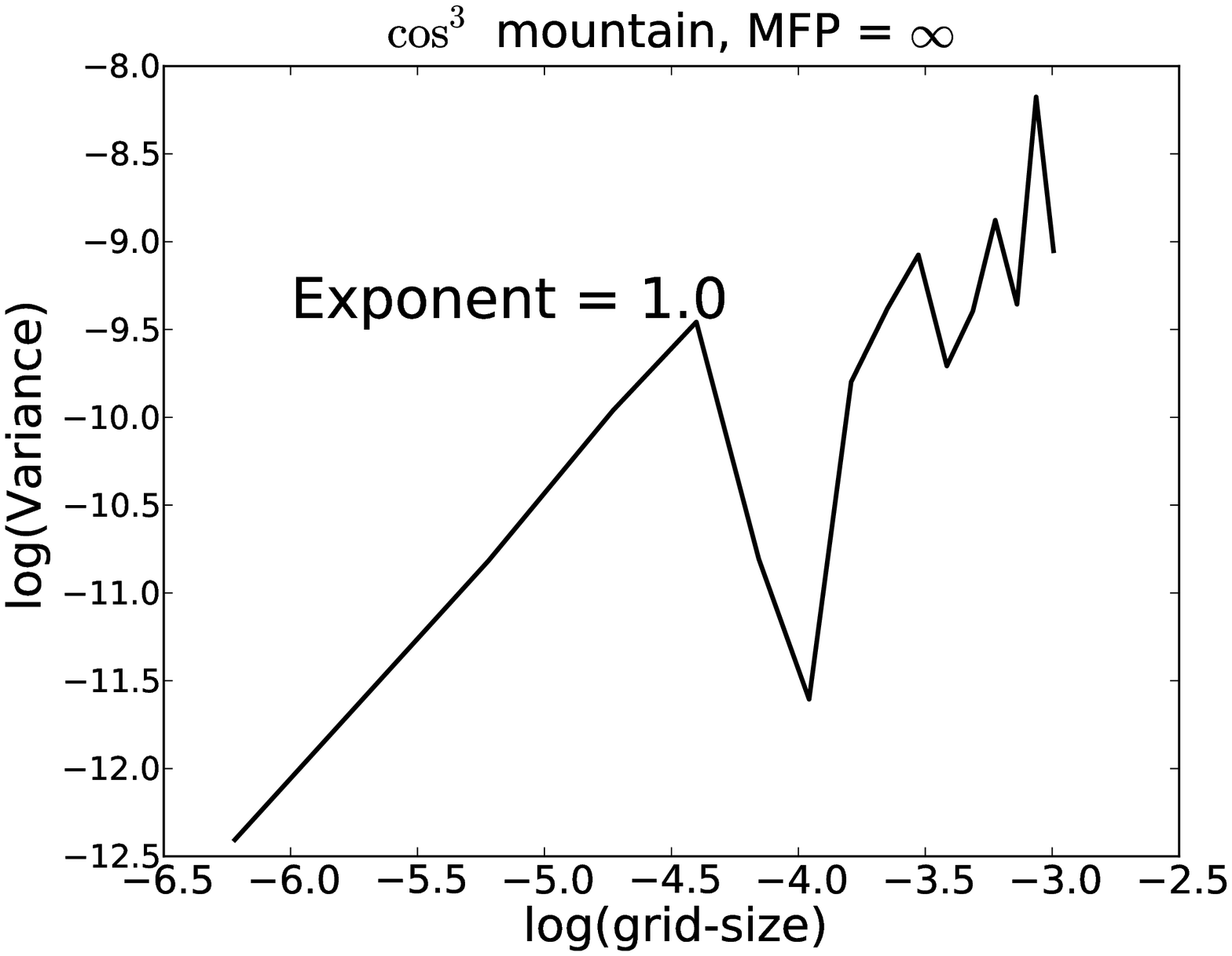}
  \caption{$O(h^\alpha)$ variance behavior.  Complicated $\cos^3$ mountain results in slower convergence.}
  \label{figure:MFPinfinity_var_h}
\end{center}
\end{figure}

Second, when the more complex $\cos^3$ boundary of figure \ref{figure:sai_boundary_volume_interactions} is used, we require $q_s>0$ for the following reason:  Suppose a photon finds itself at the point $x=(-0.4,1+\cos^30.4)$.  On a fine boundary, there is a point nearby that has a direct line to the detector.  Therefore, when $\Thetasurfh{x}{v'}{v}$ will allow for shots directly to the detector.  One can see this by noticing that in figure \ref{figure:boundary_singularities_fine_coarse} the point $(-0.4, 1+\cos^30.4)$ is shaded darkly in the fine boundary (right), indicating that it sees direct illumination from the detector.  On a coarse boundary (left) this is not the case.  In practice we observed a major contribution to variance due to these effects, and $O(h)$ convergence overall.  See figure \ref{figure:MFPinfinity_var_h} and  also Remark \ref{rem:Oh}.  Also, (not pictured) we observe that with a flat boundary and fine discretization is used, the optimal regularization parameter is $q_s=0$.  When a coarse discretization or $\cos^3$ boundary is used, the optimal $q_s\neq0$.
\begin{figure}[ht!]
    \begin{center}
    \includegraphics[width=0.43\textwidth]{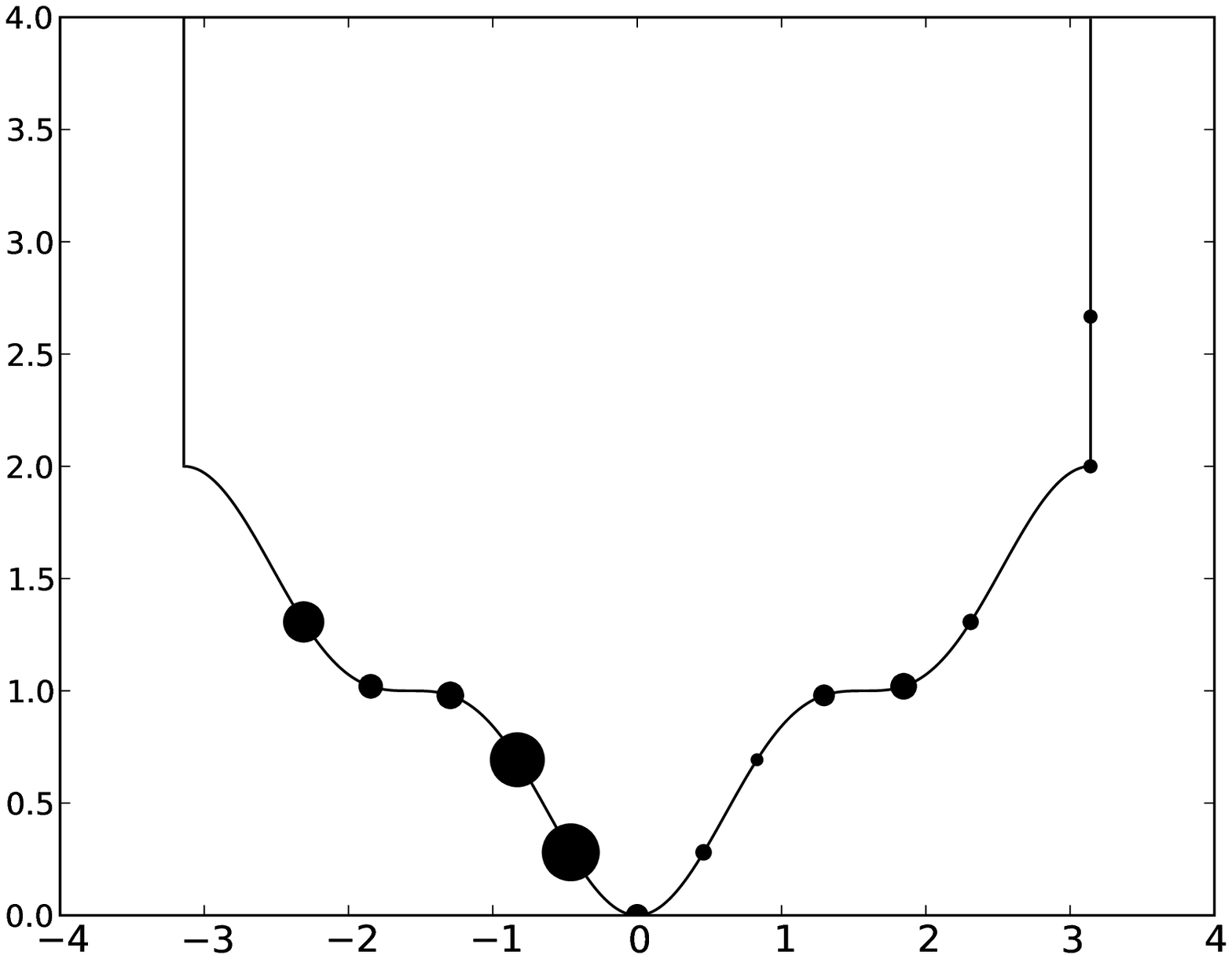}
    \includegraphics[width=0.43\textwidth]{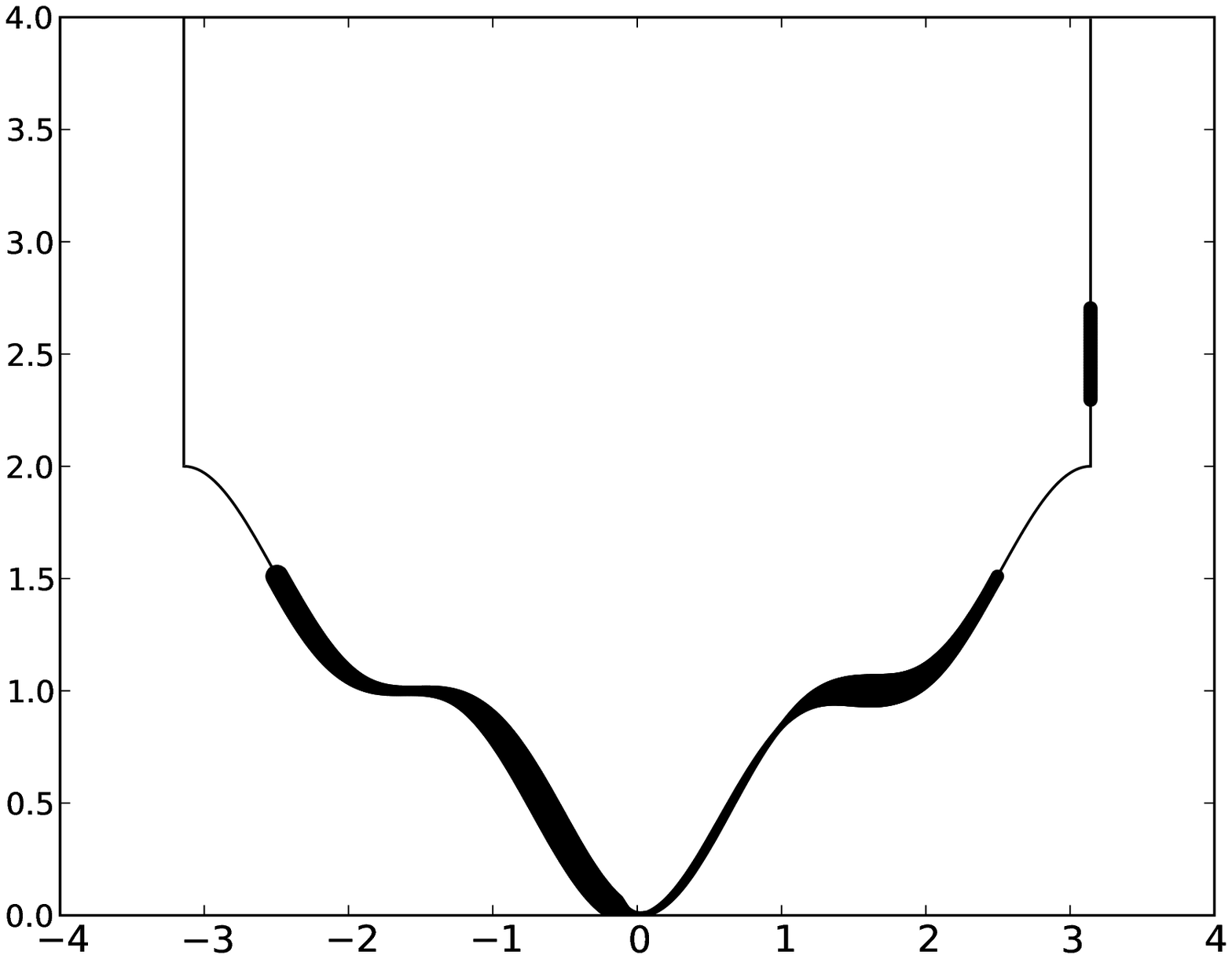}
  \label{figure:boundary_singularities_fine_coarse}
  \caption{Boundaries discretized on coarse (left) and fine (right)
    scales.  Dots indicate adjoint flux at mesh points.  Size is
    relative to flux strength.}
  \end{center}
\end{figure}
\subsection{Variance reduction with volume interactions}
\label{subsection:varred}
To analyze the variance of the SAI chain in the presence of volume interactions, we adopt the modularity viewpoint explained in section \ref{subsection:importance_sampling}.  
Note that even when the error $\ip{u}{g} - \ip{\psioh}{\sh}$ is high, we still get good variance reduction.  See figure \ref{figure:determinstic_error}.  This emphasizes the point that the quality of the deterministic solve is not so important in a modular scheme.
\begin{figure}[ht!]
    \begin{center}
  \includegraphics[width=0.43\textwidth]{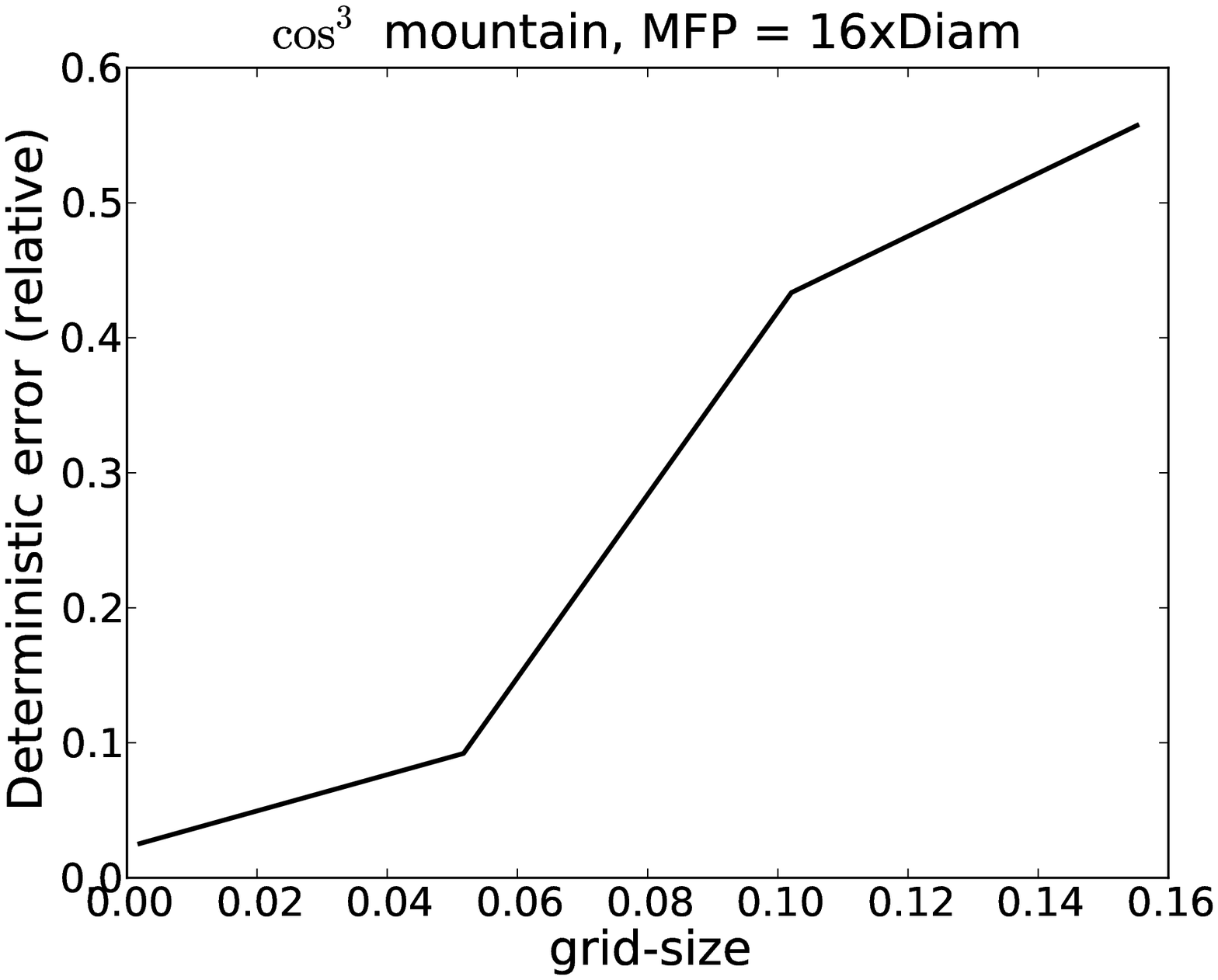}
  \includegraphics[width=0.43\textwidth]{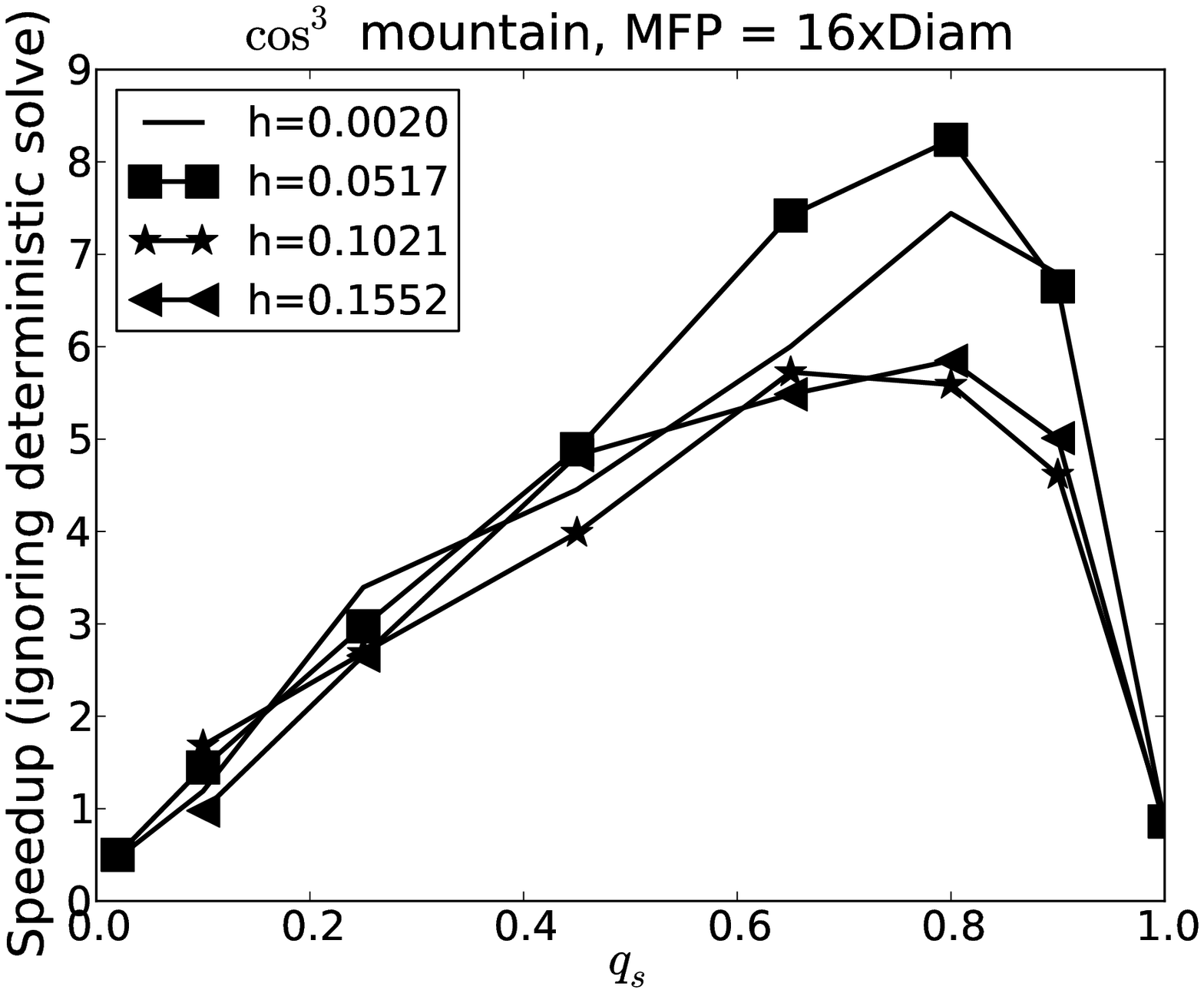}
  \caption{$|\ip{u}{g} - \ip{\sh}{\psioh}|/\ip{u}{g}$ is generally lower for smaller $h$.  However, speedup is still very good even for large $h$.}
  \label{figure:determinstic_error}
  \end{center}
\end{figure}

Our implementation swept both $q_s$ and $q_v$.  As expected, we see
decreasing speedup with increasing volume scattering strength $\sigma$.  See figure \ref{figure:speedup_with_heuristic}.
\begin{figure}[ht!]
    \begin{center}
  \includegraphics[width=0.43\textwidth]{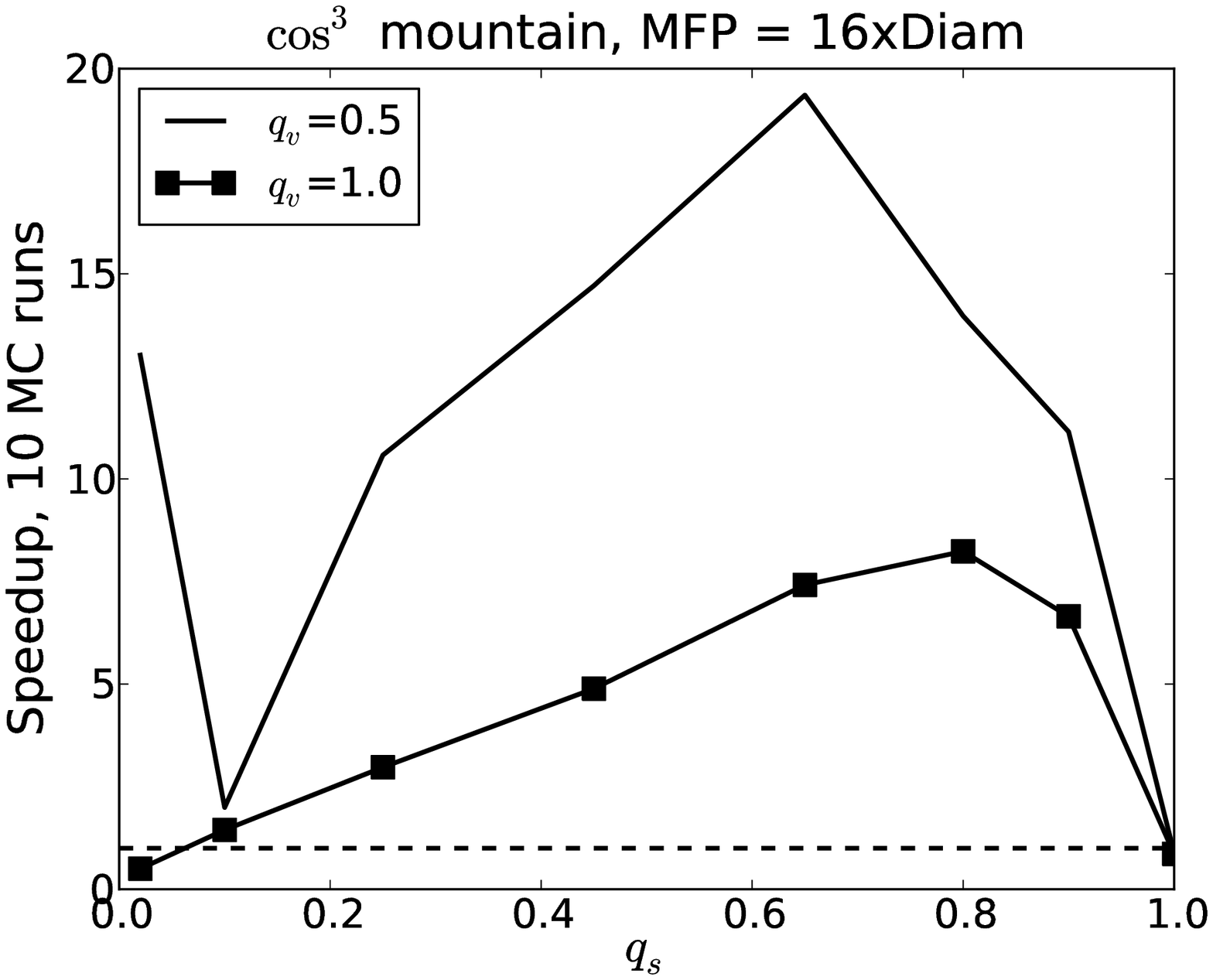}
  \includegraphics[width=0.43\textwidth]{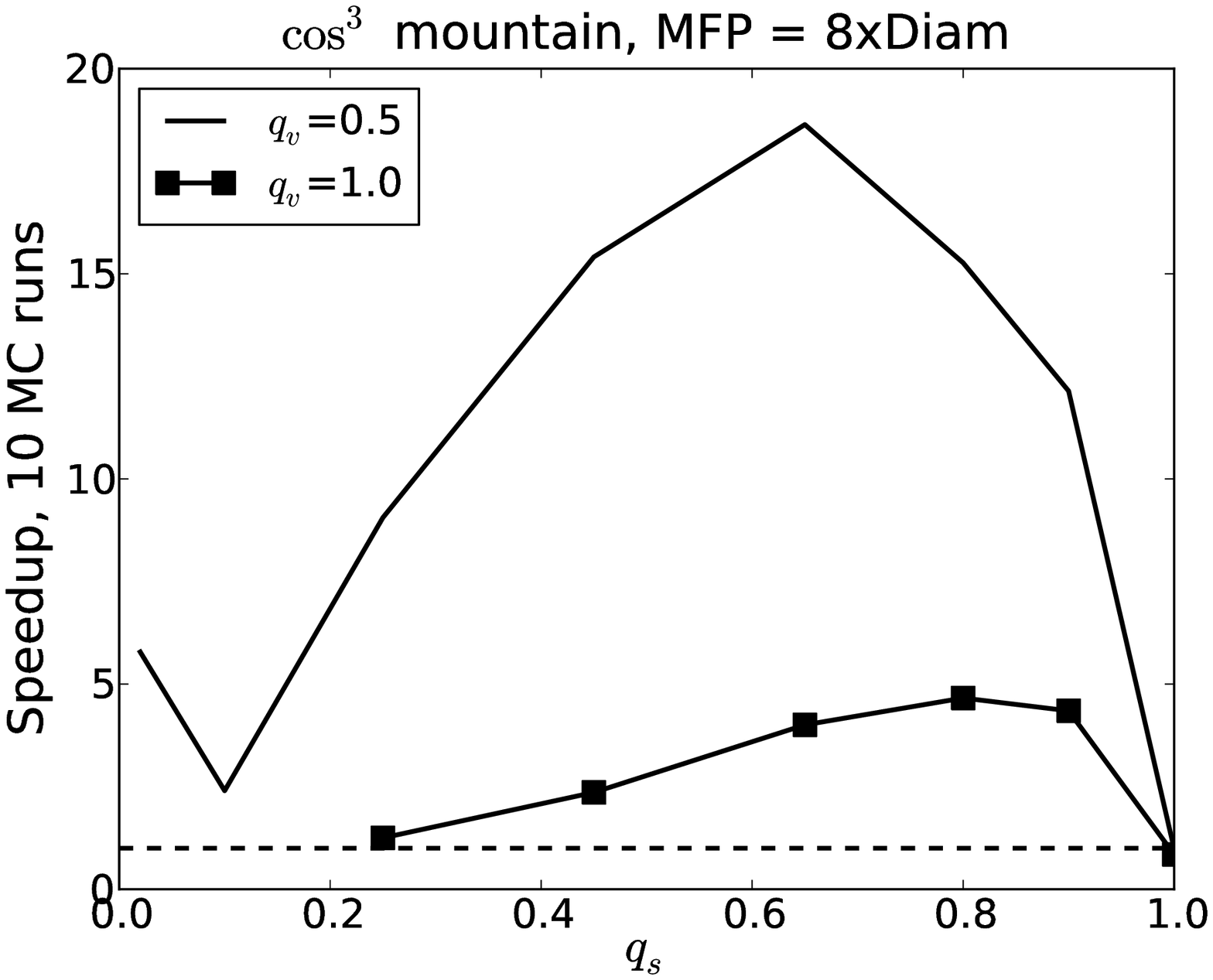}\\
  \includegraphics[width=0.43\textwidth]{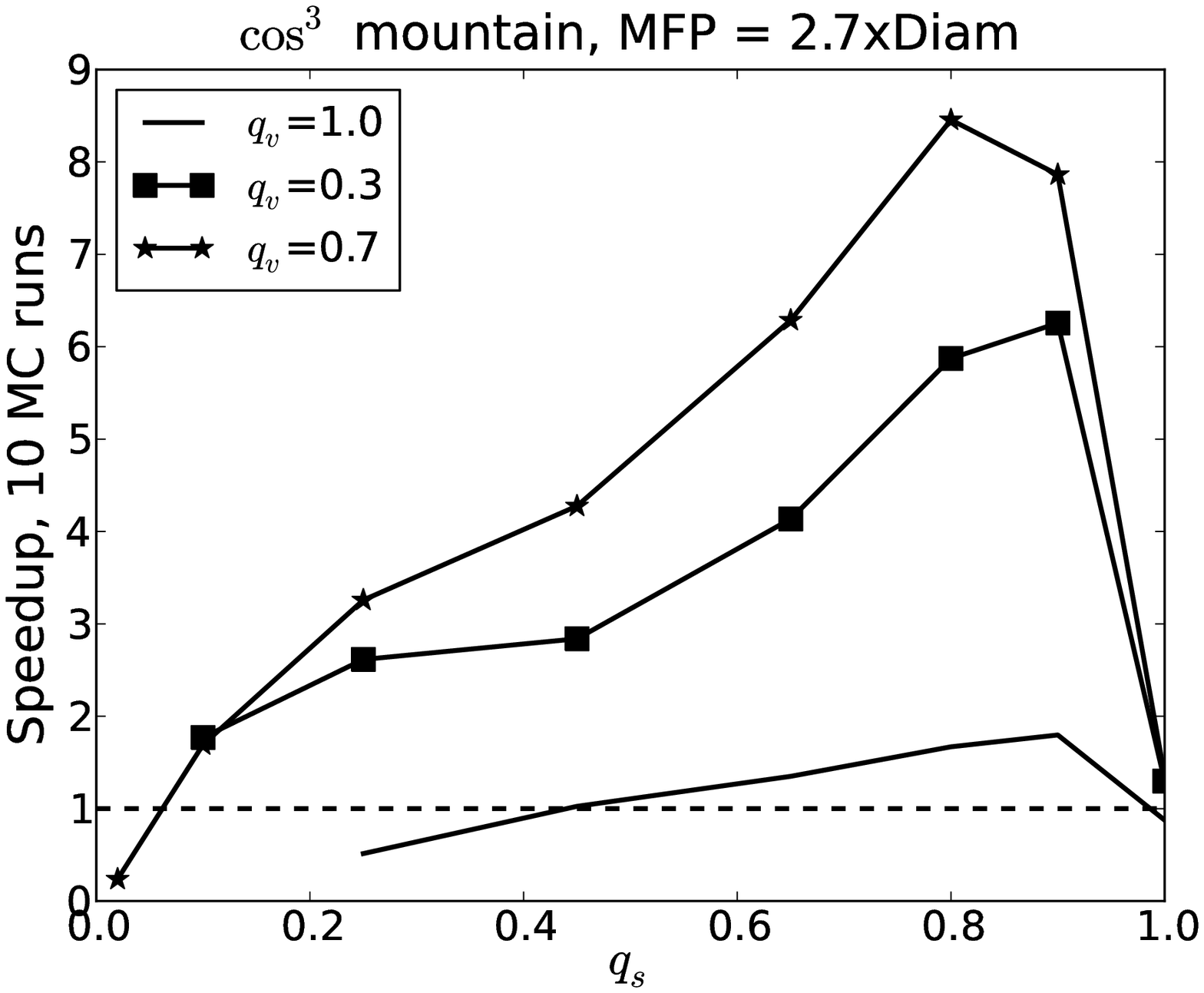}
  \includegraphics[width=0.43\textwidth]{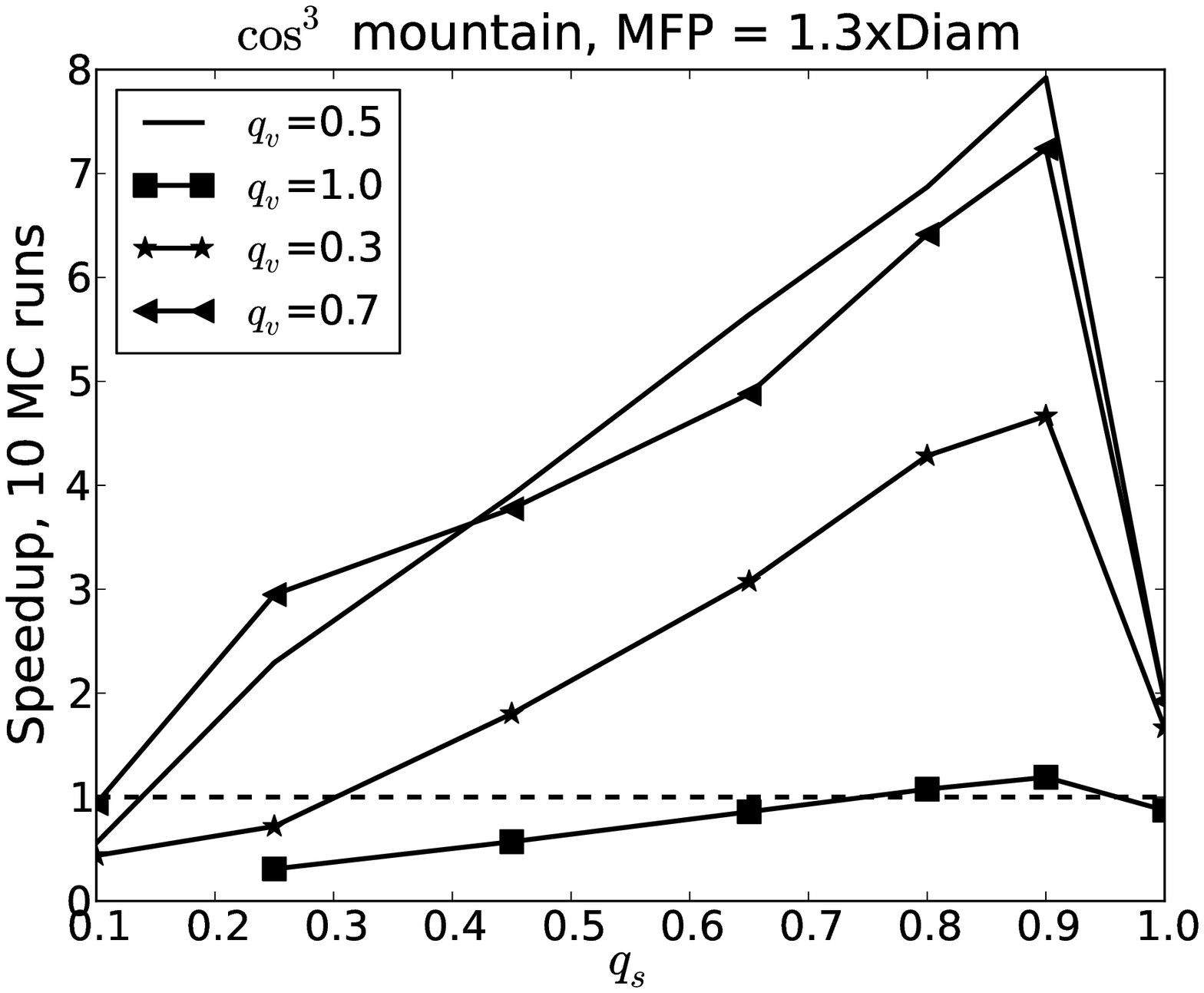}
  \caption{Speedup when using both surface adjoint approximation $\psioh$ (with parameter $q_s$) and heuristic volume scattering (with parameter $q_v$)}
  \label{figure:speedup_with_heuristic}
  \end{center}
\end{figure}

It is important to note that use of adjoint-enhanced surface
scattering, and heuristic volume scattering ($q_s<1$, $q_v<1$)
together is especially helpful.  In fact, even with a small
MFP=1.3xDiameter, we realize good speedup when $q_s=0.9$, $q_v<1$.
Note that if either $q_s=1$ or $q_v=1$ (so no use of $\dPh$ or
heuristic scattering adjustment), speedup almost
disappears. Following the setting described  in section
\ref{subsection:importance_sampling} and the discussion at the end of section \ref{subsection:modularity}, this may be explained as follows. Assuming that $D$ is well approximated by $B_1\cup B_2$ where
$B_1$ and $B_2$ have approximately the same size and $B_1\cap
B_2\approx\emptyset$. Then the maximal variance obtained by choosing
only $B_1$ or only $B_2$ in the importance sampling is roughly a
factor 2 whereas the maximal variance obtained by choosing {\em both}
of them is very large. When the computational cost of the
deterministic solve is taken into account, we obtain the results in
figure \ref{figure:speedup_with_heuristic}, which show that both
boundary and volume scattering need to be accelerated in order to
obtain significant speedups.

\subsection{Optimization of the parameter $q_s$}

We now compare the \emph{a priori} estimates of an optimal $q_s$ (computed using the methodology in section \ref{subsubsection:optimal_parameters}) with the observed optimal values (from figure \ref{figure:speedup_with_heuristic}).  To compute the \emph{a priori} estimates we need estimates for $\Pa[D]$, $\Pa[B\g D]$.  
\begin{figure}[ht!]
    \begin{center}
  \includegraphics[width=0.5\textwidth]{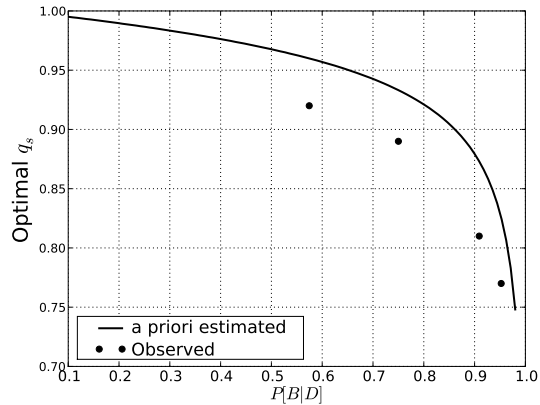}
  \caption{\emph{A priori} Optimal $q_s$ (\emph{a priori} estimate and observed) for $q_s$ is plotted for a variety of $\Pa[B\g D]$.}
  \label{figure:optimal_q_s}
    \end{center}
\end{figure}
First we assume $\Pa[B\g D]\approx 1-\Pa[V]$, where the set $V$ are the photons that had a volume interaction before dying.  Since $\Pa[V]$ is rather large, it is easy to estimate with a very short analog simulation.  We found that $MFP=16\times$Diameter corresponded to $\Pa[V] = 1/21$, and $MFP = (8, 2.7, 1.3)\times$ Diameter corresponded to $\Pa[V] = (1/11, 1/4, 1/2.35)$ respectively.  This gives us estimates of $\Pa[B\g D]$.  Now we need an estimate of $\Pa[D]$.  The true values lie in the range $[0.002325, 0.002484]$.  This can be estimated fairly quickly (to within 10\% RMS error) using 25,000 survival biased shots.  In figure \ref{figure:optimal_q_s} we plot the \emph{a priori} optimal $q_s$ versus $MFP/$Diameter using the above estimate for $\Pa[B\g D]$ and setting $\Pa[D]=0.0024$ (10\% errors in $\Pa[D]$ make very little difference).

\appendix
\section{Appendix}
\label{section:appendix}
This section collects details left out in the preceding sections.
\subsection{Proof of theorem \ref{theorem:analog_unbiased}}
\begin{proof}[Proof of theorem \ref{theorem:analog_unbiased}]
  For a proof of the theorem in the absence of a boundary, we refer
  the reader to \cite{spanier}.  The analog probability density is
  defined in \eqref{eq:dPa}. If $Y:\Omega\to\Rone$ is a random
  variable, then
\begin{align}
  \Expa{Y} &= \sum_{n=1}^\infty\int_{\tau=n} Y\dPa = \sum_{n=1}^\infty\Expa{Y\one_{\tau=n}},
  \label{align:expectation_summation}
\end{align}
where, following the structure in algorithm \ref{alg:analog}, we have
\begin{align*}
  \Expa{Y\one_{\tau=n}} &= \Expa{\Expa{Y\one_{\tau=n}\g Z_{n-1}}} \\
  &= \Expa{\Expa{\Expa{Y\one_{\tau=n}\g Z_{n-1}, Z_{n-2}}\g Z_{n-2}}} \\
  &= \Expa{\cdots\Expa{\Expa{Y\one_{\tau=n}\g Z_{n-1},\cdots, Z_0}\g Z_{n-2},\cdots ,Z_0}\cdots\g Z_0}.
\end{align*}
Using \eqref{align:psio}, \eqref{align:adjoint_first_fundamental_property}, and \eqref{align:expectation_summation}, it will suffice to show
  \begin{align*}
    \Expa{\xia\one_{\tau=n}} &= \ip{\Cstar\gbar}{(KL)^{n-1} s},\quad n=1,2,\dots
  \end{align*}
  First note that (since $g$ is a boundary source extended to be zero off of $\Gp$)
  \begin{align*}
    \Expa{\xia\one_{\tau=n}\g Z_{n-1}} &= \int_\Xbar \frac{\gbar(x_n,V_{n-1})}{\pSa(x_n)}\kCa(Z_{n-1}\to x_n)\pSa(x_n)\dx_n\\
    &= \gbar(x_+(Z_{n-1}),V_{n-1})E_\sigma(X_{n-1}, x_+(Z_{n-1})) 
    = \Cstar\gbar(Z_{n-1}).
  \end{align*}
  So when $n=1$, we have
  \begin{align*}
    \Expa{\xia\one_{\tau=1}} &= 
    \Expa{\Expa{\xia\one_{\tau=1}\g Z_0}} \\
    &= \Expa{\Cstar\gbar (Z_0)}
    =\int_\Zbar s(z_0)\Cstar \gbar(z_0)\dz_0
    = \ip{\Cstar\gbar}{s}_{\Gm}.
  \end{align*}
  Next note that for $m<\tau$,
  \begin{align*}
    \Expa{f(Z_m)\g Z_{m-1}} &= \int_\Zbar f(z)\kTa(Z_{m-1}\to z_m)\dz_m = \Cstar\Sstar f(Z_{m-1}).
  \end{align*}
  So when $n>1$, we have
  \begin{align*}
    \Expa{\xia\one_{\tau=n}}&=\Expa{\Expa{\xia\one_{\tau=n}\g Z_{\tau-1}}}
    = \Expa{\Cstar\gbar(Z_{\tau-1})}\\
    &= \Expa{\Expa{\Cstar\gbar(Z_{\tau-1})\g Z_{\tau-2}}} 
    = \Expa{(\Cstar\Sstar)\Cstar\gbar(Z_{\tau-2})} \\
    &\vdots\\
    &= \Expa{(\Cstar\Sstar)^{n-1}\Cstar\gbar(Z_0)}
    = \int_\Zbar s(z_0) (\Cstar\Sstar)^{n-1}\Cstar\gbar(z_0)\dz_0\\
    &= \ip{s}{(\Cstar\Sstar)^{n-1}\Cstar\gbar}
    = \ip{(KL)^{n-1}s}{\Cstar\gbar}.
  \end{align*}
  This proves the theorem.
\end{proof}
\subsection{Proof of theorem \ref{theorem:asymptotic_convergence}}
\label{subsection:proof_of_asymptotic_convergence}
We note that
\begin{align*}
  \xih&= \gbar(x_\tau,v_{\tau-1})\RNah 
  = \ip{u}{g}(1+\err),\\
  1+\err(\omega) :&= \frac{\ip{\sh}{\psioh}}{\ip{u}{g}}\frac{\gbar(x_\tau,v_{\tau-1})}{\gbarh(x_\tau,v_{\tau-1})}\frac{s(z_0)}{\sh(z_0)}\betaah(x_0,\dots,x_\tau)\gammaah(z_1,\dots,z_{\tau-1}).
\end{align*}
We now bound the coefficient error $\err$.
First, assumptions \ref{assumptions:coefficient_error} (i), (ii) give us
\begin{align*}
  (1-Ch)^{\tau+1}&\leq1+\err \leq (1+Ch)^{\tau+1}.
\end{align*}
Using the binomial theorem, we have (for $x>0$, $m\in\Nat$)
\begin{align*}
  1-mxe^{mx}&\leq(1-x)^m\leq(1+x)^m\leq e^{mx}.
\end{align*}
Therefore
\begin{align*}
  (1-Ch)^{\tau+1}&\leq 1+\err \leq (1+Ch)^{\tau+1},
\end{align*}
and thus
\begin{align*}
  |\err| &\leq hC[\tau+1]e^{hC[\tau+1]}.
\end{align*}
Since we assume $\Pr^h[\tau=n]\leq Ce^{-\rho n}$, we have
\begin{align*}
  \Var{\xih} &= \sum_{n=0}^\infty\int_{\tau=n}(\xih - \ip{u}{g})^2\dPh 
  = \ip{u}{g}^2\sum_{n=0}^\infty \int_{\tau=n}|\err|^2\dPh\\
  &\leq \ip{u}{g}^2 h^2C^2\sum_{n=0}^\infty[\tau+1]^2e^{-(\rho\tau - 2hC[\tau+1])}.
\end{align*}
So that, for $h<\rho/(2C)$ the above series converges and the result
is proved.

\subsection{Proof of some technical results}
\label{sec:prooftechnical}
\begin{proof}[Proof of Lemma \ref{lemma:pnuN_Ck}]
  Clearly $\pnuN$ is $C^{k+2}$ when $|x'-x|>0$, so we may restrict our attention to $|x'-x|<\eps$.  We prove the lemma then for $x'$, $x$ both in an $\eps$ neighborhood of some point.  After possibly shrinking $\eps$, we may assume that in this neighborhood $\pX$ is the graph of a $C^{k+2}$ function $f$.  In other words, with $x=(x_1,x_2)$, and after a rotation and/or translation, this neighborhood is the set $\{(x_1,f(x_1))\st -\eps<x_1<\eps\}$.  

 We then have
 \begin{align*}
   \nu_x &= \frac{(-f'(x_1),1)}{\sqrt{1+(f'(x_1))^2}},\quad x-x' = (x_1-x'_1, f(x_1)-f(x'_1)),
 \end{align*}
 and also 
 \begin{align*}
   f(x'_1) &= f(x_1) + f'(x_1)(x'_1-x_1) + R(x_1,x'_1),\\
   R(x_1,x'_1) :&= \int_{x_1}^{x'_1}\int_{x_1}^s f''(t)\dt\ds = (x'_1-x_1)^2 \int_0^1\int_0^s f''(t(x'_1-x_1))\dt\ds.
 \end{align*}
 We notice that
 \begin{align*}
   R(x_1,x'_1)(x'_1-x_1)^{-j}&\in C^k(\pX\times\pX), \quad j=0,1,2.
 \end{align*}
 
 This is all we need since 
 \begin{align*}
   \nu_x\cdot(x'-x) &= \frac{R(x_1,x'_1)}{\sqrt{1+ (f'(x_1))^2}}, \\
   |x-x'|^2 &= (x_1-x'_1)^2\left[ 1 + \left( f'(x_1) + \frac{R(x_1,x'_1)}{x'_1-x_1} \right)^2 \right],\\
   \pnuN(x,x') :&= \frac{\nu_x\cdot(x-x')}{|x-x'|^2}.
 \end{align*}
\end{proof}

\begin{proof}[Proof of theorem \ref{theorem:kSa_approximation}]
  Our setup so far puts us in the regime of section \ref{subsection:the_deterministic_adjoint_problem} with 
  \begin{align*}
    \Ehsigma &= E_\sigma \equiv1, \quad \gbarh = R\gbar, \quad \alphah = R\alpha,\\
    \Thetasurfh{x_i'}{v}{v_{ij}'} &= \kappa(x_i,v_{ij})\pnuN(x_i,x_j)\frac{|\pX_j|}{|V_{ij}|}, \mbox{ for }x_i'\in\pX_i, v_{ij}'\in V_{ij}.
  \end{align*}

  Assumptions (iii), (iv) are the same above and in assumptions \ref{assumptions:coefficient_error}.  Using assumption (i) above along with proposition \ref{proposition:Q_and_adjiont_approx}, we have assumption \ref{assumptions:coefficient_error} (i).  It remains to prove that assumptions \ref{assumptions:coefficient_error} (ii) is met.  Due to assumption (i) above, it will suffice to show
 \begin{align}
   1-C'h&\leq \frac{|V_{ij}|}{|\pX_j|}\frac{1}{\pnuN(x_i, x_j)} \leq 1+C'h.
   \label{align:V_pX_ratio}
 \end{align}

 Due to strict convexity of $\pX$, $\pnuN$ is bounded below.  Now the differentiability of $\pnuN$ (lemma \ref{lemma:pnuN_Ck}) implies that there exists $C'>0$ such that when $x'\in\pX_j$,
 \begin{align}
   1-C'h &\leq \frac{\pnuN(x_i,x')}{\pnuN(x_i,x_j)} \leq 1+C'h.
   \label{align:pnuN_ratio}
 \end{align}
 Therefore, since
 \begin{align*}
   |V_{ij}| &= \int_{\pX_j}\pnuN(x_i,x')\d\mu(x')= \pnuN(x_i,x_j)\int_{\pX_j}\frac{\pnuN(x_i,x')}{\pnuN(x_i,x_j)}\d\mu(x'),
 \end{align*}
 \eqref{align:pnuN_ratio} now implies \eqref{align:V_pX_ratio} and the proposition is proved.
\end{proof}

\subsection{Parameter choices in numerical simulations}
\label{sec:numvol}

In the simulations performed with $\sigma=0$ (no volume interactions),
we used both a flat surface (so that our domain was
$[-\pi,\pi]\times[2,4]$) and a $\cos^3$ surface (figure
\ref{figure:sai_boundary_volume_interactions}).  We swept $h$, with
$0.002<h<0.2$.  We did not use any heuristic scattering adjustment
($q_v=1.0$).  In all cases
\begin{align*}
  \Thetasurf{x,y}{v}{v'} &= \left\{
  \begin{matrix}
    (\nu_x\cdot v')/2 ,& \nu_x\cdot v' < 0, |x|<2.5\\
    0 ,& \mbox{ otherwise}.
  \end{matrix}
  \right.
\end{align*}
The source was mono-directional $v=-\pi/2$ and given by
\begin{align*}
  s(x,-\pi/2) &= \left\{
  \begin{matrix}
    1\quad& |x|<2.5,\\
    0\quad& |x|\geq 2.5.
  \end{matrix}
  \right.
\end{align*}

\medskip

In the simulations involving volume interactions ($\sigma>0$), we used
a $\cos^3$ type surface.  We computed speedup in a variety of cases.
The mean-free-path MFP$=\sigma^{-1}$ was varied as well as $q_s$, $h$,
and $q_v$.  We swept $0.002<h<0.15$.  In all cases the volume
scattering coefficients were constant with $\sigma_s=2\sigma_a$.  The
volume scattering was given by
\begin{align*}
  \thetavol{x}{v}{v'} &= 1 + (v\cdot v')^2.
\end{align*}
The other coefficients were chosen to have features (in this case oscillations) on a scale coarser than the fine values of $h$, and finer than the coarse values.
The surface scattering coefficient was given (on the mountain) by
\begin{align*}
  \Thetasurf{(x,y)}{v}{v'} &= (\nu_x\cdot v')\left\{
  \begin{matrix}
    0& x>2.5,\\
    0.75 + 0.25\sin(2\pi x/0.05)& 1<x<2.5,\\
    0.35 + 0.25\sin(2\pi x/0.05)& -2.5<x<1,
  \end{matrix}
  \right.
\end{align*}
when $\nu_x\cdot v'>0$, and $0$ when $\nu_x\cdot v'\leq0$.
Off the mountain there was no scattering (perfectly absorbing).
The source was mono-directional $v=-\pi/2$ and given by
\begin{align*}
  s(x,-\pi/2) &= \left\{
  \begin{matrix}
    1 + 0.25\sin(2\pi x/0.07)\quad & |x|<2.5,\\
    0& |x|\geq2.5.
  \end{matrix}
  \right. 
\end{align*}

\section*{Acknowledgment}

The authors would like to thank Anthony Davis for many useful discussions.
This work was supported in part by DOE grant DE-FG52-08NA28779
and NSF grant DMS-0804696, and NSF Research Training Grant DMS-060DMS-0602235. Any opinions, findings, and conclusions or recommendations expressed in this material are those of the author(s) and do not necessarily reflect the views of the National Science Foundation  


\end{document}